\newtheorem{lemma}{Lemma}[section]
\newtheorem{theorem}{Theorem}[section]
\newtheorem*{theorem*}{Theorem}
\newtheorem*{lemma*}{Lemma}
\newtheorem{example}{Example}[section]
\newtheorem{construction}{Construction}[section]
\newtheorem{remark}{Remark}[section]
\newcommand{\A}{\mathcal {A}}
\newcommand{\B}{\mathcal {B}}
\newcommand{\mm}{{~\text{mod}^+~}}
\newcommand{\Q}{\mathcal {Q}}
\renewcommand{\P}{\mathcal {P}}
\renewcommand{\S}{\mathcal {S}}
\newcommand{\bbF}{{\mathbb F}}
\begin{document}

\title{Sparse and Balanced MDS Codes over Small Fields}
\author{Tingting~Chen,~and~Xiande~Zhang
        \thanks{T. Chen ({\tt ttchenxu@mail.ustc.edu.cn}) is with
        School of Cyber Security, University of Science and Technology of China, Hefei, 230026, Anhui, China.}

\thanks{X. Zhang ({\tt drzhangx@ustc.edu.cn}) is with School of Mathematical Sciences,
University of Science and Technology of China, Hefei, 230026, Anhui, China.}}

%
\maketitle

\begin{abstract}
Maximum Distance Separable (MDS) codes with a sparse and balanced generator matrix are appealing  in distributed storage systems for balancing and minimizing the computational load. Such codes have been constructed via Reed-Solomon codes over large fields. In this paper, we focus on small fields. We prove that there exists an $[n,k]_q$ MDS code that has a sparse and balanced generator matrix for any $q\geq n$ provided that $n\leq 2k$,
 by designing several algorithms with complexity running in polynomial time in $k$ and $n$.
\end{abstract}

\begin{IEEEkeywords}
\boldmath MDS codes, Reed-Solomon codes, finite fields, constrained generator matrices.
\end{IEEEkeywords}

\section{Introduction}
\IEEEPARstart{M}{DS} codes with constrained generator matrices have been attracting much attention recently due to their applications in weakly secure cooperative data exchange \cite{yan2013algorithms,yan2014weakly,li2017cooperative}, multiple access networks \cite{halbawi2014distributed,dau2015simple}, wireless
sensor networks \cite{Dau2013}, and so on. The relations among them are well explained in \cite{DauSY14,YildizH19}.  An interesting problem of this topic is to construct an $[n,k]_q$ MDS code with a sparse and balanced generator matrix (SBGM) $G$, where `sparse' means that each row of $G$ has the least possible number of nonzeros, i.e., $n-k+1$ nonzeros, and `balanced' means that the numbers of nonzeros in any two columns differ by at most one, i.e., $n-\lceil\frac{k(k-1)}{n}\rceil$ or  $n-\lfloor\frac{k(k-1)}{n}\rfloor$. This problem was first considered in \cite{Dau2013}. Such a matrix gives us some benefits during the encoding process \cite{halbawi2018sparse,HalbawiLH16}. On the one hand, since the time required to compute each code symbol is a function of the number of nonzeros in a specified column of $G$, each code symbol is computed in roughly the same amount of time due to the balanced property of  $G$. This ensures that the computational load is balanced, which is required in  scenarios such as the storage system. On the other hand, when $G$ is sparse, then updating a single message symbol impacts exactly $n-k+1$ storage nodes in the storage system.

In the recent few years, progress has been reported on the above problem. In \cite{Dau2013}, it was shown that there always exists an MDS code with an SBGM over any finite field of size $q>\binom{n-1}{k-1}$ through a probabilistic argument. The authors in \cite{HalbawiLH16,halbawi2016balanced} constructed an $[n, k]_q$ cyclic Reed-Solomon code that has an SBGM for any prime power $q=n+1$ and any $k$ such that $1\leq k\leq n$. Song and Cai \cite{SongC18} further extended their results by proving that for any positive integers $n$ and $k$ such that $1\leq k\leq n$, there exists an $[n, k]_q$ generalized Reed-Solomon code that has an SBGM over any finite field $\bbF_q$ of size $q\geq n+\lceil\frac{k(k-1)}{n}\rceil$. But there is still a gap between the code length and the field size when $k\geq 3$. It is natural to ask  whether there exists an $[n,k]_q$ MDS code with an SBGM over a smaller field with $q<n+\lceil\frac{k(k-1)}{n}\rceil$. Motivated by this problem, we focus on constructions of $[n,k]_q$ MDS codes with an SBGM for all $q\geq n$ and $k\geq 3$ in this paper.

\subsection{Related Work}

One of the challenging problems referring to MDS codes is the well known \emph{MDS conjecture}, which states that there exists an $[n,k]_q$ MDS code if and only if $n\leq q+1$ for all $q$ and $2\leq k\leq q-1$, except when $q$ is even and $k\in\{3,q-1\}$, in which case $n\leq q+2$. The sufficiency of the MDS conjecture has been proved via the use of (extended) Generalized Reed-Solomon codes \cite{macwilliams1977theory}.

 The problem of MDS codes with support constrained generator matrices is asking whether an MDS code exists with a prescribed zero patterns in the generator matrix. This problem has been studied in \cite{DauSY14,YildizH19,GreavesS19} for MDS codes with Hamming distance and in \cite{Yildiz2020IEEE,Yildiz2020ISIT} for MDS codes with rank metric (Gabidulin codes).
 Let $G$ be the $k\times n$ generator matrix of an $[n,k]_q$ MDS code with $k\leq n$. Define the \emph{zero pattern} of $G$ as a set system $S_1,\ldots,S_k\subset \{1,2,\ldots,n\}$, where $S_i=\{j\leq n: G_{i,j}=0\}$. The necessary condition of this set
system is known as the {\it MDS condition}: $|I|+|\cap_{i\in I}S_i|\leq k$ for any nonempty $I\subseteq \{1,2,\ldots,k\}$. It is conjectured that the MDS condition is sufficient for the existence of MDS codes whose generator matrices have the given zero pattern when $q\geq n+k-1$ \cite{DauSY14}.
This conjecture is known as the \emph{GM-MDS Conjecture}, and attracts a lot of interest, see \cite{effros2015between,heidarzadeh2017algebraic,yildiz2018further}. Recently the GM-MDS Conjecture was proved to be true by Lovett \cite{lovett2018mds} and independently by Yildiz and Hassibi \cite{YildizH19}, which we restated as a theorem as follows.

\begin{theorem}[{\cite[GM-MDS Theorem]{DauSY14,YildizH19,lovett2018mds}}]
  Let $\mathcal{S}=\{S_1,\ldots,S_k\}$ be a set system where $S_i\subseteq \{1,2,\ldots,n\}$, $1\leq i \leq k$. Then for  $q\geq n+k-1$, there exists an $[n,k]_q$ MDS code with a generator matrix $G$ over $\bbF_q$ such that $G_{i,j}=0$ whenever $j\in S_i$, if and only if $\mathcal{S}$ satisfies the MDS condition.
\end{theorem}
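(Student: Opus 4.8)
The argument has two parts: the necessity of the MDS condition, which is routine, and its sufficiency, which is the substance of the GM-MDS theorem; for the latter I would follow the strategy of Lovett and of Yildiz--Hassibi, reducing everything to a statement about linear independence of polynomials. Write $[m]=\{1,\dots,m\}$. \emph{Necessity}: let $G$ be a $k\times n$ generator matrix of an $[n,k]_q$ MDS code with $G_{i,j}=0$ for all $j\in S_i$. Every row of $G$ is a nonzero codeword, hence of weight at least $n-k+1$, so $|S_i|\le k-1$. Fix a nonempty $I\subseteq[k]$, put $T=\bigcap_{i\in I}S_i$ (so $|T|\le k-1\le n$), and choose a $k$-subset $J\subseteq[n]$ with $T\subseteq J$. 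The $k\times k$ submatrix $G_J$ is invertible because the code is MDS, but its rows indexed by $I$ vanish on the $|T|$ columns of $J$ lying in $T$, hence span a subspace of dimension at most $|J|-|T|=k-|T|$; were $|I|>k-|T|$ they would be linearly dependent, a contradiction. Thus $|I|+|T|\le k$.

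For sufficiency, assume the MDS condition. The plan is first to apply the extension lemma of \cite{DauSY14} — proved by repeatedly adding to any under-full $S_j$ an element lying outside $\bigcup\bigl\{\bigcap_{i\in I\setminus\{j\}}S_i : I\ni j\text{ tight}\bigr\}$, whose existence follows from a Hall-type count — to enlarge each $S_i$ to a set $\bar S_i\supseteq S_i$ with $|\bar S_i|=k-1$, still satisfying the MDS condition. It then suffices to find \emph{distinct} $\alpha_1,\dots,\alpha_n\in\bbF_q$ such that the monic degree-$(k-1)$ polynomials $f_i(x):=\prod_{j\in\bar S_i}(x-\alpha_j)$, $i\in[k]$, are $\bbF_q$-linearly independent. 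Indeed, writing $f_i(x)=\sum_{m=0}^{k-1}c_{i,m}x^m$ and $\mathcal C=(c_{i,m})$, the matrix $G:=\mathcal C\,V$ with $V=(\alpha_j^{\,m})_{0\le m\le k-1,\,1\le j\le n}$ has every $k\times k$ minor equal to $\det\mathcal C$ times a (nonzero) Vandermonde minor of $V$, so $G$ generates an MDS code; and $G_{i,j}=f_i(\alpha_j)$ vanishes precisely when $j\in\bar S_i$, in particular for all $j\in S_i$.

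The core step is the linear independence. Replace the $\alpha_j$ by indeterminates $z_j$, set $F_i(x):=\prod_{j\in\bar S_i}(x-z_j)$, let $\mathcal C(z)$ be its $k\times k$ coefficient matrix, and $D(z):=\det\mathcal C(z)$; I would show that the MDS condition on $\{\bar S_1,\dots,\bar S_k\}$ forces $D\not\equiv0$ in $\bbF_q[z_1,\dots,z_n]$, i.e.\ that $F_1,\dots,F_k$ are linearly independent over $\bbF_q(z_1,\dots,z_n)$. I would prove this by induction on $k$: the case $k=1$ is immediate, and for the inductive step one fixes an element $w\in[n]$ belonging to at least one but not all of the $\bar S_i$ (possible since the $\bar S_i$ are pairwise distinct, by the $|I|=2$ instance of the MDS condition), partitions $[k]=A\sqcup B$ according to membership of $w$, and uses that $(x-z_w)$ divides $F_i$ for $i\in A$ to carry out a row reduction / Laplace expansion of $\mathcal C(z)$ that rewrites $D$, up to a nonzero factor, through the determinants of one or two strictly smaller set systems on $[n]\setminus\{w\}$ — obtained by deleting $w$ from the relevant sets and, in one branch, discarding one set. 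That these derived systems again satisfy the MDS condition follows from elementary set inequalities, so the inductive hypothesis applies. Pinning down this reduction precisely — the right choice of $w$ and the verification that each derived set system still obeys the MDS condition — is the main obstacle and is the technical heart of Lovett's argument.

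Finally I would descend from $D\not\equiv0$ to a concrete evaluation with distinct coordinates. Each entry of $\mathcal C(z)$ is, up to sign, an elementary symmetric polynomial in $\{z_j:j\in\bar S_i\}$, hence multilinear, so $\deg_{z_j}D\le|\{i:j\in\bar S_i\}|\le k-1$, the last bound being the $I=[k]$ instance of the MDS condition (no index can lie in all $k$ sets). Therefore $\tilde D(z):=D(z)\prod_{1\le a<b\le n}(z_a-z_b)$ is a nonzero polynomial with $\deg_{z_j}\tilde D\le(k-1)+(n-1)=n+k-2<q$, using $q\ge n+k-1$. Since a nonzero polynomial whose degree in each variable is smaller than $q$ cannot vanish identically on $\bbF_q^{\,n}$ (a one-line induction on $n$, or the Combinatorial Nullstellensatz), there is $\alpha\in\bbF_q^{\,n}$ with $\tilde D(\alpha)\neq0$; then $\alpha_1,\dots,\alpha_n$ are distinct and $\det\mathcal C(\alpha)=D(\alpha)\neq0$, which is exactly the linear independence of $f_1,\dots,f_k$ required above, completing the construction.
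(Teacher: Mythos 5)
This statement is not proved in the paper at all: it is the GM--MDS Theorem, imported verbatim from the cited references (Dau--Song--Yuen, Yildiz--Hassibi, Lovett), so there is no in-paper proof to compare against. Judged on its own merits, your proposal has exactly the right architecture, and it matches how the theorem is actually proved in the literature: the necessity argument is complete and correct; the reduction via the extension lemma of Dau--Song--Yuen to a $(k-1)$-uniform system is the right move; the reformulation as non-vanishing of $D(z)=\det\mathcal C(z)$ over $\bbF_q(z_1,\dots,z_n)$ is standard; and the final descent is correct --- the bound $\deg_{z_j}D\le k-1$ via the $I=\{i: j\in\bar S_i\}$ instance of the MDS condition, combined with the Vandermonde factor, gives individual degrees at most $n+k-2<q$, and the Schwartz--Zippel/Nullstellensatz-type lemma then produces distinct evaluation points. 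This is precisely where the hypothesis $q\ge n+k-1$ enters, and you have placed it correctly.

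The genuine gap is the one you yourself flag: the core claim that the MDS condition forces $D\not\equiv 0$ is only gestured at, and that claim \emph{is} the GM--MDS theorem --- everything else in your write-up was already known before Lovett and Yildiz--Hassibi. Your sketched induction (pick $w$ in some but not all $\bar S_i$, split $[k]=A\sqcup B$ by membership of $w$, reduce to one or two smaller systems on $[n]\setminus\{w\}$) does not work as stated: for an arbitrary admissible $w$ the derived set systems need not satisfy the MDS condition, and the Laplace/row-reduction step does not in general express $D$ through the determinants of smaller systems of the same shape. Lovett's actual argument requires first passing to a \emph{maximal} set system (one to which no element can be added without violating the MDS condition), exploiting the structure of tight sets ($|I|+|\cap_{i\in I}S_i|=k$) to find a specific decomposition, and handling a nontrivial case analysis; Yildiz--Hassibi's induction is similarly delicate. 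Since you explicitly leave "the right choice of $w$ and the verification that each derived set system still obeys the MDS condition" open, the proposal is a correct scaffold around a missing heart rather than a proof.
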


Further results on the existence of MDS codes with slightly stronger support constraint than the MDS condition on the generator matrices but with field size $q\geq n$ or $q\geq n+1$ are considered in \cite{GreavesS19}. However, their results can not be used to give sparse and balanced MDS codes.
The authors in \cite{GreavesS19} gave two constructions of some special classes of $[n,k]_q$ Reed-Solomon codes  whose generator matrices have constrained support. One of them is that over any finite field $\bbF_q$ with  $q\geq n$,  there exists an $[n,k]_q$ Reed-Solomon code if the zero pattern $\mathcal{S}=\{S_1,\ldots,S_k\}$ of its generator matrix satisfies the MDS condition, and further $|\cap_{j=1}^iS_j|=k-i$ for all $1\leq i \leq k$.  Notice that when $i=k-1$, it is required that $|\cap_{j=1}^{k-1}S_j|=1$. This means there is at least one column of the generator matrix $G$ containing $k-1$ zeros.
  So only when $\lceil\frac{k(k-1)}{n}\rceil\geq k-1$, that is, $n=k$ or $k+1$, $G$  can be sparse and balanced.
The second construction  \cite{GreavesS19} they gave is that over any finite field $\bbF_q$ with  $q\geq n+1$, there exists an $[n,k]_q$ Reed-Solomon code if the zero pattern $\mathcal{S}=\{S_1,\ldots,S_k\}$ of its generator matrix satisfies $|S_i|\leq i-1$ for all $i=1,\ldots,k$. Notice that when $i\leq k-1$, $|S_i|\leq k-2$, the generator matrix $G$ obtained from this construction is not sparse.




\subsection{Our Contribution}

In this paper, we construct an $[n,k]_q$ MDS code with an SBGM $G$ over any finite field of size $q\geq n$ when $3\leq k\leq n\leq 2k$. It suffices to find a $k\times n$ matrix $G$ over $\mathbb{F}_q$ with a sparse and balanced zero pattern, such that all minors of $G$ have full rank.  We first give a sufficient condition for the existence of a sparse generator matrix $G$ described by the set system $\mathcal{S}=\{S_1,\ldots,S_k\}$, see Theorem~\ref{thm_MDS}, which extends \cite[Theorem~II.5]{GreavesS19}. Then we show that  the set system $\mathcal{S}$ satisfying the sufficient condition in Theorem~\ref{thm_MDS} is balanced only if $n\leq 2k$. Finally, the binary matrix corresponding to $\mathcal{S}$ is proved to exist whenever $n\leq 2k$, by several algorithms.  We state our main result in the following theorem.
\begin{theorem}\label{thm_main} For any integer $k\geq 3$, let $n\leq 2k$ if $k$ is even, and $n\leq 2k-1$ if $k$ is odd.
  For any finite field $\mathbb{F}_q$ with $q \geq n$, there exists an $[n,k]_q$ MDS code, whose generator matrix is sparse and balanced.
\end{theorem}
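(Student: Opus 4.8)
\emph{Proof plan.} The plan is to decouple the algebra from the combinatorics. By Theorem~\ref{thm_MDS}, for \emph{every} $\mathbb{F}_q$ with $q\ge n$ there is an $[n,k]_q$ MDS code whose $k\times n$ generator matrix $G$ realizes a prescribed zero pattern $\mathcal{S}=\{S_1,\dots,S_k\}$, $S_i\subseteq\{1,\dots,n\}$, provided $\mathcal{S}$ meets the MDS condition together with the additional structural hypothesis of that theorem (the ``staircase'' condition extending \cite[Theorem~II.5]{GreavesS19}). Hence it suffices to produce, for every admissible pair $(n,k)$, one set system $\mathcal{S}$ such that: (i) $|S_i|=k-1$ for all $i$, so every row of $G$ has the minimum number $n-k+1$ of nonzeros (\emph{sparse}); (ii) the $0/1$ incidence matrix $B$ with $B_{i,j}=1\iff j\in S_i$ has all column sums in $\{\lfloor k(k-1)/n\rfloor,\lceil k(k-1)/n\rceil\}$, equivalently the columns of $G$ have balanced supports (\emph{balanced}); and (iii) $\mathcal{S}$ satisfies the staircase condition, so Theorem~\ref{thm_MDS} applies. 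Feeding such a $\mathcal{S}$ into Theorem~\ref{thm_MDS} then yields the desired code at once, and the whole problem becomes: for which $(n,k)$ does a sparse, balanced $B$ respecting the staircase $0$-pattern exist, and can one be built efficiently?

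Next I would explain why the range in Theorem~\ref{thm_main} is the right one for this approach. The staircase condition organizes the supports $S_1,\dots,S_k$ into a small number (essentially two) of nested subfamilies; in any such arrangement some columns of $B$ are forced to lie in about $k/2$ of the rows, and more precisely a double-counting of $\sum_i|S_i|=k(k-1)$ against the column sums shows that a sparse $\mathcal{S}$ meeting the staircase condition can be balanced only if $n\le 2k$. When $k$ is odd and $n=2k$ there is an extra integrality obstruction: $k(k-1)/n=(k-1)/2\in\mathbb{Z}$, so balance would force \emph{every} column of $B$ to have weight exactly $(k-1)/2$, which cannot accommodate the heavier column the staircase condition produces; this is why the odd case is restricted to $n\le 2k-1$.

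The heart of the proof is the explicit construction of $B$, carried out by the algorithms. I would fix the column profile first: with $r=k(k-1)-n\lfloor k(k-1)/n\rfloor$, designate $r$ ``heavy'' columns (weight $\lceil k(k-1)/n\rceil$) and $n-r$ ``light'' ones (weight $\lfloor k(k-1)/n\rfloor$), grouped conveniently. I would then order the rows so that the staircase condition appears as partial intersections that shrink one element at a time inside each nested subfamily, pair consecutive rows $2t-1,2t$ (leaving one unpaired row when $k$ is odd), hand each pair a contiguous block of columns to ``cover'' with its two supports, and translate these blocks along $\{1,\dots,n\}$ so that (a) every column accumulates exactly its prescribed weight, and (b) successive blocks overlap in the controlled way needed to keep the intersections nested. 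Since the number of leftover rows depends on the parity of $k$, the final block has a different shape for even and odd $k$, which accounts for the use of two (or more) algorithms; in each case one verifies (i)--(iii) directly and checks that the procedure terminates in $\mathrm{poly}(n,k)$ time. Combining this with Theorem~\ref{thm_MDS} proves Theorem~\ref{thm_main}.

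The step I expect to be the real obstacle is the simultaneous fulfilment of exact column balance and the staircase/MDS intersection constraints. Exact balance is a rigid prescription of the column margins of a $0/1$ matrix, while the staircase condition forbids certain entries from being $0$; a naive greedy or round-robin fill satisfies one requirement and breaks the other, and the tension is sharpest at the boundary columns and, for odd $k$, in the block containing the unpaired row. Making the column blocks line up so that both hold --- and confirming that this is possible exactly up to $n=2k$ for even $k$ and $n=2k-1$ for odd $k$ --- is where the case analysis and essentially all of the work lie; the field size re-enters only through $q\ge n$, which Theorem~\ref{thm_MDS} needs merely so that $n$ distinct evaluation points are available.
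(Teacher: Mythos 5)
Your overall strategy is exactly the paper's: reduce everything to combinatorics via Theorem~\ref{thm_MDS}, so that it suffices to build a $(k-1)$-uniform set system $\mathcal{S}$ that is simultaneously balanced and \emph{good}, prove that this forces $n\le 2k$ (with $k$ even when $n=2k$), and then construct $\mathcal{S}$ by explicit polynomial-time procedures. Your necessity argument is also essentially the paper's Lemma~\ref{ivthm1}: separability at some $i$ forces a column with at least $\max(i,k-i)\ge\lceil k/2\rceil$ zeros, which collides with the balanced bound $\lceil k(k-1)/n\rceil$ once $n\ge 2k$ unless $k$ is even and $n=2k$ exactly. So the framing is right.

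The genuine gap is that the construction --- which you yourself identify as ``where essentially all of the work lies'' --- is never carried out, and the sketch you give in its place is not obviously workable. Two specific problems. First, the good condition is not a single separability statement but a \emph{recursive} one: $\mathcal{S}$ must be separable, each residual system must again be non-intersecting and separable, and so on down to empty leaves of a binary tree. Your plan of pairing rows $2t-1,2t$ and assigning each pair a translated contiguous block of columns is not shown to produce this nested tower of exact-size intersections ($|S_1\cap\cdots\cap S_i|=k-i$ \emph{and} $|S_{i+1}\cap\cdots\cap S_k|=i$, at every level); generic block translations will break the exactness of these intersections as soon as you perturb them to fix column sums. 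Second, you give no mechanism for resolving the tension you correctly flag between exact column margins and the intersection constraints. The paper's resolution is a dedicated lemma (Lemma~\ref{thm_base}) that characterizes precisely which within-row relocations of zeros preserve goodness of a circulant-type block; it then starts from one or two stacked circulants (automatically sparse and good) and applies only those permitted moves, with separate algorithms for the ranges $t=2k-n\in\{1,2\}$, $t<\tfrac{1+\sqrt{8k}}{2}$, $t\le\tfrac{k}{2}+2$, $t\le k-\sqrt{k}$, and $t\le k$, each requiring a counting argument that the zeros to be moved in and out actually match. Without an analogue of that invariant-preserving move set and the accompanying feasibility counts, your proposal establishes the reduction and the necessity direction but not the existence claim of Theorem~\ref{thm_main}.
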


The proof of Theorem~\ref{thm_main} is completed by designing several algorithms that have complexity running in polynomial time in $k$ and $n$. These algorithms output a sparse and balanced binary matrix, which is the complement of the incidence matrix of a set system $\mathcal{S}$ satisfying  Theorem~\ref{thm_MDS}.

\subsection{Organization}

This paper is organized as follows. In Section~\ref{pre}, necessary notations and definitions are given first, and then a sufficient condition on the zero pattern of a sparse generator matrix of an $[n,k]_q$ MDS code  with $q\geq n$ is provided. Details of constructions of  balanced zero patterns satisfying the sufficient condition are given in Section~\ref{conca}, which rely on several key operations on matrices. Finally, a brief conclusion is given in Section~\ref{con}.

\section{Support Constraints of MDS Codes}\label{pre}
We start by introducing some basic notations and definitions, and then proceed to the sufficient condition on the zero pattern of a sparse generator matrix of an $[n,k]_q$ MDS code.

\subsection{Notations and Definitions}

For any integers $a<b$, let $[a,b]$ denote the set of integers $\{a,a+1,\ldots,b\}$. We further abbreviate $[1,b]$ as $[b]$. Let $a\mm n$ denote the unique $r\in[n]$ such that $n$ divides $a-r$, and let $[a,b]\mm n$ denote the set $\{x\mm n: x\in[a,b]\}$. 

 We use $\bbF_q$ to denote the finite field with $q$ elements. A linear code $\mathcal{C}$ over $\bbF_q$ of length $n$, dimension $k$ and minimum distance $d$ is denoted by $[n,k,d]_q$. When $\mathcal{C}$ is an MDS code, i.e., $d=n-k+1$, we sometimes omit $d$ and write $[n,k]_q$.
 A generator matrix $G$ of  $\mathcal{C}$ is said to be \emph{sparse and balanced} \cite{SongC18} if $G$ satisfies the following two conditions:
 \begin{itemize}
 \item[(1)]
Sparse condition: the weight of each row of $G$ is
exactly $n-k + 1$;
\item[(2)] Balanced condition: the weight of each column of $G$ is
either $\lceil\frac{k(n-k+1)}{n}\rceil$ or  $\lfloor\frac{k(n-k+1)}{n}\rfloor$.
\end{itemize}
An MDS code that has a sparse and balanced generator matrix
(SBGM) is simply called a sparse and balanced MDS code. In this paper, we focus on constructions of sparse and balanced Reed-Solomon codes. An $[n,k]_q$ {\it Reed-Solomon} (RS) code is a special MDS code, which is given by $\{(f(a_1),\ldots,(f(a_n)):f\in\bbF_q[x], \text{deg}(f)<k\}$, where the evaluation points $a_1,\ldots,a_n\in\bbF_q$ are all distinct.

 Let $\P$ be a  sequence of $k$ polynomials  $f_1,f_2,\ldots,f_k$ in $\bbF_q[x]$ such that $\deg(f_i)\leq t-1$, then the coefficient matrix $C(\P)$  of $\P$ is a $k\times t$ matrix with the $(i,j)$th entry being the coefficient of $x^{t-j}$ in $f_i$, that is $[x^{t-j}]f_i$. If $\P$ consists of only one polynomial $f$, we simply write $C(f)$ as the row vector recording all coefficients of $f$.

Given a set system $\mathcal{S}=\{S_1,S_2,\ldots,S_k\}$  with $S_i\subset [n]$ and $|S_i|\leq k-1$ for each $i\in[k]$.  Let $a_1,\ldots,a_n$ be any fixed $n$ distinct elements of the field $\mathbb{F}_q$. Define
 $P_{S_i}(x)\triangleq\prod_{j\in S_i}(x-a_j)=p_{i,0}x^{k-1}+p_{i,1}x^{k-2}+\cdots+p_{i,k-1}\in\bbF_q[x]$, for all $i\in[k]$. In the rest of this paper, we denote $\P$  the sequence $P_{S_1},P_{S_2},\ldots,P_{S_k}$, then $C(\P)=(p_{i,j})_{k\times k}$ for $i\in[k]$ and $j\in [0,k-1]$. Let $G=(g_{i,j})$ be the $k\times n$ matrix over $\mathbb{F}_q$ with $g_{i,j}=P_{S_i}(a_j)$ for $i\in [k]$ and $j\in [n]$. Then
   \begin{align*}
    G&=\begin{bmatrix}
      p_{1,0}& p_{1,1}&\cdots&p_{1,k-1}\\
      p_{2,0}& p_{2,1}&\cdots&p_{2,k-1}\\
      p_{k,0}& p_{k,1}&\cdots&p_{k,k-1}
    \end{bmatrix}
    \begin{bmatrix}
       a_1^{k-1}&a_2^{k-1} &\cdots& a_n^{k-1}\\
      a_1^{k-2}&a_2^{k-2} &\cdots& a_n^{k-2}\\
      a_1^0&a_2^0 &\cdots& a_n^0\\
    \end{bmatrix}\\
    &=C(\P)\cdot V,
  \end{align*}
  where $V$ is the Vandermonde matrix.  It is easy to check that if $P_{S_1},P_{S_2},\ldots,P_{S_k}$ are linearly independent over $\bbF_q$, then $\det (C(\P))\neq 0$, and hence any $k$ columns of $G$ are linearly independent, so $G$ can be seen as a generator matrix of an $[n,k]_q$ RS code with the evaluation points $a_1,\ldots,a_n$. In other words, to construct an $[n,k]_q$ MDS code, one would like to construct a set system $\mathcal{S}=\{S_1,S_2,\ldots,S_k\}$, such that the polynomials $P_{S_1},P_{S_2},\ldots,P_{S_k}$ defined by $\mathcal{S}$ are linearly independent.


Let $M_{\mathcal{S}}=(m_{i,j})$ be the $k\times n$ binary matrix with $m_{i,j}=0$ if and only if $j\in S_i$.  Then $m_{i,j}=0$ if and only if $g_{i,j}=0$, so we  call $M_{\mathcal{S}}$ the {\it complementary support matrix} of $G$, or of $\S$.
Given a generator matrix $G$ of an $[n,k]_q$ MDS code, we can determine its complementary support matrix, and then obtain a set system $\mathcal{S}=\{S_1,S_2,\ldots,S_k\}$ with $S_i=\{j\in [n]:g_{i,j}=0\}\subset [n]$ for each $i\in [k]$.  The size of each $S_i$ is at most $k-1$ by the minimum distance $d=n-k+1$, and $|S_i|=k-1$ for all $i$ if $G$ is sparse.

Next, we show that  a set system $\mathcal{S}=\{S_1,S_2,\ldots,S_k\}$ with certain properties will produce $k$ linearly independent polynomials $\P$ over  $\mathbb{F}_q$ with $q\geq n$, that is $\det (C(\P))\neq 0$, and consequently a generator matrix $G$ for an $[n,k]_q$ RS code.

\subsection{Support Constraints of Sparse Codes}


Assume that $n\geq k>1$ in this section.
  Let $\mathcal{S}=\{S_1,S_2,\ldots,S_k\}$ be a $(k-1)$-\emph{uniform} set system over $[n]$, that is $|S_i|=k-1$ for each $i\in[k]$. If there exists $i\in[k-1]$, such that $|S_1\cap S_2\cap\cdots\cap S_i|=k-i$ and $|S_{i+1}\cap S_{i+2}\cap\cdots\cap S_k|=i$, then we call $\mathcal{S}$ is \emph{separable at $i$}. If $\cap_{i\in[k]}S_i =\emptyset$, we say $\mathcal{S}$ is {\it non-intersecting}.


  Given a non-intersecting set system $\mathcal{S}=\{S_1,S_2,\ldots,S_k\}$ which is separable at some $i\in [k-1]$, assume that $A=S_1\cap S_2\cap\cdots\cap S_i$ is of size $k-i$ and $B=S_{i+1}\cap S_{i+2}\cap\cdots\cap S_k$ is of size $i$. By the non-intersecting property, $A\cap B=\emptyset$. Denote $S_j'=S_j\setminus A$ for $j\in[i]$, and $S_j'=S_j\setminus B$ for $j\in[i+1,k]$.  We say $\A=\{S_1',\ldots,S_i'\}$ and $\B=\{S_{i+1}',\ldots,S_k'\}$ are  the two \emph{residual} set systems of $\mathcal{S}$ with index $i$. Note that $\A$ is $(i-1)$-uniform and $\B$ is $(k-i-1)$-uniform. In particular, when $i=1$ or $i=k-1$, then $\A$ or $\B$ will degenerate into $\{\emptyset\}$. Let $\P_1$ be the sequence of polynomials $P_{S_1'},P_{S_2'},\ldots,P_{S_i'}$ defined by $\A$, and $\P_2$ be the sequence  $P_{S_{i+1}'},\ldots,P_{S_k'}$ defined by $\B$. Here, if the set $S'=\emptyset$, we simply define $P_{S'}(x)=1$. Denote $f_0=\prod_{u\in A}(x-a_{u})$ and $g_0=\prod_{v\in B}(x-a_{v})$. Finally, let $\Q_1$ be the sequence $x^{i-1}f_0$, $x^{i-2}f_0$, $\ldots$, $f_0$ and $\Q_2$ be the sequence $x^{k-i-1}g_0$, $x^{k-i-2}g_0$, $\ldots$, $g_0$. Note that each polynomial in $\Q_1$ and $\Q_2$ has degree at most $k-1$. Under these notations, we give  the following lemmas, which  generalize \cite[Lemma~II.2]{GreavesS19}. Proofs of Lemmas~\ref{SBGM1} and \ref{SBGM2} are given in Appendix.

 %



%
%

\begin{lemma}\label{SBGM1}
  Suppose that the $(k-1)$-uniform set system $\mathcal{S}=\{S_1,S_2,\ldots,S_k\}$  is non-intersecting and separable at some $i\in[k-1]$. Then
  \begin{gather*}
    C(\P)=\begin{bmatrix}
      C(\P_1) & 0\\
      0 & C(\P_2)
    \end{bmatrix}
    \begin{bmatrix}
      C(\Q_1)\\
      C(\Q_2)
    \end{bmatrix}.
  \end{gather*}
\end{lemma}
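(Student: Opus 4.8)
The plan is to exploit the factorizations of the polynomials $P_{S_j}$ that are forced by the separability of $\S$, and then to read the resulting polynomial identities off coefficient-by-coefficient, recognizing them as a single matrix product.

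First I would record the basic factorization. For $j\in[i]$ we have $A\subseteq S_j$, so $S_j=A\sqcup S_j'$ is a disjoint union and hence
\[
  P_{S_j}(x)=\prod_{\ell\in S_j}(x-a_\ell)=\Bigl(\prod_{u\in A}(x-a_u)\Bigr)\Bigl(\prod_{\ell\in S_j'}(x-a_\ell)\Bigr)=f_0(x)\,P_{S_j'}(x),
\]
and symmetrically $P_{S_j}(x)=g_0(x)\,P_{S_j'}(x)$ for $j\in[i+1,k]$. When $S_j'=\emptyset$ these read $P_{S_j}=f_0$ or $P_{S_j}=g_0$, consistent with the convention $P_{\emptyset}=1$, so the degenerate cases $i=1$ and $i=k-1$ require no separate treatment. (The non-intersecting hypothesis enters only through $A\cap B=\emptyset$ in the standing setup, which is what makes $\P_1,\P_2,\Q_1,\Q_2$ well defined and of the claimed uniformities.)

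Next I would turn each polynomial product into a relation between rows of coefficient matrices. Fix $j\in[i]$ and write row $j$ of the $i\times i$ matrix $C(\P_1)$ as $(c_{j,0},\dots,c_{j,i-1})$; since $\deg P_{S_j'}\le i-1$, this means $P_{S_j'}(x)=\sum_{m=0}^{i-1}c_{j,m}x^{i-1-m}$. Multiplying by $f_0$ gives
\[
  P_{S_j}(x)=\sum_{m=0}^{i-1}c_{j,m}\,\bigl(x^{i-1-m}f_0(x)\bigr).
\]
The polynomials $x^{i-1}f_0,x^{i-2}f_0,\dots,f_0$ on the right are exactly the entries of $\Q_1$, listed in that order, and $\deg\bigl(x^{i-1-m}f_0\bigr)=k-1-m\le k-1$, so each of them fits into the $i\times k$ coefficient matrix $C(\Q_1)$. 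Expanding the displayed identity in the monomial basis $x^{k-1},x^{k-2},\dots,x^0$ then shows that row $j$ of $C(\P)$ equals the product of row $j$ of $C(\P_1)$ with $C(\Q_1)$. The identical argument with $g_0$, $C(\P_2)$ and $C(\Q_2)$ shows that, for $j\in[i+1,k]$, row $j$ of $C(\P)$ equals the product of row $j-i$ of $C(\P_2)$ with $C(\Q_2)$.

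Finally I would assemble the blocks: stacking the above, the top $i$ rows of $C(\P)$ constitute $C(\P_1)\,C(\Q_1)$ and the bottom $k-i$ rows constitute $C(\P_2)\,C(\Q_2)$, which is exactly the asserted block identity. I do not expect a genuine obstacle here — the argument is essentially bookkeeping — but the step needing the most care is the middle one: one must keep the monomial orderings consistent across $C(\P)$, $C(\P_1)$ and $C(\Q_1)$ (all in descending powers of $x$), and verify the degree bound $\deg(x^{i-1-m}f_0)\le k-1$ so that the shift-and-expand manipulation genuinely lands inside the $k$-column coefficient matrix.
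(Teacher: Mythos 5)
Your proof is correct and follows essentially the same route as the paper's: factor $P_{S_j}=f_0\,P_{S_j'}$ (resp. $g_0\,P_{S_j'}$) using $A\subseteq S_j$ for $j\in[i]$ (resp. $B\subseteq S_j$ for $j\in[i+1,k]$), expand in the basis $x^{i-1}f_0,\ldots,f_0$, and read off each row of $C(\P)$ as a row of $C(\P_1)$ times $C(\Q_1)$, then stack the blocks. The paper's proof is the same coefficient-by-coefficient bookkeeping, so no comparison is needed.
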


\begin{lemma}\label{SBGM2}
  The determinant of $[C(\Q_1)\; C(\Q_2)]^T$ is nonzero. In particular  $\det([C(\Q_1)\;$ $ C(\Q_2)]^T)=\prod_{u\in A,v\in B}(a_u-a_v)$.
\end{lemma}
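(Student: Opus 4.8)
The plan is to recognize the matrix $[C(\Q_1)\; C(\Q_2)]^T$ as the coefficient matrix of a particular sequence of $k$ polynomials of degree at most $k-1$, and then to identify its determinant with the resultant-type product $\prod_{u\in A,v\in B}(a_u-a_v)$. Recall that $\Q_1$ is the sequence $x^{i-1}f_0,\,x^{i-2}f_0,\,\ldots,\,f_0$ where $f_0=\prod_{u\in A}(x-a_u)$ has degree $k-i$, and $\Q_2$ is $x^{k-i-1}g_0,\,\ldots,\,g_0$ where $g_0=\prod_{v\in B}(x-a_v)$ has degree $i$; each of these $k$ polynomials has degree at most $k-1$, so stacking their coefficient row vectors (each padded to length $k$, aligned at $x^{k-1}$) gives a well-defined $k\times k$ matrix $C(\Q)$ where $\Q$ is the concatenation $\Q_1,\Q_2$. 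So the claim is exactly that $\det C(\Q)=\prod_{u\in A,v\in B}(a_u-a_v)$, which in particular is nonzero since $A\cap B=\emptyset$ and the $a_j$ are distinct.

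The main step is to show $\det C(\Q)\neq 0$, equivalently that the polynomials $x^{i-1}f_0,\ldots,f_0,x^{k-i-1}g_0,\ldots,g_0$ are linearly independent over $\bbF_q$. Suppose a linear combination vanishes: $p(x)f_0(x)+q(x)g_0(x)=0$ where $\deg p\leq i-1$ and $\deg q\leq k-i-1$. Then $f_0 \mid q(x)g_0(x)$; since $f_0$ and $g_0$ are coprime (their roots $\{a_u:u\in A\}$ and $\{a_v:v\in B\}$ are disjoint) and $\bbF_q[x]$ is a UFD, we get $f_0\mid q$, forcing $q=0$ because $\deg q\leq k-i-1<k-i=\deg f_0$; then $p=0$ likewise. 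Hence the combination is trivial and $\det C(\Q)\neq 0$.

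For the exact value, I would compute $\det C(\Q)$ by relating it to evaluation at the $a_j$. One clean route: multiply $C(\Q)$ on the right by the $k\times k$ Vandermonde matrix $V'$ built from the $k$ distinct points $\{a_u:u\in A\}\cup\{a_v:v\in B\}$ (rows $x^{k-1},\ldots,x^0$). The product $C(\Q)\cdot V'$ is the $k\times k$ matrix whose rows are the evaluations of the $\Q$-polynomials at these $k$ points; since $f_0$ vanishes on all $a_u$ with $u\in A$, the top $i$ rows are supported only on the $i$ columns indexed by $B$, and similarly the bottom $k-i$ rows are supported only on the $k-i$ columns indexed by $A$. After a permutation of columns, $C(\Q)\cdot V'$ is block anti-triangular, so its determinant (up to the sign of the permutation) factors as a product of two smaller determinants, each again a Vandermonde-type determinant in the $a_u$ (resp.\ $a_v$) scaled by the values $g_0(a_u)$ (resp.\ $f_0(a_v)$). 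Dividing by $\det V'=\prod(a_u-a_v)\cdot\prod(a_u-a_{u'})\cdot\prod(a_v-a_{v'})$ (over appropriate index ranges) and bookkeeping the signs, all the "pure-$A$" and "pure-$B$" Vandermonde factors cancel and one is left precisely with $\prod_{u\in A,v\in B}(a_u-a_v)$.

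The part I expect to require the most care is the sign bookkeeping in this last step: tracking the column permutation that turns the block anti-triangular matrix into block-diagonal form, and checking that it exactly cancels the signs coming from $\prod g_0(a_u)=\prod_{u\in A,v\in B}(a_u-a_v)$ versus $\prod f_0(a_v)=\prod_{u\in A,v\in B}(a_v-a_u)=(-1)^{i(k-i)}\prod_{u\in A,v\in B}(a_u-a_v)$. An alternative that sidesteps signs is to argue directly that $\det C(\Q)$, viewed as a polynomial in the indeterminates $a_j$, must be divisible by each factor $(a_u-a_v)$ (since setting $a_u=a_v$ makes $f_0$ and $g_0$ share a root, and one checks the columns become dependent), has total degree exactly $i\cdot(k-i)=\deg\big(\prod_{u\in A,v\in B}(a_u-a_v)\big)$, and has leading coefficient $1$ (the matrix $C(\Q)$ is unitriangular when all $a_j=0$, i.e.\ $f_0=x^{k-i}$, $g_0=x^i$), hence equals $\prod_{u\in A,v\in B}(a_u-a_v)$ on the nose.
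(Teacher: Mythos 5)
Your approach is genuinely different from the paper's. The paper proves the identity by a fully explicit sequence of elementary row operations: writing $f_s,g_t$ for the partial products obtained by deleting linear factors from $f_0,g_0$, it repeatedly replaces the row $C(x^{k-i-r}g_0)$ by a combination equal to $C\bigl(f_r\prod_{t}(a_{u_r}-a_{v_t})\bigr)$, extracts the scalar factor, and after $k-i$ rounds reduces the matrix to $[C(x^{k-1})\;\cdots\;C(1)]^T$, so the determinant is read off as the product of the extracted factors with no sign issues at all. You instead recognize $[C(\Q_1)\;C(\Q_2)]^T$ as the Sylvester matrix of the coprime monic polynomials $f_0$ and $g_0$ and argue (i) nonsingularity from the impossibility of a relation $pf_0+qg_0=0$ with $\deg p\le i-1$, $\deg q\le k-i-1$, and (ii) the exact value by the classical resultant computation. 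Part (i) is complete and correct, and it already establishes the nonvanishing assertion of the lemma (which is all that the proof of Theorem~\ref{thm_MDS} actually uses) by a cleaner and more conceptual route than the paper's.

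For the exact value, though, your write-up is not yet a proof. In route (a) the block anti-triangular factorization of $C(\Q)V'$ is correct, but the conclusion rests on the sign cancellation that you explicitly defer; the signs do work out ($\prod_{v}f_0(a_v)\prod_u g_0(a_u)=(-1)^{i(k-i)}\bigl(\prod_{u,v}(a_u-a_v)\bigr)^2$, and the column permutation together with the factorization of $\det V'$ contributes exactly the compensating $(-1)^{i(k-i)}$), but as written this is a plan rather than an argument. In route (b) the normalization step is actually false: when all $a_j=0$ the rows of $C(\Q)$ are the coefficient vectors of $x^{k-1},\dots,x^{k-i}$ and of $x^{k-1},\dots,x^{i}$, so the row $x^{k-1}$ is repeated and the matrix is singular, not unitriangular --- consistent with $\prod(a_u-a_v)$ vanishing there, but useless for extracting the constant. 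A correct normalization is to set only the $a_v$, $v\in B$, to zero: the $\Q_2$ rows become the unit vectors $e_1,\dots,e_{k-i}$, and the determinant equals $(-1)^{i(k-i)}$ times the determinant of the lower triangular $i\times i$ corner of the $\Q_1$ block, whose diagonal entries are the constant term $(-1)^{k-i}\prod_u a_u$ of $f_0$; this gives $\prod_u a_u^{\,i}$, matching $\prod_{u,v}(a_u-a_v)$ at $a_v=0$ and pinning the constant to $1$. The homogeneity (degree $i(k-i)$) and divisibility claims in route (b) are fine, provided you treat the $a_j$ as indeterminates over $\bbZ$ and specialize afterwards.
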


 By Lemmas~\ref{SBGM1} and \ref{SBGM2}, we have
\[\det(C(\P))=\det(C(\P_1))\det(C(\P_2))\prod_{u\in A,v\in B}(a_{u}-a_{v}),\]
when $\S$ is non-intersecting and separable. To make sure that $\det(C(\P))\neq 0$, we need both $\det(C(\P_1))$ and $\det(C(\P_2))$ are nonzero, which are the coefficient matrices of polynomials defined by the residual set systems of $\mathcal{S}$. This motivates us to define a binary tree from a set system $\S$ in the following way. Let the $(k-1)$-uniform system $\mathcal{S}=\{S_1,S_2,\ldots S_k\}$ be the root node. If $\mathcal{S}$ is not non-intersecting and separable, then stop. Otherwise, let the two residual set systems $\A$ and $\B$ of $\mathcal{S}$  be the left and right children of $\mathcal{S}$. Then consider $\A$ and $\B$, say for example $\A$. If $\A$ is not non-intersecting and separable or $\A=\{\emptyset\}$, then stop. Otherwise, we can extend $\A$ by its two residual set systems. Keep doing this until  we can not extend any more. If the resulting binary tree has all leaf nodes being $\{\emptyset\}$, then we say that it is a {\it good} binary tree. Note that the binary tree constructed from $\S$ may not be unique. We say $\S$ is \emph{good} if it can produce at least one good binary tree. For convenience, we use the sequence of indices separating the nodes from top to bottom to indicate a specific binary tree, where indices for different layers are separated by a semi-colon. To be clear about the structure of $\S$, in the sequence, we use the original indices from $\S$ instead of indices from its decedents. See Example~\ref{eg1} and Fig.~\ref{picture}.


\begin{example}\label{eg1}
  Let $S_1=\{5,6,7,8\},S_2=\{1,6,7,8\},S_3=\{1,2,7,8\}, S_4=\{1,2,3,4\}, S_5=\{2,3,4,5\}.$ Then $\mathcal{S}=\{S_1$,$S_2$,
  $S_3$,$S_4$,$S_5\}$ is a $4$-uniform set system over $[8]$. We show that $\S$ is good and corresponds to a good binary tree.

   In fact, $\mathcal{S}$ is non-intersecting and separable at $3$ since $|S_1\cap S_2\cap S_3|=|\{7,8\}|=2$ and $|S_4\cap S_5|=|\{2,3,4\}|=3$. Then we have two residual set systems $\A=\{S_1^\prime,S_2^\prime,S_3^\prime\}$ and $\B=\{S_4^\prime,S_5^\prime\}$ with index $3$, where $S_1'=\{5,6\}, S_2'=\{1,6\}, S_3'=\{1,2\}, S_4'=\{1\}$ and $S_5'=\{5\}$. Both $\A$ and $\B$ are non-intersecting and separable since $S_1^\prime\cap S_2^\prime=\{6\}$. Let $S_1''=\{5\},S_2''=\{1\}, \mathcal{C}=\{S_1'',S_2''\}$. Then the corresponding binary tree in Fig.~\ref{picture} is denoted by $(3;2,4;1)$, which is good since all leaf nodes are $\{\emptyset\}$.
    \begin{figure}[!htbp]
\centering
\includegraphics[scale=0.35]{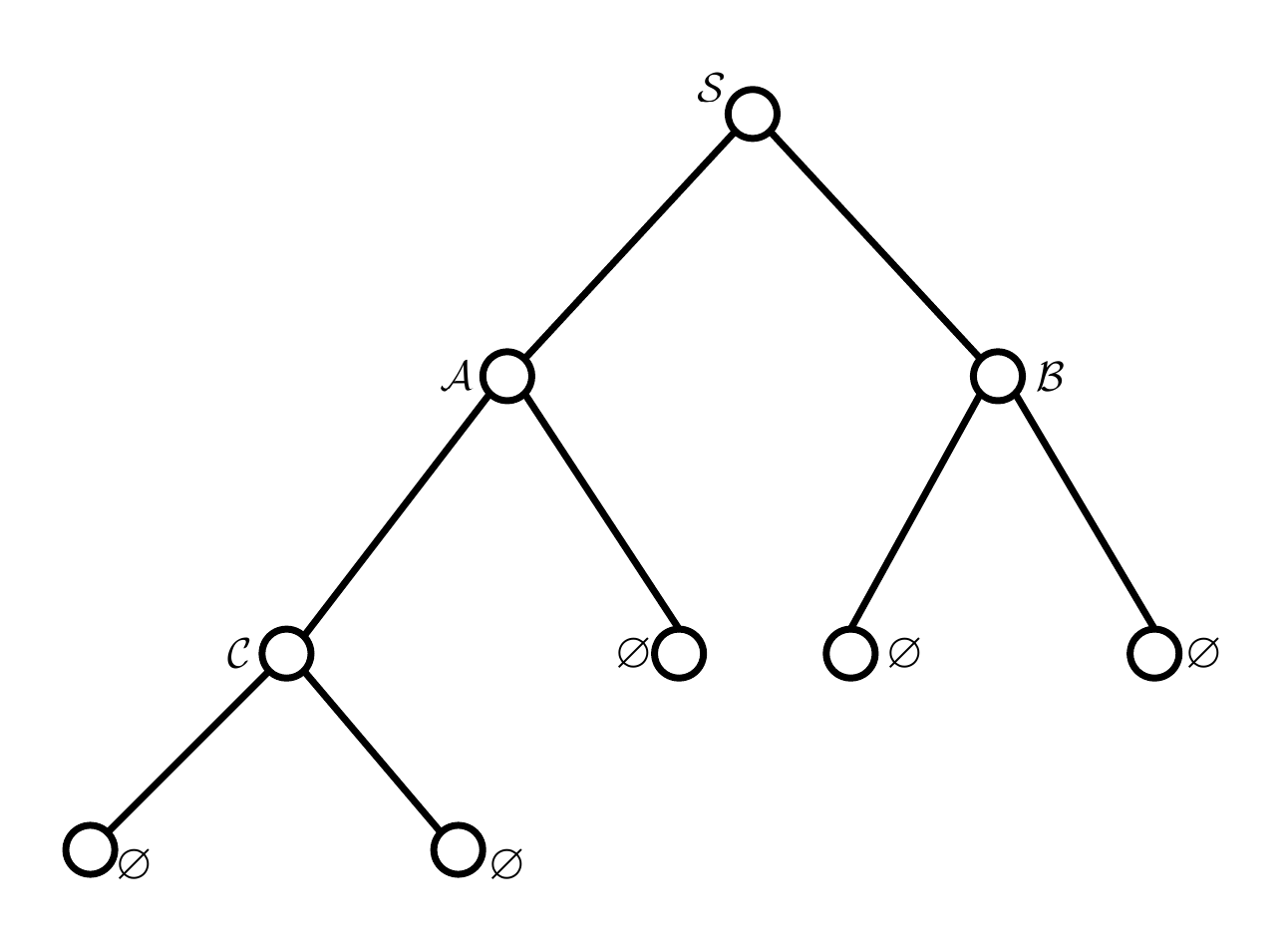}
\caption{The binary tree $(3;2,4;1)$ in Example~\ref{eg1}.}\label{picture}
\end{figure}

\end{example}


Note that if a set system has only one member which is an empty set, that is $\{\emptyset\}$, then there is only one constant polynomial defined by it. Hence the corresponding coefficient matrix is a $1\times 1$ matrix with entry $1$, i.e., $C(\{\emptyset\})=(1)$, which has determinant $1$.
Combining Lemmas~\ref{SBGM1} and \ref{SBGM2}, and all analysis before together, we know that if the $(k-1)$-uniform system $\mathcal{S}=\{S_1,\ldots,S_k\}$ is good, then picking any $n$ distinct elements $a_1,\ldots,a_n\in \bbF_q$ to define $P_{S_i}$, will result in an invertible coefficient matrix $C(\P)$, and consequently a generator matrix $G$ of an $[n,k]_q$ RS code whose complementary support matrix  is $M_{\S}$. We summarize this result into the following theorem.


\begin{theorem}\label{thm_MDS}
  Let $\mathcal{S}=\{S_1,\ldots,S_k\}$ be a good $(k-1)$-uniform set system over  $[n]$.  Then for any finite field $\mathbb{F}_q$ with $q\geq n$, there exists a sparse  $[n,k]_q$ RS code, whose generator matrix $G$ has the property that $G_{ij}=0$ if and only if $j\in S_i$, i.e., $M_{\mathcal{S}}$ is the complementary support matrix  of $G$.
\end{theorem}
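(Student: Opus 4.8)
The plan is to justify the theorem by assembling the pieces that have already been set up in the text, so the real work reduces to a clean induction on the binary tree. By the discussion preceding the statement, a good $(k-1)$-uniform set system $\mathcal{S}$ admits at least one good binary tree $T$; fix such a $T$. For any choice of $n$ distinct evaluation points $a_1,\dots,a_n\in\mathbb{F}_q$ (which exist since $q\geq n$), the system $\mathcal{S}$ defines the polynomials $P_{S_1},\dots,P_{S_k}$ and hence the matrix $G=C(\P)\cdot V$. Since $G$ is manifestly a $k\times n$ matrix whose zero pattern is exactly $M_{\mathcal{S}}$ (because $g_{i,j}=P_{S_i}(a_j)=0$ iff $a_j$ is a root of $P_{S_i}$ iff $j\in S_i$, using that the $a_j$ are distinct so no spurious cancellation occurs), it only remains to show $G$ generates an MDS code, equivalently that $\det C(\P)\neq 0$, equivalently every $k\times k$ submatrix of $G$ is invertible. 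The latter follows from $\det C(\P)\neq 0$ together with the Vandermonde factorization $G=C(\P)V$, as already observed in the excerpt.

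So the heart of the argument is: \emph{if $\mathcal{S}$ is good, then $\det C(\P)\neq 0$.} I would prove this by induction on $k$, or more precisely on the height of the good binary tree $T$. For the base case, a leaf of a good tree is $\{\emptyset\}$, for which $C(\{\emptyset\})=(1)$ has determinant $1$. For the inductive step, since $\mathcal{S}$ is good it is non-intersecting and separable at some $i\in[k-1]$, with residual set systems $\A$ and $\B$ that are themselves good (their sub-binary-trees, being subtrees of a good tree, have all leaves equal to $\{\emptyset\}$). Lemma~\ref{SBGM1} gives
\[
C(\P)=\begin{bmatrix} C(\P_1) & 0\\ 0 & C(\P_2)\end{bmatrix}\begin{bmatrix} C(\Q_1)\\ C(\Q_2)\end{bmatrix},
\]
and since these are square matrices of compatible sizes ($C(\P_1)$ is $i\times i$, $C(\P_2)$ is $(k-i)\times(k-i)$, and $[C(\Q_1)\;C(\Q_2)]^T$ is $k\times k$), multiplicativity of the determinant yields
\[
\det C(\P)=\det C(\P_1)\cdot\det C(\P_2)\cdot\det\!\big([C(\Q_1)\;C(\Q_2)]^T\big).
\]
The last factor equals $\prod_{u\in A,\,v\in B}(a_u-a_v)$ by Lemma~\ref{SBGM2}, which is nonzero because $A\cap B=\emptyset$ (non-intersecting property) and the $a_j$ are distinct. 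The first two factors are nonzero by the induction hypothesis applied to the good systems $\A$ and $\B$ — one subtlety to check here is that $\P_1$ is genuinely the polynomial sequence attached to $\A$ in the same way $\P$ was attached to $\mathcal{S}$, using the \emph{same} evaluation points $a_1,\dots,a_n$; this is exactly how $\P_1,\P_2$ were defined in the excerpt, so the induction hypothesis does apply verbatim. Hence $\det C(\P)\neq 0$.

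The main obstacle, such as it is, is bookkeeping rather than mathematics: one must be careful that the residual set systems inherit goodness (which I would phrase as a one-line observation that a subtree of a good binary tree is again a good binary tree for the corresponding residual system), and that the degrees stay bounded by $k-1$ so that all the $C(\cdot)$ matrices have the stated shapes — this was already flagged in the excerpt ("each polynomial in $\Q_1$ and $\Q_2$ has degree at most $k-1$", and $\A$ is $(i-1)$-uniform, $\B$ is $(k-i-1)$-uniform). Once those are in place, the determinant computation and the Vandermonde factorization close the proof, and the zero-pattern claim $G_{ij}=0\iff j\in S_i$ is immediate from the definition of $P_{S_i}$ and the distinctness of the evaluation points.
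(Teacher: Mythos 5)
Your proposal is correct and follows essentially the same route as the paper: the paper also establishes the theorem by combining Lemmas~\ref{SBGM1} and \ref{SBGM2} to get $\det(C(\P))=\det(C(\P_1))\det(C(\P_2))\prod_{u\in A,v\in B}(a_u-a_v)$ and then recursing down the good binary tree to the leaves $\{\emptyset\}$ with $C(\{\emptyset\})=(1)$. Your write-up merely makes the induction (on tree height) and the inheritance of goodness by the residual systems explicit, which the paper leaves informal.
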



\begin{example}\label{eg2}
  Let $n=8$, $\mathcal{S}=\{S_1,S_2,S_3,S_4,S_5\}$ be the set system given in Example~\ref{eg1}, then $\mathcal{S}$ satisfies the assumptions of Theorem~\ref{thm_MDS}. We will construct a $5\times 8$ matrix $G$ over $\mathbb{F}_q=GF(2^3)$ with the property that $G_{ij}=0$ if and only if $j\in S_i$. Let $\mathcal{P}=\{P_{S_1}(x),\cdots,P_{S_5}(x)\}$, where
  \[P_{S_1}(x)=(x-a_5)(x-a_6)(x-a_7)(x-a_8)\]
  \[P_{S_2}(x)=(x-a_1)(x-a_6)(x-a_7)(x-a_8)\]
  \[P_{S_3}(x)=(x-a_1)(x-a_2)(x-a_7)(x-a_8)\]
  \[P_{S_4}(x)=(x-a_1)(x-a_2)(x-a_3)(x-a_4)\]
  \[P_{S_5}(x)=(x-a_2)(x-a_3)(x-a_4)(x-a_5).\]
  Then the determinant of the coefficient matrix $C(\mathcal{P})$ is
    \[\det(C(\mathcal{P}))=\prod_{u\in\{7,8\},v\in\{2,3,4\}}(a_u-a_v)\cdot \prod_{v'\in\{1,2\}}(a_6-a_{v'}) \cdot (a_1-a_5) \cdot(a_5-a_1).\]

  The determinant of $C(\mathcal{P})$ will be nonzero in $\mathbb{F}_q$ if all $a_i$ are distinct elements of $\mathbb{F}_q$. Let $\zeta$ be a primitive element of $\mathbb{F}_q$ that satisfies $\zeta^3+\zeta+1=0$, and $a_1=0, a_i=\zeta^{i-2}$ for $i\in[2,8]$. Then $\det(C(P))=1\neq 0$, and the $5\times 8$ matrix $G$ over $\mathbb{F}_q$ is

  \[G=\left(\begin{array}{cccccccc}
      \zeta^4 & \zeta^5 & \zeta^6 & \zeta^2 & 0 & 0 & 0 & 0 \\
      0 & \zeta^4 & 1 & \zeta^6 & \zeta & 0 & 0 & 0 \\
      0 & 0 & \zeta & \zeta^4 & \zeta^3 & \zeta^5 & 0 & 0 \\
      0 & 0 & 0 & 0 & \zeta^2 & \zeta^5 & \zeta^4 & \zeta^6 \\
      \zeta^6 & 0 & 0 & 0 & 0 & 1 & \zeta & \zeta^4 \\
      \end{array}
      \right)
  \]

  By Theorem~\ref{thm_MDS}, there exists an $[8,5]$ RS code whose generator matrix is $G$. Furthermore, $G$ is also sparse and balanced.

\end{example}
A good $(k-1)$-uniform set system $\S$ satisfying the conditions of Theorem~\ref{thm_MDS} simply exists, for example,  let $S_i=[i,i+k-2] \mm n$. That is, sparse RS codes always exist. However, this can not give us a balanced code. To get a sparse and balanced RS code, we need to find a  good $(k-1)$-uniform set system, which simultaneously has almost the same element occurrences.  For convenience, we say a set system $\S$ or the matrix $M_{\S}$ is sparse and balanced if the matrix $G$ is sparse and balanced, and $M_{\S}$ is good if $\S$ is good.

\section{Constructions of Sparse and Balanced MDS codes}\label{conca}

In this section, we prove our main result Theorem~\ref{thm_main}, which gives the existence of MDS codes with an SBGM over a small field $\mathbb{F}_q$ with $q\geq n$.  For the reader's convenience, we restate Theorem~\ref{thm_main} here.


\begin{theorem*}[Theorem~\ref{thm_main}]
 For any integer $k\geq 3$, let $n\leq 2k$ if $k$ is even, and $n\leq 2k-1$ if $k$ is odd.
  For any finite field $\mathbb{F}_q$ with $q \geq n$, there exists an $[n,k]_q$ MDS code, whose generator matrix is sparse and balanced.
\end{theorem*}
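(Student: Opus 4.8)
The strategy is to reduce Theorem~\ref{thm_main} to a purely combinatorial construction problem, then solve that problem with explicit algorithms. By Theorem~\ref{thm_MDS}, it suffices to exhibit a \emph{good} $(k-1)$-uniform set system $\mathcal{S}=\{S_1,\dots,S_k\}$ over $[n]$ whose complementary support matrix $M_{\mathcal{S}}$ is balanced, i.e.\ every column has weight $\lceil k(n-k+1)/n\rceil$ or $\lfloor k(n-k+1)/n\rfloor$; equivalently, each element of $[n]$ lies in either $\lceil k(k-1)/n\rceil$ or $\lfloor k(k-1)/n\rfloor$ of the sets $S_i$. First I would record why $n\le 2k$ is forced: for $\mathcal{S}$ to be separable at some $i$, the root must split off a common intersection $A$ of size $k-i$ among $i$ sets and a common intersection $B$ of size $i$ among the remaining $k-i$ sets, with $A\cap B=\emptyset$; tracing what this forces on column weights (the $k-i$ columns indexed by $A$ already receive weight $\ge i$ from the first $i$ rows, and symmetrically for $B$) and comparing against the balanced target shows that balance is impossible once $n>2k$ (and, by a parity refinement, once $n=2k$ with $k$ odd). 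This pins down the regime in the theorem statement.

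Next, for the constructive direction I would build $\mathcal{S}$ by a top-down recursion mirroring the good binary tree: choose a separation index $i$ at the root, declare $A=S_1\cap\cdots\cap S_i$ and $B=S_{i+1}\cap\cdots\cap S_k$ on disjoint coordinate blocks, and recurse on the two residual systems $\mathcal{A}$ (which is $(i-1)$-uniform on $i$ sets) and $\mathcal{B}$ (which is $(k-i-1)$-uniform on $k-i$ sets), each living on its own sub-block of columns. The subtlety is that I cannot recurse blindly: the residual systems must themselves admit good binary trees, and the column weights contributed at every level must add up to something balanced. So I would (i) pick $i$ close to $k/2$ so the two subproblems are of comparable size — this is what uses the parity hypothesis, since when $k$ is odd an exactly even split is unavailable and one must check the off-by-one split still balances when $n\le 2k-1$; (ii) distribute the $n$ columns across the two sub-blocks and the $A$-, $B$-blocks so that each subproblem is again of the form "$n'\le 2k'$ (with the right parity) columns, $k'$ sets", allowing induction; and (iii) verify the balance bookkeeping: a column in block $A$ gets weight $i$ from the top split plus its weight inside the $\mathcal{B}$-subproblem's complement region — wait, more precisely a column's total weight is its contribution within whichever residual subtree contains it plus the constant contribution from the ancestor splits that placed it in an $A$- or $B$-set. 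Keeping this ledger consistent is exactly what the algorithms in this section do; I would present them as (a) a base-case routine for small $n-k$ (e.g.\ $n=k$, $n=k+1$, where sparsity already forces heavy columns and the cyclic construction $S_i=[i,i+k-2]\bmod n$ or a small patch works), (b) a splitting routine that, given $(n,k)$ in range, outputs the root index $i$ and the column partition, and (c) a merge/relabel routine assembling the recursive outputs into $M_{\mathcal{S}}$, each step clearly polynomial in $n,k$.

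The main obstacle I anticipate is the balance bookkeeping across recursion levels: it is easy to get a good set system and easy to get a balanced binary matrix, but forcing both at once requires the column partition at each split to be chosen so that the ``inherited'' weights from ancestor $A$/$B$ memberships, when summed with the locally-constructed weights, never stray outside the $\{\lfloor\cdot\rfloor,\lceil\cdot\rceil\}$ window — and the target value $\lceil k(k-1)/n\rceil$ changes as $(n,k)$ shrinks down the recursion, so the invariant maintained by the algorithm is not simply ``balanced'' but a sharper statement about the multiset of column weights (how many columns are heavy versus light) that is preserved under the split. I would isolate this as a lemma: \emph{if $\mathcal{A}$ and $\mathcal{B}$ are balanced-with-prescribed-heavy-count on their blocks, then the assembled $\mathcal{S}$ is balanced on $[n]$}, and check the arithmetic $\lfloor$/$\lceil$ identities relating $k(k-1)/n$ to $i(i-1)/n_A'$ and $(k-i)(k-i-1)/n_B'$ for the chosen block sizes. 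Everything else — linear independence of the $P_{S_i}$, hence the MDS property — is already delivered by Theorem~\ref{thm_MDS} once $\mathcal{S}$ is good, so no field-size argument beyond $q\ge n$ is needed.
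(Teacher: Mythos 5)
Your reduction to Theorem~\ref{thm_MDS} and your necessity argument for $n\le 2k$ (with $k$ even when $n=2k$) are sound and match Lemma~\ref{ivthm1}. The gap is in the constructive half. You propose a top-down recursion in which the two residual systems $\mathcal{A}$ and $\mathcal{B}$ live on \emph{disjoint} sub-blocks of columns, so that each subproblem is again of the form ``$n'\le 2k'$ columns, $k'$ sets'' and balance can be verified by a $\lfloor\cdot\rfloor/\lceil\cdot\rceil$ identity relating $k(k-1)/n$ to $i(i-1)/n_A'$ and $(k-i)(k-i-1)/n_B'$. This cannot work. Take $i\approx k/2$ and $n\le 2k$: a column assigned exclusively to $\mathcal{A}$'s block receives zeros only from the $i$ rows of $\mathcal{A}$, yet balance forces it to carry $\lceil k(k-1)/n\rceil\ge (k-1)/2\approx i$ zeros, i.e.\ it must lie in essentially every set $S_1',\dots,S_i'$. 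Counting zeros, $n_A'\approx i(i-1)/\lceil k(k-1)/n\rceil\approx i-1$, while each $S_j'$ has size exactly $i-1$, so each $S_j'$ is the entire block and $\cap_j S_j'\neq\emptyset$ --- destroying the non-intersecting property you need to continue the recursion. The residual systems of a balanced good system are forced to interleave on shared columns (already visible in Example~\ref{eg1}, where $\mathcal{A}$ and $\mathcal{B}$ both use columns $1$ and $5$), so the clean ``balanced-with-prescribed-heavy-count on its own block'' invariant you want to propagate does not exist.

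The paper takes a different route that sidesteps this: it fixes the entire good binary tree in advance by starting from a circulant (or two-block circulant) matrix $M_{\mathcal{S}}$, which is automatically sparse and good but unbalanced, and then rebalances it by swapping zeros with ones \emph{within rows}, subject to the restrictions of Lemma~\ref{thm_base} (``key operations''), which are shown to preserve the good tree. The real work is then a feasibility-and-termination count for these swaps in Algorithms~\ref{alg:u2=1}--\ref{alg3:w}, split by the size of $t=2k-n$. If you want to salvage a recursive viewpoint, the invariant to maintain is not a column partition but a prescription, for each column outside $A\cup B$, of how many zeros it must collect from the top $i$ rows versus the bottom $k-i$ rows; making that ledger consistent while keeping every residual system non-intersecting and separable is precisely the difficulty the paper's swap-based algorithms are designed to resolve.
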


By Theorem~\ref{thm_MDS}, we need to construct a good and balanced $(k-1)$-uniform set system $\S$. Equivalently, we need to construct a $k\times n$ binary matrix $M=(m_{i,j})=M_{\S}$ satisfying the following properties:

\begin{itemize}
  \item[($P_1$)] Sparse condition: each row of $M$ has $k-1$ zeros;
  \item[($P_2$)] Balanced condition: there are $\mu$ columns of $M$ containing $\lceil\frac{k(k-1)}{n}\rceil$ zeros and the rest containing $\lfloor\frac{k(k-1)}{n}\rfloor$ zeros, where $\mu\triangleq k(k-1)\mm n$;
  \item[($P_3$)] Good condition: for $i\in[k]$, let $S_i=\{j: m_{i,j}=0\}\subseteq[n]$, then the set system $\mathcal{S}=\{S_1,\ldots,S_k\}$ is good.
\end{itemize}

 In the sequential of this paper, we do not distinguish $\S$ and $M_{\S}$. Note that $S_i=[i,i+k-2] \mm n$, $i\in[k]$ gives a matrix $M_{\S}$ satisfying both ($P_1$) and ($P_3$). It is also easy to construct a matrix satisfying both ($P_1$) and ($P_2$). However, it is not easy to construct a matrix satisfying all of the three properties. We first give  necessary conditions in Theorem~\ref{thm_main} for the restriction of $n$ and $k$ in a matrix satisfying ($P_1$)-($P_3$).

\begin{lemma}\label{ivthm1}
  Given two positive integers $n\geq k\geq 3$, if there exists a sparse, good and balanced $k\times n$ binary matrix, then $n\leq 2k$. Further if $n=2k$, then $k$ must be even.
\end{lemma}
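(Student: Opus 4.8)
The plan is to analyze what the good condition $(P_3)$ forces on a sparse set system $\S$, and then combine it with the balanced condition $(P_2)$ to pin down the relation between $n$ and $k$. The key structural observation is that if $\S$ is good, then by definition it admits a good binary tree: $\S$ must be non-intersecting and separable at some $i\in[k-1]$, yielding residual set systems $\A$ (which is $(i-1)$-uniform on some ground set) and $\B$ (which is $(k-i-1)$-uniform), each of which must again be good, and so on down to leaves $\{\emptyset\}$. I would first record the obvious size accounting: the total number of zeros in $M_\S$ is $k(k-1)$ (by $(P_1)$), and this is the sum of column weights, each of which is $\lfloor k(k-1)/n\rfloor$ or $\lceil k(k-1)/n\rceil$ (by $(P_2)$).

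The heart of the argument is a lower bound on the column weights of a good matrix. I claim that in a good $(k-1)$-uniform set system on $[n]$, \emph{every} element of $[n]$ lies in at least two of the sets $S_1,\dots,S_k$; equivalently, every column of $M_\S$ has weight at least $2$. To see this, induct on $k$ using the tree structure. At the root, $\S$ is separable at $i$, with $A=S_1\cap\cdots\cap S_i$ of size $k-i$ and $B=S_{i+1}\cap\cdots\cap S_k$ of size $i$, and $A\cap B=\emptyset$. Every element of $A$ already lies in all $i$ sets $S_1,\dots,S_i$; since $i\ge 1$ this gives multiplicity $\ge 1$, and when $i=1$ one has to check the element of $A$ — but here $A$ has size $k-1\ge 2$, and... more carefully: an element $u\in A$ has multiplicity $\ge i$; if $i\ge 2$ we are done for $u$, and if $i=1$ then $|A|=k-1$ and these $k-1$ elements, being absent from the single residual set $S_1'=\emptyset$, must reappear with enough total multiplicity among $S_2,\dots,S_k$ — I would argue that each such $u$ lies in at least one more $S_j$, $j\ge 2$, because otherwise $u$ would appear in exactly one set, and then the column for $u$ has weight $1$; tracking this through the recursion, the base cases (leaf $\{\emptyset\}$, or the small set systems where the tree bottoms out) force the multiplicity-$\ge 2$ property for every column. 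Dually, elements of $B$, and elements appearing in the residual systems $\A,\B$, inherit multiplicity $\ge 2$ by the inductive hypothesis applied to $\A$ and $\B$ together with the extra multiplicity $A$ and $B$ supply. Once this claim is in hand, every column weight is $\ge 2$, so $k(k-1)=\sum_{j=1}^n(\text{col weight})\ge 2n$, giving $n\le k(k-1)/2$; but we actually want the sharper $n\le 2k$, so the crude bound is not quite enough and I would instead argue that $\lfloor k(k-1)/n\rfloor\ge 2$, i.e. that the balanced condition forces the \emph{smaller} column weight to be at least $2$. Combined with $(P_2)$, $\lfloor k(k-1)/n\rfloor\ge 2$ is equivalent to $k(k-1)\ge 2n$, i.e. $n\le k(k-1)/2$ — wait, this still overshoots $2k$. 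So the column-weight-$\ge 2$ bound alone cannot be the whole story; the good condition must give \emph{more}.

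I therefore expect the real obstacle to be extracting a stronger constraint from goodness, and I would look for it at the top of the tree. The point is that separability is very rigid: $A=S_1\cap\cdots\cap S_i$ has size exactly $k-i$ and $B=S_{i+1}\cap\cdots\cap S_k$ has size exactly $i$, with $A$, $B$ disjoint, so $A\cup B$ already accounts for $k$ of the $n$ columns, each appearing (within the union) in the $\ge\!1$ designated sets. Counting zeros more carefully: the $i$ sets $S_1,\dots,S_i$ contribute $i(k-1)$ zeros, of which $i(k-i)$ are used up on $A$; so the residual system $\A$, which is $(i-1)$-uniform on $n-(k-i)$ columns minus the columns of $B$ it cannot touch... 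The cleanest route is to combine the column-multiplicity bound with the fact that columns in $A$ have weight exactly $i$ or columns in $B$ have weight exactly $k-i$ \emph{from these sets alone}, plus their contributions elsewhere, and then optimize. Concretely, I would show $k(k-1)=\sum_j w_j$ where $w_j\ge 2$ for all $j$, and moreover equality $w_j=2$ can hold for at most... — and then balancedness says all $w_j\in\{\lfloor k(k-1)/n\rfloor,\lceil k(k-1)/n\rceil\}$, so if $n>2k$ then $k(k-1)/n<(k-1)/2<k$, forcing most columns to weight $\le k-1$, which is fine, so the contradiction must come from the lower tail: if $n>2k$ then $k(k-1)/n<k-1$, and in fact when $n\ge 2k+1$ one gets $\lceil k(k-1)/n\rceil\le \lceil (k-1)/2\rceil+O(1)$... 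I will pin down that $n\ge 2k+1$ forces $\lfloor k(k-1)/n\rfloor\le 1$ for the relevant range, contradicting column-weight $\ge 2$ — indeed $k(k-1)/n < k(k-1)/(2k) = (k-1)/2 < k-1$, but I need it below $2$, which needs $n> k(k-1)/2$; hmm. So the binding mechanism is subtler: I anticipate that goodness forces not merely weight $\ge 2$ per column but a stronger \emph{local} structure (e.g. that the two residual systems overlap in a controlled way, or that $\A$ and $\B$ each already require $\ge 2k'$ columns of their own by induction), and unwinding the recursion $n = |A|+|B| + (\text{columns of }\A) + (\text{columns of }\B) - (\text{shared columns})$ with the inductive bounds "(columns of a good $k'$-system) $\ge$ something near $2k'$" telescopes to $n\le 2k$. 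For the parity refinement, when $n=2k$ every column must have weight exactly $k(k-1)/n$ — but $k(k-1)/n = (k-1)/2$ is an integer only when $k$ is odd, and $(P_2)$ then says $\mu = k(k-1) \bmod n$ columns have the ceiling and the rest the floor; if $k$ is even, $(k-1)/2$ is a half-integer so the two possible weights are $\frac{k}{2}-1$ and $\frac{k}{2}$... so in fact $n=2k$ is compatible with $k$ even and I would re-examine: the lemma says $n=2k$ forces $k$ even, so for $k$ odd we must rule out $n=2k$ using a finer obstruction in the good condition at $n=2k$ (likely a parity clash in the tree decomposition, since $k$ odd makes every separation index $i$ leave an odd/even imbalance that eventually cannot terminate at $\{\emptyset\}$ leaves while staying balanced). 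I would carry this out by: (i) proving the column-weight-$\ge 2$ (or the sharper telescoping) lemma by induction on the good tree; (ii) feeding it into the $k(k-1)=\sum w_j$ count with $(P_2)$ to get $n\le 2k$; (iii) at $n=2k$, forcing exact column weights and deriving the parity constraint on $k$, the parity step being the one I expect to require the most care.
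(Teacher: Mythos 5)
Your proposal does not close, and the gap is at its foundation. The central claim you lean on --- that in a good $(k-1)$-uniform system every column of $M_{\S}$ has weight at least $2$ --- is false. Take $k=3$, $n=5$, $S_1=\{1,2\}$, $S_2=\{3,4\}$, $S_3=\{3,5\}$: this system is non-intersecting, separable at $i=1$ (with $A=S_1$, $B=\{3\}$), and produces a good binary tree, yet columns $1,2,4,5$ each carry a single zero; it is even sparse \emph{and} balanced, since here $\mu=1$ and $b=1$. Moreover, as you yourself observe midway through, even if the claim held it would only yield $n\le k(k-1)/2$, which is far weaker than $n\le 2k$. The subsequent appeal to a ``stronger local structure'' and a telescoping recursion down the tree is never made precise, and the parity statement at $n=2k$ is left as a guess about ``a parity clash in the tree decomposition.'' So both halves of the lemma remain unproved.

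The missing idea is much more local than your recursion, and you brush past it in the sentence where you note that $A\cup B$ accounts for $k$ columns: you only ever use \emph{lower} bounds on column weights, whereas the decisive input is the balanced \emph{upper} bound. Separability at the root gives disjoint sets $A=S_1\cap\cdots\cap S_i$ of size $k-i$ and $B=S_{i+1}\cap\cdots\cap S_k$ of size $i$; every column indexed by $A$ has at least $i$ zeros and every column indexed by $B$ has at least $k-i$ zeros. Balancedness caps every column weight at $\lceil k(k-1)/n\rceil$, which is at most $\lceil (k-1)/2\rceil$ once $n\ge 2k$. Since $i+(k-i)=k$ while $2\lceil\frac{k-1}{2}\rceil=k-1$ for odd $k$, the two inequalities $i\le\lceil\frac{k-1}{2}\rceil$ and $k-i\le\lceil\frac{k-1}{2}\rceil$ force $k$ to be even and $i=k-i=\frac{k}{2}=\lceil k(k-1)/n\rceil$ --- that is already the parity refinement at $n=2k$, with no induction needed. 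Finally, $A\cup B$ supplies $k$ columns each with at least $\frac{k}{2}$ zeros, so $\mu=k(k-1)\mm n\ge k$; writing $n=2k+t$, a short case analysis (for small $t$ one checks $\mu<k$ directly, and for the remaining $t$ one computes $\lceil k(k-1)/(2k+t)\rceil\le\frac{k}{2}-1$, contradicting $\lceil k(k-1)/n\rceil=\frac{k}{2}$) excludes every $n>2k$. This is the paper's argument; your weight-$\ge 2$ lemma and the descent through the whole tree are neither true as stated nor needed.
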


\begin{proof}
  Suppose $\S=\{S_1,\ldots,S_k\}$  and $M_{\S}$ is sparse, good and balanced. By the good condition, there exist $i\in [k-1]$, such that  $|S_1\cap S_2\cap\cdots\cap S_i|=k-i$ and $|S_{i+1}\cap S_{i+2}\cap\cdots\cap S_k|=i$. Suppose  that $n\geq 2k$. Then by the balanced condition, both $i$ and $k-i$ are at most $\lceil\frac{k(k-1)}{n}\rceil\leq\lceil\frac{k(k-1)}{2k}\rceil=\lceil\frac{k-1}{2}\rceil$. If  $i<\lceil\frac{k(k-1)}{n}\rceil\leq\lceil\frac{k-1}{2}\rceil$, then
  \begin{align*}
    k-i&\geq k-\left\lceil\frac{k-1}{2}\right\rceil+1=k-1-\left\lceil\frac{k-1}{2}\right\rceil+2\\
    &=\left\lfloor\frac{k-1}{2}\right\rfloor+2\geq \left\lceil\frac{k-1}{2}\right\rceil+1.
  \end{align*}
 This contradicts to $k-i\leq \lceil\frac{k-1}{2}\rceil$. The case $k-i<\lceil\frac{k(k-1)}{n}\rceil$ is similarly not possible.  Hence $i=k-i= \frac{k}{2}= \lceil\frac{k(k-1)}{n}\rceil.$ Hence, $k$ must be even. Since $\S$ is separable at $i=k/2$, then
there are at least $k$ columns in $M_{\mathcal{S}}$ each containing at least $\frac{k}{2}$ zeros. This  excludes the cases when $n=2k+1,2k+2$ for all $k\geq 4$, or $n=2k+3$ and for $k=4$, since for these cases, the value $\mu=k(k-1)\mm n$ is strictly less than $k$.

 Next, we exclude all other cases due to the equation  $\frac{k}{2}= \lceil\frac{k(k-1)}{n}\rceil$.  Assume that $n=2k+t$ with $t\geq 4$, or $t=3$ and $k\geq 6$. Then
 \[\left\lceil\frac{k(k-1)}{n}\right\rceil=\left\lceil\frac{k(k-1)}{2k+t}\right\rceil=\left\lceil\frac{k}{2}-\frac{(t+2)k}{4k+2t}\right\rceil\leq \frac{k}{2}-1,\] thus a contradiction.

So we conclude that $n=2k$ and $k$ is even by assumption, or $n<2k$.
\end{proof}

By Lemma~\ref{ivthm1}, we only need to consider $n<2k$ or $n=2k$ and $k$ is even. When $n=2k$ and $k$ is even,  we have $\mu=k$, that is, there are exactly $k$ columns each containing $\frac{k}{2}$ zeros and $k$ columns each containing $\frac{k}{2}-1$ zeros in $M_{\mathcal{S}}$.
A valid matrix can be constructed by cyclically shifting the vectors $(\underbrace{0,\ldots,0}_{k-1},1,\ldots,1)$ and $(\underbrace{1,\ldots,1}_k,\underbrace{0,\ldots,0}_{k-1},1)$ each for $\frac{k}{2}$ times. Formally, the construction is given below.

\begin{construction}\label{con_k=d-1}
  Let $n=2k$ for any even $k\geq 4$. For $i\in[1,\frac{k}{2}]$, let $S_i=[i,i+k-2]$; for $i\in[\frac{k}{2}+1,k]$, let $S_i=[\frac{k}{2}+i,\frac{3k}{2}+i-2]\mm n$. Then the set system $\mathcal{S}=\{S_1,\ldots,S_k\}$ is good due to the good binary tree $(k/2;1,k-1;2,k-2;\ldots;k/2-1,k/2+1)$. Further, the $j$th column in $M_{\mathcal{S}}$ has $\frac{k}{2}-1$ zeros if $j\in[1,\frac{k}{2}-1]\cup [k,\frac{3k}{2}-1]\cup\{2k\}$, and $\frac{k}{2}$ zeros if $j\in [\frac{k}{2},k-1]\cup[\frac{3k}{2},2k-1]$. Hence $M_{\mathcal{S}}$ is sparse and balanced.
\end{construction}


From now on, we assume that $n<2k$.  We next prove a key ingredient in our algorithms.

\subsection{Key Operations in Algorithms}

%
%
%
%
%

Notice that in Construction~\ref{con_k=d-1}, the matrix $M_{\mathcal{S}}$ is obtained by cyclically shifting two vectors. When $n<2k$, this method fails since the number of zeros is always $k(k-1)$, but $n$ becomes smaller, which may lead to insufficient zeros in the middle columns of $M_{\mathcal{S}}$.

A circulant matrix from a binary vector is good since it trivially satisfies the sparse and good conditions.  However, it is not balanced in general. Next, we show that if $M_{\mathcal{S}}$ is a circulant binary matrix, then we can  adjust the positions of zeros restricted in the same row, so that the new $M_\mathcal{S}$ is still sparse and good, but becomes more balanced. See the following example.


\begin{example}\label{eg3.1}
  Suppose $k=6$ and $n=10$. Let $M_{\mathcal{S}}$ be the following matrix which is obtained by cyclically shifting the vector $(0, 0, 0, 0, 0,1,1,1,1,1)$. Note that $\S$ can produce a good binary tree $(3;1,5;2,4)$. We partition $M_{\mathcal{S}}$ into several blocks by drawing lines  at the $(i-1)$th  and the $(i+1)$th rows,  the $(i-1)$th  and the $(i+k-1)$th columns for $i=3$.
  \[M_{\mathcal{S}}=\left(\begin{array}{cc|cccccc|cc}
      0 & 0 & 0 & 0 & 0 & \textcolor[rgb]{0.98,0.00,0.00}{1} & 1 & 1 & 1 & 1 \\
      1 & \textcolor[rgb]{0.00,0.00,1.00}{0} & 0 & 0 & 0 & 0 & \textcolor[rgb]{0.98,0.00,0.00}{1} & 1 & 1 & 1 \\
      \hline
      1 & 1 & 0 & 0 & 0 & 0 & \textcolor[rgb]{0.00,0.00,1.00}{0} & 1 & 1 & 1 \\
      1 & 1 & 1 & \textcolor[rgb]{1.00,1.00,0.00}{0} & 0 & 0 & 0 & 0 & 1 & 1 \\
      \hline
      1 & 1 & 1 & \textcolor[rgb]{0.98,0.00,0.00}{1} & 0 & 0 & 0 & 0 & \textcolor[rgb]{1.00,1.00,0.00}{0} & 1 \\
      1 & 1 & 1 & 1 & \textcolor[rgb]{0.98,0.00,0.00}{1} & 0 & 0 & 0 & 0 & 0 \\
      \end{array}
      \right)
  \]

  All our exchanges will be restricted in the same row. We observe that the following exchanges do not destroy the good condition. The zeros in the upper left and lower right corners can be exchanged with any ones in the row where they are, except for the red ones.  For the blue and yellow zeros in the center block, we can exchange them with any one in the same row, but the two blue and the two yellow  zeros cannot be in the same column, respectively. Otherwise, they produce repeated rows. For all other zeros, let them stay where they are. For example, we can update $M_{\mathcal{S}}$ to the following $M_{\mathcal{S'}}$, where the binary tree $(3;1,5;2,4)$ from $\S'$ is still  good.
  \[M_{\mathcal{S'}}=\left(\begin{array}{cc|cccccc|cc}
      1 & 0 & 0 & 0 & 0 & 1 & 1 & 1 & 1 & 0 \\
      1 & 1 & 0 & 0 & 0 & 0 & 1 & 1 & 0 & 1 \\
      \hline
      1 & 1 & 0 & 0 & 0 & 0 & 1 & 1 & 1 & 0 \\
      1 & 0 & 1 & 1 & 0 & 0 & 0 & 0 & 1 & 1 \\
      \hline
      0 & 1 & 1 & 1 & 0 & 0 & 0 & 0 & 1 & 1 \\
      1 & 0 & 0 & 1 & 1 & 0 & 0 & 0 & 1 & 1 \\
      \end{array}
      \right)
  \]

\end{example}

We extend Example~\ref{eg3.1} to a more general case in the following lemma.

\begin{lemma}\label{thm_base}(Key Operations)
  Given positive integers  $n\geq k$ and $\alpha\in[n]$. Let $\mathcal{S}=\{S_1,\ldots,S_k\}$ be a $(k-1)$-uniform set system which corresponds to a good binary tree with the first layer index $\beta$. Suppose $S_j=[\alpha+j-1,\alpha+j+k-3] \mm n$, $j\in [\beta-1]$. For any $s\in[\beta]$, define a new set system $\mathcal{S}'$ with each set of them $\mm n$ as follows:
   \begin{itemize}
   \item[(1)] for any $t\in[s-1]$, $S_{t}'=[\alpha+s-1,\alpha+k+t-3]\cup\{a_{t,1},\ldots,a_{t,s-t}\}$ with $\alpha+k+t-2\not\in\{a_{t,1},\ldots,a_{t,s-t}\}$;
           \item[(2)] for $t\in[s+2,\beta]$, $S_{t}'=[\alpha+t-1,\alpha+k+s-2]\cup\{a_{t,1},\ldots,a_{t,t-s-1}\}$ with $\alpha+t-2\not\in\{a_{t,1},\ldots,a_{t,t-s-1}\}$;
       \item[(3)] $S_{s}'=[\alpha+s-1,\alpha+k+s-3]\cup\{a_{s,1}\}$ with $a_{s,1}\neq a_{s-1,1}$ and $S_{s+1}'=[\alpha+s+1,\alpha+k+s-2]\cup\{a_{s+1,1}\}$ with $a_{s+1,1}\neq a_{s+2,1}$.
\end{itemize}
               Then the new set system $\mathcal{S}'=\{S_1',\ldots,S_{\beta}',S_{\beta+1},\ldots,S_k\}$ is still good.

\end{lemma}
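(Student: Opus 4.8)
The plan is to show that the new set system $\mathcal{S}'$ still admits a good binary tree, and in fact one whose root index is $s$ (possibly after reindexing). The key observation is that the displayed conditions (1)--(3) are precisely engineered so that $\mathcal{S}'$ is non-intersecting and separable at $s$, and so that the two residual set systems of $\mathcal{S}'$ at $s$ are themselves of the same shape as the hypotheses of the lemma (with smaller parameters), allowing us to either recurse or quote the goodness of a suitable subtree already guaranteed by the original good binary tree with first-layer index $\beta$.

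First I would verify the structural claim at the root. From (1) and (3), every $S_t'$ with $t\in[s]$ contains the block $[\alpha+s-1,\alpha+k+s-3]\mm n$ up to the `$\cup\{a_{t,*}\}$' tail; more precisely $S_1'\cap\cdots\cap S_s'$ contains $[\alpha+s-1,\alpha+k+s-3]\cap\bigcap_{t<s}[\alpha+s-1,\alpha+k+t-3]=[\alpha+s-1,\alpha+k-2]$, which has size $k-1-(s-1)=k-s$; one then checks the reverse inclusion using that the extra elements $a_{t,*}$ for distinct $t$ are forbidden from lying in the `next' coordinate $\alpha+k+t-2$, so they cannot all pile up into a common intersection point. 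Symmetrically, from (2) and (3), $S_{s+1}'\cap\cdots\cap S_\beta'\cap S_{\beta+1}\cap\cdots\cap S_k$ — here I would use that $S_{\beta+1},\ldots,S_k$ are inherited from $\mathcal{S}$, whose good binary tree has first-layer index $\beta$, hence $S_{s+1}\cap\cdots\cap S_k$ already had size $s$ and sits `on the right' — equals a block of size $s$. Non-intersecting-ness then follows because the left common part is a block ending at $\alpha+k-2$ while the right common part is a block starting strictly later, and the $a_{t,*}$ tails are excluded from overlapping these blocks by construction. Thus $\mathcal{S}'$ is separable at $s$.

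Next I would identify the two residual set systems $\mathcal{A}'=\{S_1'',\ldots\}$ and $\mathcal{B}'=\{\ldots\}$ obtained by deleting the common blocks $A$ (size $k-s$) and $B$ (size $s$). On the left, after removing $A=[\alpha+s-1,\alpha+k-2]$ from each $S_t'$, $t\in[s]$, what remains of the block part of $S_t'$ is $[\alpha+k-1,\alpha+k+t-3]$, a block of length $t-1$, together with the tail $\{a_{t,1},\ldots,a_{t,s-t}\}$; so $\mathcal{A}'$ is an $(s-1)$-uniform system with the same prefix-block shape as in the hypothesis (with $\beta$ replaced by $s$, $k$ replaced by $s$, and a shifted $\alpha$), and with the analogous `forbidden next-coordinate' conditions on its tails — in particular $S_s''$ carries the single extra element $a_{s,1}\ne a_{s-1,1}$, matching condition (3). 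Hence either $s-1\le 1$ (base case, where the residual system degenerates to $\{\emptyset\}$ or to a single $1$-element set, which is trivially good) or we may apply the lemma inductively to conclude $\mathcal{A}'$ is good. On the right, after removing $B$, the sets $S_t'$ for $t\in[s+1,\beta]$ become blocks of appropriate length with their tails, and $S_{\beta+1},\ldots,S_k$ become exactly the right residual system of the original good tree of $\mathcal{S}$; concatenating, $\mathcal{B}'$ is again of the lemma's shape (now `mirrored', which one handles by the obvious left-right symmetry, i.e.\ reversing the coordinate order and the index order) with parameters $(k-s)$-uniform, so by induction (or by directly re-using the relevant subtree of $\mathcal{S}$'s good binary tree for the portion coming from $S_{\beta+1},\ldots,S_k$) it is good. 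Combining, $\mathcal{S}'$ is non-intersecting, separable at $s$, with both residual systems good, hence $\mathcal{S}'$ is good.

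The main obstacle I anticipate is the bookkeeping in the two reductions: one must check carefully that after deleting the common blocks $A$ and $B$, the residual sets really do have the prescribed block-plus-tail form with the block sitting in the correct position and the tail avoiding the correct coordinate, so that the induction hypothesis genuinely applies; and one must handle the asymmetry that the right part mixes modified sets $S_{s+1}',\ldots,S_\beta'$ with untouched sets $S_{\beta+1},\ldots,S_k$, which is where the assumption that $\mathcal{S}$'s good binary tree has first-layer index exactly $\beta$ (so that $S_{s+1}\cap\cdots\cap S_k$ has size $s$ and all of $S_{\beta+1},\ldots,S_k$ live entirely `to the right' of position $\alpha+k-2$) is essential. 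The degenerate cases $s=1$, $s=\beta$, and small $k$ should be dispatched first so the induction is clean.
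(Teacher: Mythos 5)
There is a genuine gap, and it sits exactly at the root of your tree. You propose to show that $\mathcal{S}'$ is separable at $s$ at the top level, which forces you to prove $|S_{s+1}'\cap\cdots\cap S_\beta'\cap S_{\beta+1}\cap\cdots\cap S_k|=s$. To get this you assert that ``$S_{s+1}\cap\cdots\cap S_k$ already had size $s$ and sits on the right,'' but that does not follow from the hypotheses. The lemma only assumes that $\mathcal{S}$ admits a good binary tree whose \emph{first-layer} index is $\beta$, which gives $|S_1\cap\cdots\cap S_\beta|=k-\beta$ and $|S_{\beta+1}\cap\cdots\cap S_k|=\beta$; it says nothing about $|S_{s+1}\cap\cdots\cap S_k|$ for $s<\beta$, and the sets $S_{\beta+1},\ldots,S_k$ (and even $S_\beta$) are not required to be intervals, so $B=S_{\beta+1}\cap\cdots\cap S_k$ need not contain the block $[\alpha+k-1,\alpha+k+s-2]$ that your computation of the right-hand intersection would require. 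If $B$ sits elsewhere, the right-hand intersection can have size far from $s$ and your root decomposition fails. A second, related problem is that even when the sizes work out, the right residual system at index $s$ mixes the modified sets $S_{s+1}',\ldots,S_\beta'$ with the untouched, structurally arbitrary $S_{\beta+1},\ldots,S_k$; the latter do not have the block-plus-tail form of the lemma's hypotheses, so neither your induction hypothesis nor a ``mirrored'' version of it applies to that child.

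The paper avoids both issues by \emph{not} moving the root index: it shows $S_1'\cap\cdots\cap S_\beta'=[\alpha+\beta-1,\alpha+k-2]$ still has size $k-\beta$, so $\mathcal{S}'$ remains separable at $\beta$ and the entire right branch of the original good tree (everything involving $S_{\beta+1},\ldots,S_k$) is inherited verbatim. The new separation at $s$ is introduced only one level down, inside the left residual system $\mathcal{A}'=\{S_i'\setminus[\alpha+\beta-1,\alpha+k-2]:i\in[\beta]\}$, where every set is an explicit union of intervals and tails and one can verify $|A_1'\cap\cdots\cap A_s'|=\beta-s$, $|A_{s+1}'\cap\cdots\cap A_\beta'|=s$, and the goodness of the two resulting residual systems directly. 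Your analysis of the left block $S_1',\ldots,S_s'$ is essentially sound and reappears (shifted down one level) in the paper's argument, but to repair the proof you should restructure it around the invariant first-layer index $\beta$ rather than re-rooting at $s$.
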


\begin{proof} Let $\bar{\S}=\{S_1,\ldots,S_{\beta}\}$, consider the matrix $M_{\bar{\mathcal{S}}}$. The left picture in Fig.~\ref{picture2} draws the positions of zeros in $M_{\bar{\mathcal{S}}}$, which are enclosed by the solid lines.  Each row of $M_{\bar{\mathcal{S}}}$ represents a set $S_i$, $i\in[\beta]$. In this picture, we assume that $\alpha=1$ and $k+\beta-2<n$ for simplicity, and the proof is still true if we remove these conditions. Let $\bar{\S'}=\{S'_1,\ldots,S'_{\beta}\}$. Then in the new matrix $M_{\bar{\mathcal{S}'}}$, we only exchange zeros in the red area with any ones in its own row, except for the ones identified by red stars. Further,  the exchange of the two red zeros are restricted by (3), so that no repeat rows are produced.


Since $S'_1\cap \cdots\cap S'_{\beta}=[\alpha+\beta-1,\alpha+k-2]$, which is of size $k-\beta$, then $\S'$ is still separable at $\beta$.
Let $\mathcal{A'}=\{A_1',\ldots,A_\beta'\}$ be a $(\beta-1)$-uniform system with $A_i'=S_i'\setminus [\alpha+\beta-1,\alpha+k-2]$ for each $i\in[\beta]$.  We only need to show that $\A'$ is good. Since $|A_1'\cap\cdots\cap A_s'|=|[\alpha+s-1,\alpha+\beta-2]|=\beta-s$ and $|A_{s+1}'\cap\cdots\cap A_\beta'|=|[\alpha+k-1,\alpha+k+s-2]|=s$, $\mathcal{A'}$ is non-intersecting and separable at $s$, and has two residual set systems $\mathcal{C'}=\{C_1',\ldots,C_s'\}$ with $C_i'=A_i'\setminus [\alpha+s-1,\alpha+\beta-2]$ for $i\in[s]$, and $\mathcal{D'}=\{D_1',\ldots,D_{\beta-s}'\}$ with $D_j'=A_{s+j}'\setminus [\alpha+k-1,\alpha+k+s-2]$ for $j\in[\beta-s]$. Hence $\mathcal{C'}$ and $ \mathcal{D'}$ are two descendants of $\mathcal{A'}$. It is easy to check that both  $\mathcal{C'}$ and $ \mathcal{D'}$ correspond to a good binary tree, and so does $\A'$. In fact, the binary tree $(s;1,\beta-1;2,\beta-2;\ldots)$ from $\A'$ is  good. See the right figure in Fig.~\ref{picture2}, where all red nodes are $\{\emptyset\}$.
\end{proof}



\begin{figure*}[!htbp]
\centering
\includegraphics[scale=0.9]{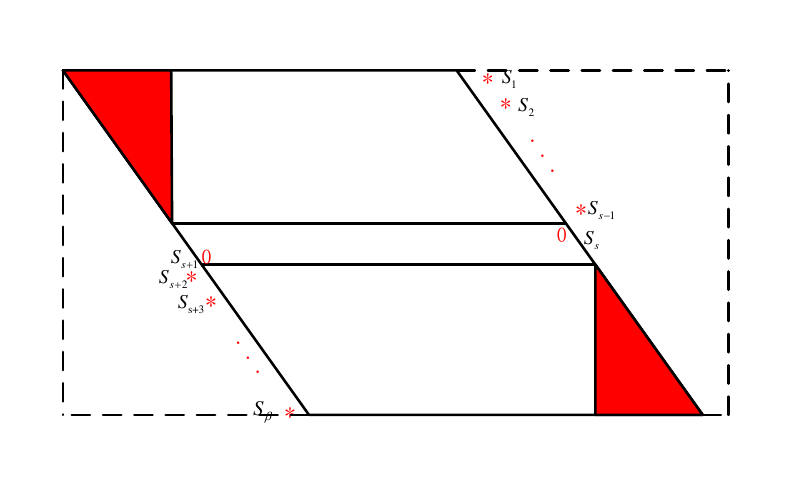}
\includegraphics[scale=0.4]{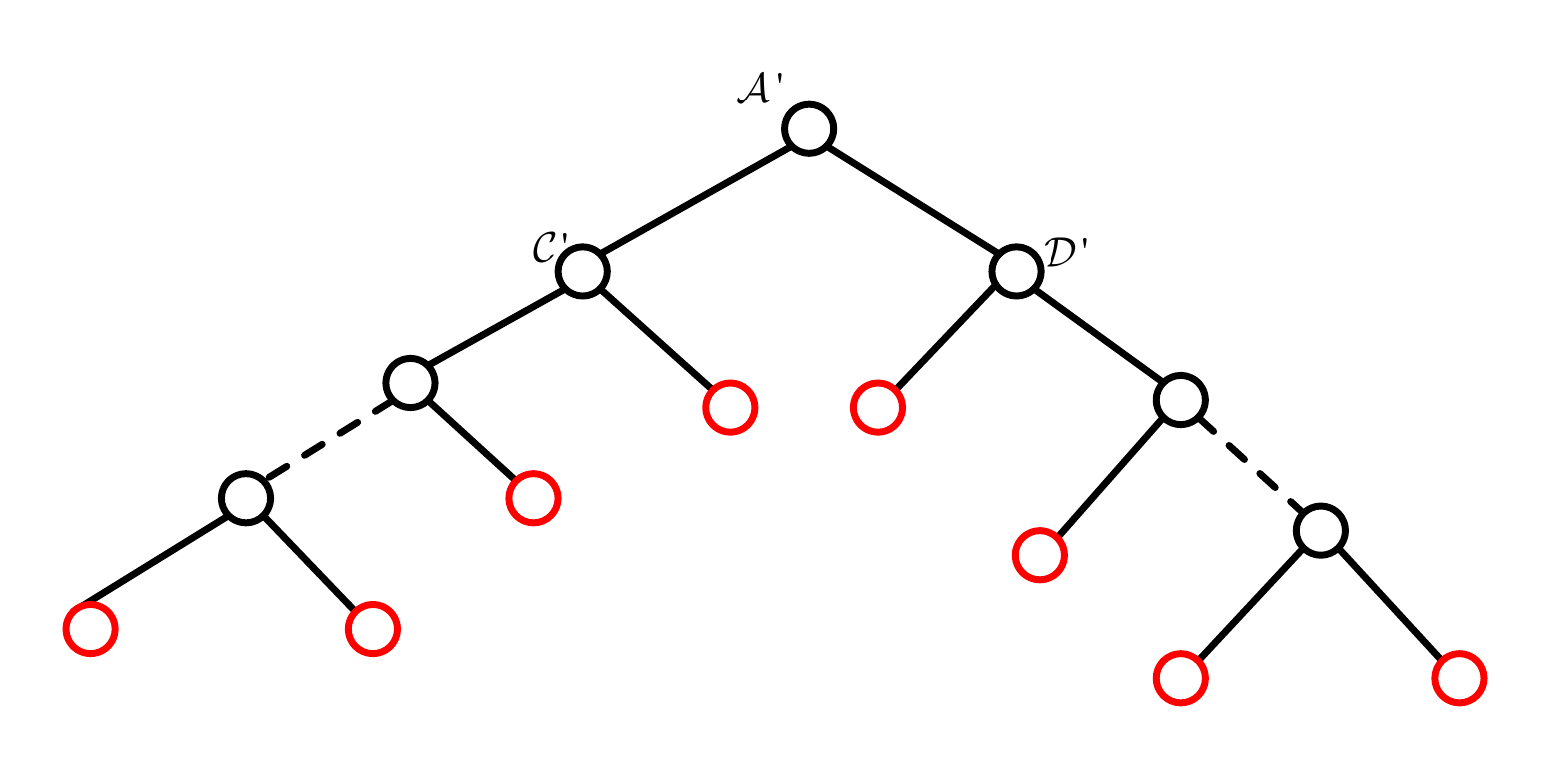}
\caption{The left picture draws the positions of zeros in $M_{\bar{\mathcal{S}}}$, while  the right picture is the  binary tree $(s;1,\beta-1;2,\beta-2;\ldots)$ from $\A'$. }\label{picture2}
\end{figure*}

 In Lemma~\ref{thm_base}, $\S$ is separable at $\beta$. So we can update the last $k-\beta$ rows simultaneously, and the resulting $\mathcal{S}'=\{S_1',\ldots,S_{\beta}',S_{\beta+1}',\ldots,S_k'\}$ is still good. Example~\ref{eg3.1} is a case of Lemma~\ref{thm_base}  by picking $\alpha=1,\beta=6$ and $s=3$.

Lemma~\ref{thm_base} tells us that, starting from a circulant block of $M_{\S}$, which satisfies the sparse and  good condition, we can modify it to a more balanced block keeping the sparse and  good property. In the next subsections, we will apply Lemma~\ref{thm_base} repeatedly in our algorithms to output a sparse, good and balanced matrix, which  could be used as the complementary support matrix of a generator matrix for an RS code.

\subsection{Constructions of $M_{\mathcal{S}}$ with $n<2k$}
In this section, we show the existence of a sparse, good and balanced $k\times n$ binary matrix for all $n<2k$. Let $n=2k-t$ with $t\in[k]$. For convenience, we always assume that $k$ is even. The constructions for odd $k$  are similar and can be provided upon requests.  For $t=1,2$, we construct the desired matrix explicitly. For $t\in[3,k]$, we show the existence of such a matrix by several algorithms of applying Lemma~\ref{thm_base}.


 Let $a\triangleq\lceil\frac{k(k-1)}{n}\rceil$ and $b\triangleq\lfloor\frac{k(k-1)}{n}\rfloor$ be the required numbers of zeros in each column. Remember $\mu\triangleq k(k-1)\mm n$ is the  required number of columns each containing $a$ zeros.

\begin{construction} When $t=1$, then $n=2k-1$, $\mu= \frac{3k}{2}-1$ and $a=\frac{k}{2}$.
   Construct  $\mathcal{S}=\{S_1,\ldots,S_k\}$ as follows: for $i\in[1,\frac{k}{2}]$, $S_i=[i,i+k-2]$; for $i\in[\frac{k}{2}+1,k]$, $S_i=[1,i-\frac{k}{2}-2]\cup [\frac{k}{2}+i-1,2k-1]$.  The good binary tree from $\S$ is $(\frac{k}{2};1,k-1;2,k-2;\ldots;\frac{k}{2}-1,\frac{k}{2}+1)$.
\end{construction}



\begin{construction}When $t=2$, then $n=2k-2$,  $\mu= n$ and $a=\frac{k}{2}$.
  Construct  $\mathcal{S}=\{S_1,\ldots,S_k\}$ as follows:  for $i\in[1,\frac{k}{2}]$, $S_i=[i,i+k-2]$; for $i\in[\frac{k}{2}+1,k]$, $S_i=[1,i-\frac{k}{2}-1]\cup [\frac{k}{2}+i-1,2k-2]$. The good binary tree from $\S$ is $(\frac{k}{2};1,k-1;2,k-2;\ldots;\frac{k}{2}-1,\frac{k}{2}+1)$.
\end{construction}

%

%

When $t\in[3,k]$, write $t=4m+u$ with $u=0,1,2,3$. Then \[\frac{k(k-1)}{2k-t}=\frac{k(k-\frac{t}{2})+(\frac{t}{2}-1)k}{2(k-\frac{t}{2})}=\frac{k}{2}+\frac{(t-2)k}{4(k-\frac{t}{2})}=\frac{k}{2}+
\frac{t-2}{4}+\frac{t^2-2t}{8k-4t}.\]
When $3\leq t<\frac{1+\sqrt{8k}}{2}$, we have $\frac{t^2-2t}{8k-4t}<\frac{1}{4}$. Then $a=\lceil\frac{k(k-1)}{n}\rceil=\frac{k+r}{2}$ and $b=\lfloor\frac{k(k-1)}{n}\rfloor=\frac{k+r-2}{2}$, where the values of $r$ are depicted in Table~\ref{tr}.

\begin{table*}
\center
\caption{The distribution of $r$.}\label{tr}
\vspace{-0.3cm}
\[\begin{array}{c|c|c|c|c}
\hline
 & u=0 & u=1 & u=2 & u=3  \\
\hline
r & \frac{t}{2} & \frac{t-1}{2} & \frac{t+2}{2} & \frac{t+1}{2} \\
\hline

\end{array}\]
\end{table*}

Now we assume that $3\leq t<\frac{1+\sqrt{8k}}{2}$, and take $u=1$ as an example to illustrate our algorithm. All the other three cases are similar. In this case, $t=4m+1$ with $m<\frac{2\sqrt{2 k}-1}{8}$, $a=\frac{k+2m}{2},b=\frac{k+2m-2}{2}$ and $\mu=\frac{3k}{2}+4m^2-3m-1$.

  Start from an initial matrix  $M_\mathcal{S}=[M_1,M_2]^\top$, where $M_1$ and $M_2$ are both  circulant $\frac{k}{2}\times n$ matrices. The first row of $M_1$ corresponds to the set $[k-1]$, and the first row of $M_2$ corresponds to the set $[k-3m+1, 2k-3m-1]\mm n=[k-3m+1,n]\cup[1,m]$. See the left picture in Fig.~\ref{rudiment8} about the zero positions of $M_\mathcal{S}$.  All zeros are in the enclosed area by solid lines.  It is easy to check that $M_\mathcal{S}$ is sparse and good, but not balanced. We will apply the key operations in Lemma~\ref{thm_base} to make $M_\mathcal{S}$ a balanced matrix. The construction is given in Algorithm~\ref{alg:u2=1}, and we illustrate it in the right picture of Fig.~\ref{rudiment8}.

  \begin{figure*}[!htbp]
\centering
\includegraphics[scale=0.4]{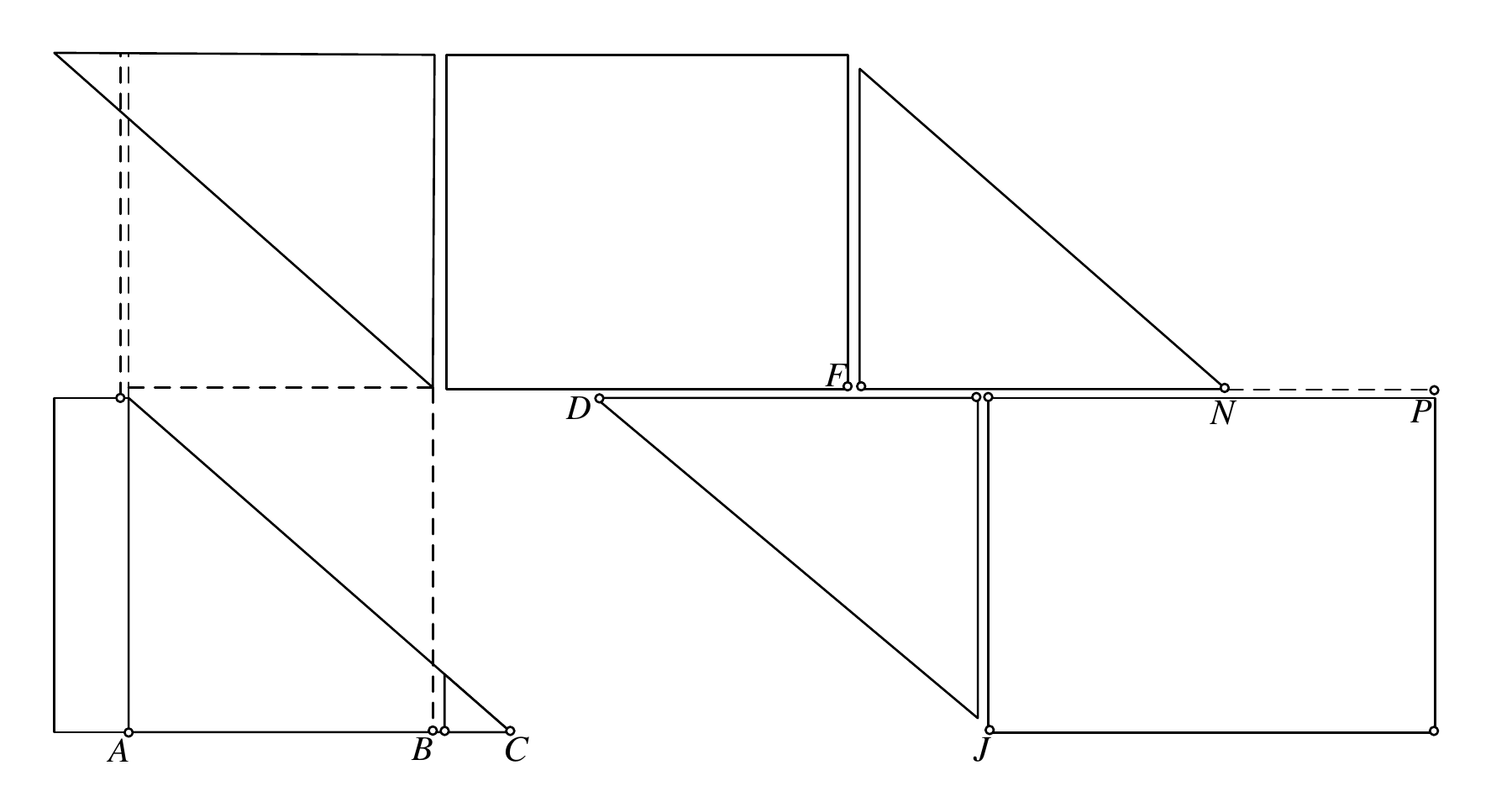}
\includegraphics[scale=0.4]{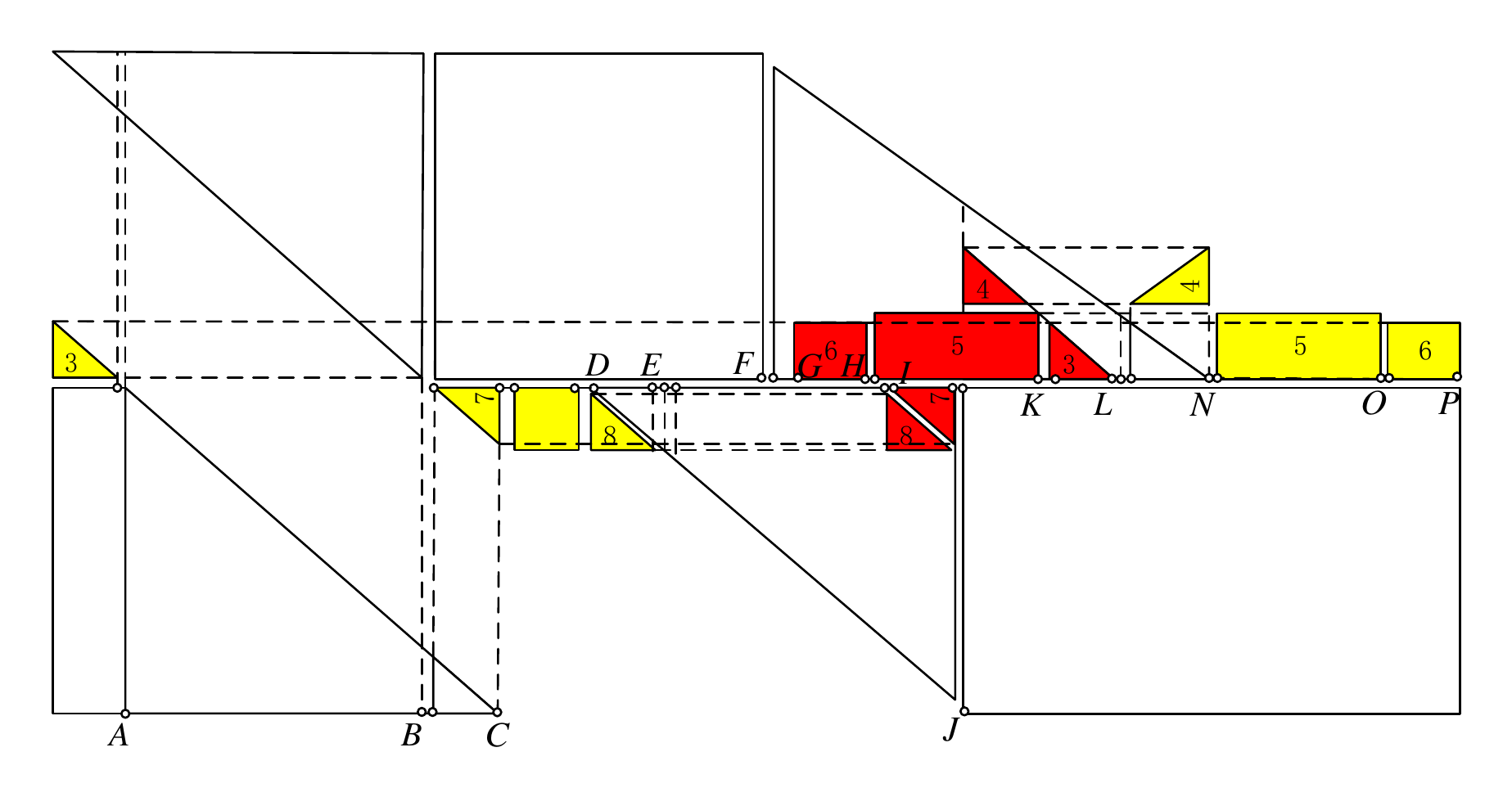}
\caption{ The left picture is the initial matrix $M_\mathcal{S}$.  The right picture is an illustration of Algorithm~\ref{alg:u2=1}.  A pair of blocks with the same index indicates the beginning (red) and ending (yellow) of a step with the same index in the algorithm. For example, in Step~\ref{alg1:u83} of Algorithm~\ref{alg:u2=1}, zeros in the red triangle with index $3$ move to the place of the yellow one.  The column coordinates of the points in these two figures are the same and listed as follows: $A=m$, $B=\frac{k}{2}$, $C=\frac{k}{2}+m-1$, $D=k-3m+1$, $E=k-2m-1$, $F=k-1$, $G=k+2m-1$, $H=\frac{3k}{2}-4m^2+m-2$, $I=\frac{3k}{2}-4m+1$, $J=\frac{3k}{2}-3m$, $K=\frac{3k}{2}-2m-1$, $L=\frac{3k}{2}-m-2$, $N=\frac{3k}{2}-2$, $O=\frac{3k}{2}+4m^2-3m-1$, $P=n$. Furthermore, the coordinates of two adjacent points differ by one, so do all the following figures.}\label{rudiment8}
\end{figure*}

%



\begin{algorithm}[!htbp]
  \caption{ Construction of $M_\mathcal{S}$ with $t=4m+1$ and $k$ is even.}
  \label{alg:u2=1}
  \begin{algorithmic}[1]
    \Require
      Integers $n,k,t=4m+1$ with $n\geq k$ and $n=2k-t$;
    \Ensure
      A sparse, good and balanced binary matrix $M_{\mathcal{S}}$.
    \State Construct the initial matrix $M_\mathcal{S}=(m_{i,j})$ as follows: for $i\in[1,\frac{k}{2}]$ and $j\in[i,i+k-2]$, $m_{i,j}=0$; for $i\in[\frac{k}{2}+1,k],j\in[1,m-1+i-\frac{k}{2}]\cup[k-3m+i-\frac{k}{2},n]$, $m_{i,j}=0$; for all the rest positions, $m_{i,j}=1$.
    \label{alg1:u81}
    \State Change the positions of zeros.
    \label{alg1:u82}
    \State For $i\in[\frac{k}{2}-m+2,\frac{k}{2}],j\in[1,i-(\frac{k}{2}-m+1)]$ and $j'\in[\frac{3k}{2}-2m,\frac{3k}{2}-2m+i-(\frac{k}{2}-m+2)]$, $m_{i,j}=0$ and $m_{i,j'}=1$.
    \label{alg1:u83}
    \State For $i\in[\frac{k}{2}-2m+2,\frac{k}{2}-m],j\in[\frac{3k}{2}-3m,\frac{3k}{2}-3m+i-(\frac{k}{2}-2m+2)]$ and $j'\in[\frac{3k}{2}-2-(i-(\frac{k}{2}-2m+2)),\frac{3k}{2}-2]$, $m_{i,j}=1$ and $m_{i,j'}=0$.
    \label{alg1:u84}
    \State For $i\in[\frac{k}{2}-m+1,\frac{k}{2}],j\in[\frac{3k}{2}-4m^2+m-1,\frac{3k}{2}-2m-1]$ and $j'\in[\frac{3k}{2}-1,\frac{3k}{2}+4m^2-3m-1]$, $m_{i,j}=1$ and $m_{i,j'}=0$.
    \label{alg1:u85}
    \State For $i\in[\frac{k}{2}-m+2,\frac{k}{2}],j\in[k+2m-1,\frac{3k}{2}-4m^2+m-2]$ and $j'\in[\frac{3k}{2}+4m^2-3m,n]$, $m_{i,j}=1$ and $m_{i,j'}=0$.
    \label{alg1:u86}
    \State For $i\in[\frac{k}{2}+1,\frac{k}{2}+m-1],j\in[i,\frac{k}{2}+m-1]$ and $j'\in[\frac{3k}{2}-4m+(i-\frac{k}{2}),\frac{3k}{2}-3m-1]$, $m_{i,j}=0$ and $m_{i,j'}=1$.
    \label{alg1:u87}
    \State For $i\in[\frac{k}{2}+2,\frac{k}{2}+m],j\in[k-3m+1,k-3m+(i-\frac{k}{2}-1)]$ and $j'\in[\frac{3k}{2}-4m,\frac{3k}{2}-4m-1+(i-\frac{k}{2}-1)]$, $m_{i,j}=0$ and $m_{i,j'}=1$.
    \label{alg1:u88}
    \State For any two columns $j\in[\frac{k}{2}+m,k-3m]$ and $j'\in[k-2m+1,\frac{3k}{2}-4m-1]$. Find a row $i\geq \frac{k}{2}+1$ satisfying $m_{i,j}=1$ and $m_{i,j'}=0$, then swap them: $m_{i,j}=0$ and $m_{i,j'}=1$.
    \label{alg1:u89}
    \State Repeat Step \ref{alg1:u89} until all columns from $[\frac{k}{2}+m,k-3m]\cup[k-2m+1,\frac{3k}{2}-4m-1]$ have $\frac{k+2m}{2}$ zeros.
    \label{alg1:u810}\\
    \Return $M_\mathcal{S}$;
  \end{algorithmic}
\end{algorithm}


Now we explain Algorithm~\ref{alg:u2=1} by following the notations in Fig.~\ref{rudiment8}. In the initial matrix $M_\mathcal{S}$, each column  with index in $[A,B]$ already has  exactly $a$ zeros.
After Steps~\ref{alg1:u83}-\ref{alg1:u86} of Algorithm~\ref{alg:u2=1}, the number of zeros of column $j\in[1,A-1]\cup[J,O]$ is $a$, and the number of zeros of column $j\in[O+1,P]$ is $b$. Notice that $|[O+1,P]|=\frac{k}{2}-4m^2-m=n-\mu$, so these are the all columns containing $b$ zeros, and we need to make all other columns to contain $a$ zeros. Thus, for each column $j$ from $[E+2,F]$, or $[F+1,G-1]$, or $[G,H]$, or $[H+1,J-1]$, we need to delete $j-(k-2m),2m-1,m,m-1$ zeros, respectively. We do not need to modify the $(E+1)$th column since it already has $a$ zeros. Steps~\ref{alg1:u87} and \ref{alg1:u88} further make the number of zeros in columns $[B+1,C]\cup[D,E]\cup[I-1,J-1]$ to $a$.


The Steps~\ref{alg1:u83}-\ref{alg1:u88} are explicit.  It is left to check the feasibility of Steps~\ref{alg1:u89} and \ref{alg1:u810}  in Algorithm~\ref{alg:u2=1}. There are $m$ zeros to be moved in for each column from $[C+1,D-1]$. For each column $j$ in $[E+2,F]$, or $[F+1,G-1]$, or $[G,H]$, or $[H+1,I-2]$, we need to move out $j-(k-2m),2m-1,m,m-1$ zeros from column $j$, respectively. Hence we only need to check whether the number of zeros we move in is the same as the number of zeros we move out. The number of zeros we need to move in is $(\frac{k}{2}-4m+1)m=\frac{k}{2}m-4m^2+m$, and the number of zeros we need to move out is
\[m(2m-1)+(2m-1)(2m-1)+(\frac{k}{2}-4m^2-m)m+(4m^2-4m+1-m)(m-1)=\frac{k}{2}m-4m^2+m.\]
Thus Steps~\ref{alg1:u89} and \ref{alg1:u810} in  Algorithm~\ref{alg:u2=1} are feasible, which finally gives us a sparse and balanced binary matrix. Since the initial matrix $M_\mathcal{S}$ is good, and all steps in Algorithm~\ref{alg:u2=1} satisfy the key operations of Lemma~\ref{thm_base}, the final new matrix $M_{\mathcal{S}}$ still corresponds to a good binary tree which is $(\frac{k}{2};\frac{k}{2}-1,\frac{k}{2}+1;\frac{k}{2}-2,\frac{k}{2}+2;\ldots;1,k-1)$.


In the sections to follow, we consider $t\geq \frac{1+\sqrt{8k}}{2}$, and  give three algorithms for constructing $M_{\mathcal{S}}$ for different ranges of $t$.  For convenience, we assume that $\mu\geq \frac{n}{2}$ and $k>176$, and the case when $\mu< \frac{n}{2}$ or $k\leq 176$ are similar. In fact, these three algorithms are generalizations of Algorithm~\ref{alg:u2=1}, but not explicit any more, since we do not know the exact values of $\mu$ and $a$.

\subsubsection{When $t\in[\lceil\frac{1+\sqrt{8k}}{2}\rceil,\frac{k}{2}+2]$}
\
\newline
\indent In this case, $n=2k-t\geq\frac{3k}{2}-2$ and $2k-a-1\geq k+a-t+1$. Start from an initial matrix  $M_{\mathcal{S}}=[M_1,M_2]^\top$, where $M_1$ and $M_2$ are both  circulant $\frac{k}{2}\times n$ matrices. The first row of $M_1$ corresponds to the set $[k-1]$, and the first row of $M_2$ corresponds to the set $[\frac{k}{2}+a+2-t, \frac{3k}{2}+a-t]\mm n=[\frac{k}{2}+a+2-t,n]\cup[1,a-\frac{k}{2}]$. See Fig.~\ref{algidea} about the zero positions of $M_{\mathcal{S}}$, where yellow areas are excluded.


%
%


  It is easy to check that $M_{\mathcal{S}}$ is sparse and good, but not balanced.
The columns with index in $[a-\frac{k}{2},\frac{k}{2}]$ each have $a$ zeros. Since  $\mu\geq \frac{n}{2}\geq\frac{3k}{4}-1$, we do not have to modify these columns. In order that each column from $[1,a-\frac{k}{2}-1]$ has $a$ zeros, we need some zero blocks as the yellow triangle labeled by $4$ in Fig.~\ref{algidea}, which can be moved from the red triangle labeled by $4$ in columns $[2k-a,\frac{3k}{2}-2]$. After this modification, any column in $[2k-a,n]$ has $\frac{k}{2}$ zeros. We need to move some zeros from the $(i,j)$th position with $i\in[2,\frac{k}{2}]$ and $j\in[k,2k-a-1]$ to columns $[2k-a,n]$, since each column from $[k,2k-a-1]$ has at least $a$ zeros.

\begin{figure}[!htbp]
\centering
\includegraphics[scale=0.6]{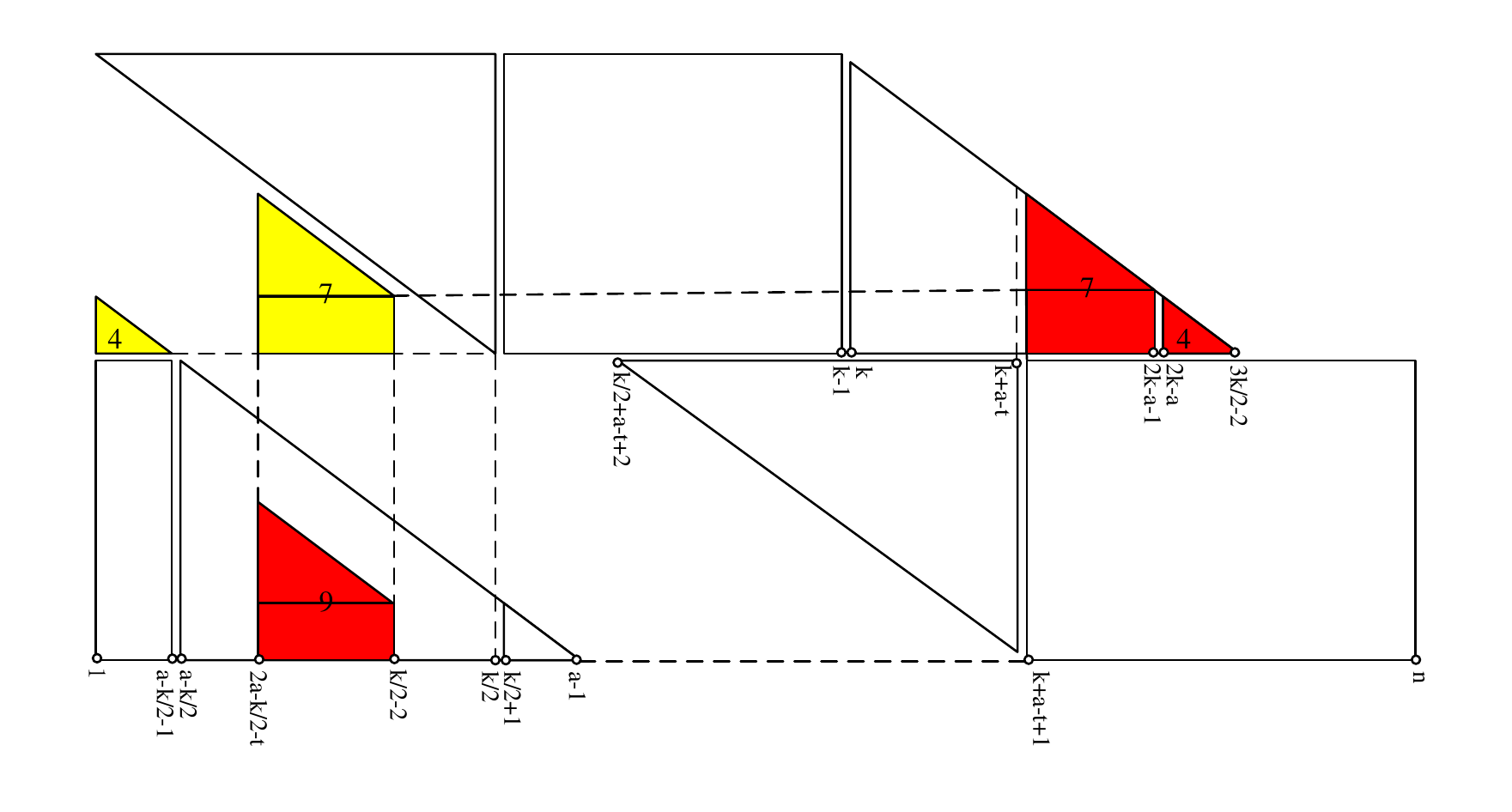}
\caption{The illustration of Algorithm~\ref{alg1:v}. The indices in the blocks are consistent with the indices of the Steps in algorithms, while the red  and yellow blocks correspond to the beginning and ending places of a movement.}\label{algidea}
\end{figure}

The most challenge of our approach is that after the above modifications, all columns in $[1,\frac{k}{2}]\cup[2k-a,n]$ has $a$ zeros, but there are still some columns in $[k+a-t+1,2k-a-1]$ having extra zeros; these extra zeros can not be moved to any other columns in row $[1,\frac{k}{2}]$ directly since it will increase extra zeros to columns in $[1,\frac{k}{2}]\cup[2k-a,n]$; and they are also can not be moved to any other columns in rows $[\frac{k}{2}+1,k]$, otherwise it may destroy the good condition of the bottom block since they are not key operations allowed in Lemma~\ref{thm_base}. 

To make the matrix balanced while preserving the good property of $M_{\mathcal{S}}$, we first move the extra zeros from columns $[k+a-t+1,2k-a-1]$ which are inside the red echelon labeled by $7$ in Fig.~\ref{algidea}, to columns $[2a-\frac{k}{2}-t,\frac{k}{2}-2]$ which are inside the yellow echelon labeled by $7$. Then the columns in $[2a-\frac{k}{2}-t,\frac{k}{2}-2]$ have extra zeros, so we move these extra zeros identified by red echelon labeled by $9$ in Fig.~\ref{algidea} to columns in $[\frac{k}{2}+1,2a-t]$. These steps are detailed in Algorithm~\ref{alg1:v}.

\begin{algorithm}[!htbp]
  \caption{ Construction of $M_\mathcal{S}$ with $t\in[\lceil\frac{1+\sqrt{8k}}{2}\rceil,\frac{k}{2}+2]$ and $k$ is even.}
  \label{alg1:v}
  \begin{algorithmic}[1]
    \Require
      Integers $k,t$ with $t\in[\lceil\frac{1+\sqrt{8k}}{2}\rceil,\frac{k}{2}+2]$ and $k$ is even;
    \Ensure
      A sparse, good and balanced binary matrix $M_{\mathcal{S}}$.
    \State Compute $n=2k-t$, $a=\lceil\frac{k(k-1)}{2k-t}\rceil,b=\lfloor\frac{k(k-1)}{2k-t}\rfloor$ and $c=k+t-2a-1$.
    \label{alg1:v0}
    \State Construct the initial matrix $M_{\mathcal{S}}=(m_{i,j})$ as follows: for $i\in[1,\frac{k}{2}]$ and $j\in[i,i+k-2]$, $m_{i,j}=0$; for $i\in[\frac{k}{2}+1,k],j\in[1,a-\frac{k}{2}-1+i-\frac{k}{2}]\cup[\frac{k}{2}+a+1-t+i-\frac{k}{2},n]$, $m_{i,j}=0$; for all the rest positions, $m_{i,j}=1$.
    \label{alg1:v1}
    \State Change the positions of zeros.
    \label{alg1:v2}
    \State For $i\in[k-a+2,\frac{k}{2}],j\in[1,i-(k-a+1)]$ and $j'\in[2k-a,2k-a-1+i-(k-a+1)]$, $m_{i,j}=0$ and $m_{i,j'}=1$.
    \label{alg1:v3}
    \State For any two columns $j_1\in[k,2k-a-1]$ and $j_2 \in[2k-a,n]$, find a row $i\in[2,k-a]$ and $j_1$ as large as possible satisfying $m_{i,j_1}=0$ and $m_{i,j_2}=1$, then swap them:  $m_{i,j_1}=1$ and $m_{i,j_2}=0$.
    \label{alg1:v4}
    \State Repeat Step \ref{alg1:v4} until all columns from $[2k-a,n]$ have $a$ zeros and all columns from $[k+a-t+1,2k-a-1]$ have at least $b$ zeros, then we record the last $j_1$.
    \label{alg1:v5}
    \State If $j_1\geq k+a-t+1$, find a row $i\in[2a-t-\frac{k}{2}+2,\frac{k}{2}]$ and two columns $s_1\in[k+a-t+1,2k-a-1]$ and $s_2\in[2a-\frac{k}{2}-t,\frac{k}{2}-2]$ with  $s_2\leq i-2$, such that $m_{i,s_1}=0$ and $m_{i,s_2}=1$, then swap them:  $m_{i,s_1}=1$ and $m_{i,s_2}=0$.
    \label{alg1:v6}
    \State Repeat Step~\ref{alg1:v6} until all columns from $[k+a-t+1,2k-a-1]$ have at least $b$ zeros.
    \label{alg1:v7}
    \State If Steps~\ref{alg1:v6} and \ref{alg1:v7} are executed, then for $i\in[2a-t+2,k]$, find two columns $s_1\in[2a-\frac{k}{2}-t,\frac{k}{2}-2]$ and $s_2\in[\frac{k}{2}+1,2a-t]$, such that $m_{i,s_1}=0$ and $m_{i,s_2}=1$, then swap them:  $m_{i,s_1}=1$ and $m_{i,s_2}=0$.
    \label{alg1:v8}
    \State Repeat Step~\ref{alg1:v8} until all columns from $[2a-\frac{k}{2}-t,\frac{k}{2}-2]$ have $a$ zeros, and all columns from $[\frac{k}{2}+1,2a-t]$ have at most $a$ zeros.
    \label{alg1:v9}
    \State For $s_1\in[\frac{k}{2}+1,2a-t+1]$ and $s_2\in[2a-t+1,k+a-t]$, find a row $i\in[\frac{k}{2}+1,\frac{k}{2}+c]$, such that $m_{i,s_1}=1$ and $m_{i,s_2}=0$ and $(i,s_1)\neq (i,a-k+i)$, then swap them: $m_{i,s_1}=0$ and $m_{i,s_2}=1$.
    \label{alg1:v10}
    \State Repeat Step~\ref{alg1:v10} until all columns have $a$ or $b$ zeros.
    \label{alg1:v11}\\
    \Return $M_{\mathcal{S}}$;
  \end{algorithmic}
\end{algorithm}

\begin{remark}\label{rmkalg1}
  In Algorithm~\ref{alg1:v}, whether a column has $a$ or $b$ zeros depends on the integers $n,k,t$.  We can make a rule in advance that in the first few steps we try to obtain the required number (or close to) of columns containing $a$ zeros, then the remaining steps of the algorithm focus on making most of the rest columns to  have $b$ zeros. All algorithms in this section will follow this rule.
\end{remark}

\begin{lemma}\label{lemalg1}
 Algorithm~\ref{alg1:v} is executable and will terminate after finitely many iterations.
\end{lemma}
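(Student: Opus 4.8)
The plan is to verify two things about Algorithm~\ref{alg1:v}: first, that every step can be carried out (i.e. whenever a step asks us to ``find a row $i$'' with a prescribed pattern of zeros and ones, such a row actually exists), and second, that each loop (Steps~\ref{alg1:v4}--\ref{alg1:v5}, \ref{alg1:v6}--\ref{alg1:v7}, \ref{alg1:v8}--\ref{alg1:v9}, \ref{alg1:v10}--\ref{alg1:v11}) runs only finitely many times, because each iteration strictly increases a suitable monovariant (e.g. the number of columns already having their target weight, or a weighted count of misplaced zeros that is bounded below). I would organize the proof step-by-step, mirroring the algorithm.

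First I would record the combinatorial accounting that was already sketched in the surrounding text: with $n=2k-t$, $a=\lceil k(k-1)/n\rceil$, $b=\lfloor k(k-1)/n\rfloor$, $c=k+t-2a-1$, the total number of zeros is $k(k-1)$ and the target profile has $\mu$ columns of weight $a$ and $n-\mu$ columns of weight $b$. I would compute, for the initial matrix $M_{\mathcal S}$ of Step~\ref{alg1:v1}, the exact column-weight function (it is piecewise linear in the column index, as in Fig.~\ref{algidea}), identify which columns are already correct, and compute the total \emph{excess} $\sum_j (w_j - \text{target}_j)^+$ of over-full columns and the total \emph{deficit} of under-full columns; these must be equal, and checking that equality is the global feasibility certificate. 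Then, for each block-move Step (\ref{alg1:v3}, and the loop bodies \ref{alg1:v4}, \ref{alg1:v6}, \ref{alg1:v8}, \ref{alg1:v10}), I would argue existence of the required row by a counting/pigeonhole argument: the source columns $j_1$ (resp.\ $s_1$) have strictly more zeros than their targets while the destination columns $j_2$ (resp.\ $s_2$) have strictly fewer, so restricted to the allowed row range there must be a row with a zero in a source column and a one in a destination column; the side conditions such as $(i,s_1)\neq(i,a-k+i)$ or $s_2\le i-2$ are handled by noting the forbidden positions are few (one per row, or a diagonal band) and the range of admissible columns is wider, using the hypotheses $t\ge\lceil(1+\sqrt{8k})/2\rceil$, $t\le k/2+2$, $\mu\ge n/2$, $k>176$ to make the slack positive. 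Termination of each loop follows because the chosen swap moves a zero from an over-full column to an under-full one within the ranges fixed by that loop, strictly decreasing the loop's deficit by $1$; since the deficit is a nonnegative integer bounded by the initial excess, the loop halts, and the stopping condition stated in the pseudocode is exactly ``deficit $=0$'', so on exit the invariant asserted in the text (e.g.\ ``all columns from $[2k-a,n]$ have $a$ zeros'') holds.

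The output of the lemma I would then phrase as: each step is well defined, the algorithm reaches \texttt{Return}, and at that point $M_{\mathcal S}$ satisfies ($P_1$) and ($P_2$); the good condition ($P_3$) is \emph{not} part of this lemma (it is argued separately, via Lemma~\ref{thm_base}, since all moves are key operations). So I would be careful to keep the scope to executability and finiteness only, matching the statement.

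The main obstacle I expect is Step~\ref{alg1:v4}--\ref{alg1:v5} together with the conditional Steps~\ref{alg1:v6}--\ref{alg1:v9}: the algorithm branches on whether the recorded $j_1$ still lies in $[k+a-t+1,2k-a-1]$ after the first loop exhausts the destinations $[2k-a,n]$, i.e.\ on whether moving excess into the tail columns alone suffices. One must show that after Steps~\ref{alg1:v4}--\ref{alg1:v5} the \emph{only} possibly-over-full columns lie in $[k+a-t+1,2k-a-1]$ (never in $[1,\tfrac k2]\cup[2k-a,n]$), that the echelon of over-full columns is ``left-justified'' so the greedy ``largest $j_1$'' rule in Step~\ref{alg1:v4} does not strand zeros, and that the capacity of the yellow echelon labeled $7$ in columns $[2a-\tfrac k2-t,\tfrac k2-2]$ is at least the residual excess — a quadratic-versus-linear inequality in $t$ that is exactly where the bound $t\ge\lceil(1+\sqrt{8k})/2\rceil$ and the assumption $k>176$ are consumed. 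Once that single capacity inequality is established, Steps~\ref{alg1:v8}--\ref{alg1:v11} are routine reshuffles with the same monovariant argument.
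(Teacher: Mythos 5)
Your proposal follows essentially the same route as the paper's proof: an explicit column-weight accounting of the initial matrix, a comparison of the total excess in the over-full column ranges against the deficit in the under-full ranges (the paper's $\epsilon_1$, $\epsilon_2$ versus $\iota$, with the branch on whether Steps~\ref{alg1:v6}--\ref{alg1:v9} execute decided by whether $\epsilon_2>\iota$), the capacity inequality for the echelon receiving the residual excess (the paper's $\delta$), and termination of each loop because every swap moves one zero from an over-full to an under-full column, bounding the iteration counts by $\iota$ and $\delta$ respectively. You correctly isolate the branch at Step~\ref{alg1:v5} and the quadratic-in-$t$ capacity inequality as the crux, which is exactly where the paper's argument concentrates its effort.
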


\begin{proof}
Algorithm~\ref{alg1:v} runs according to the agreement that all columns in $[1,\frac{k}{2}]\cup [2k-a,n]$ will have $a$ zeros, which is feasible since $\mu\geq k-\frac{t}{2}\geq \frac{k}{2}+a-t+1$. Our goal is to have $a$ or $b$ zeros in each column.
  Before Step~\ref{alg1:v4}, the number of extra zeros of all columns in $[k,k+a-t]$ or $[k+a-t+1,2k-a-1]$ is at least $\epsilon_1 =(k+t-2a-2)(a-t+1)$ or $\epsilon_2=\frac{(k+t-2a-2)(k+t-2a-1)}{2}$ respectively; the columns in $[2k-a,n]$ lacks $\iota$ zeros with $\iota=(a-t+1)(a-\frac{k}{2})$ in total. If the number of extras zeros of all columns in $[k+a-t+1,2k-a-1]$ is smaller than $\epsilon_2$, then the last  $j_1$ in Step~\ref{alg1:v5} satisfies $j_1<k+a-t+1$ and the program will skip Steps~\ref{alg1:v6}-\ref{alg1:v9} and go to Step~\ref{alg1:v10} directly, since $\epsilon_1+\epsilon_2\geq \iota$, which is obtained by  taking derivative with respect to $t$. Steps~\ref{alg1:v4} and \ref{alg1:v5} must terminate after $\iota$ iterations.



  If $\epsilon_2>\iota$,  Steps~\ref{alg1:v4} and \ref{alg1:v5} again terminate after $\iota$ iterations. Then the program will go through all Steps~\ref{alg1:v6}-\ref{alg1:v9} before Step~\ref{alg1:v10}. The difference between the number of extra zeros in columns $[k+a-t+1,2k-a-1]$ and the number of zeros that we need to move in columns $[2k-a,n]$ is at most $\delta$, where $\delta=\epsilon_2+k-2a+t-1-\iota=\frac{2a^2-3ak-2at+k^2+kt+t^2-t}{2}$.  Furthermore, before Step~\ref{alg1:v6}, $m_{i_1,s_2}=1$ for all $i_1\in[2a-t-\frac{k}{2}+2,\frac{k}{2}]$ and $s_2\in[2a-\frac{k}{2}-t,\frac{k}{2}-2]$ with $s_2\leq i_1-2$; before Step~\ref{alg1:v8}, all columns have at least $a$ zeros except columns in $[\frac{k}{2}+1,2a-t]$. Thus, Steps~\ref{alg1:v6}, \ref{alg1:v7}, \ref{alg1:v8} and \ref{alg1:v9} must terminate after $\delta$ iterations.


  Before Step~\ref{alg1:v10}, the number of zeros in the $j$th column is at most $a$ if $j\in[\frac{k}{2}+1,2a-t+1]$, and at least $a$ if $j\in[2a-t+1,k+a-t]$. We only need to adjust the positions of zeros in these columns to make the matrix balanced. Furthermore, for $i\in[\frac{k}{2}+1,\frac{k}{2}+c],j\in[2a-t+1,k+a-t]$, there are enough zeros to move, we can find a row $i$ to complete Step~\ref{alg1:v10}. Hence, Steps~\ref{alg1:v10} and \ref{alg1:v11} must terminate after finitely many iterations.
\end{proof}


\begin{lemma}
  Algorithm~\ref{alg1:v} returns a good $M_{\mathcal{S}}$.
\end{lemma}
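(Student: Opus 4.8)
The plan is to show that the matrix $M_{\mathcal{S}}$ produced by Algorithm~\ref{alg1:v} corresponds to a good binary tree, by verifying that every swap performed in the algorithm is an instance of the key operations in Lemma~\ref{thm_base}, applied to an appropriate circulant block. Since the initial matrix built in Step~\ref{alg1:v1} is a stacking of two circulant blocks, it is sparse and good by the argument preceding Theorem~\ref{thm_MDS} (the top $\frac{k}{2}$ rows give the circulant system $S_i=[i,i+k-2]$ which is separable at $\frac{k}{2}$, and likewise for the bottom block), so the task reduces to showing that goodness is preserved step by step.

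First I would fix the target binary tree, namely $(\frac{k}{2};\frac{k}{2}-1,\frac{k}{2}+1;\ldots;1,k-1)$, exactly as in the $t=4m+1$ case: the whole system is separable at $\beta=\frac{k}{2}$ because the first $\frac{k}{2}$ sets always contain the common block $[\frac{k}{2}-1,k-2]$ after the modifications (this is the analogue of the invariant ``columns in $[a-\frac{k}{2},\frac{k}{2}]$ keep their zeros'' established in Lemma~\ref{lemalg1}), and the residual $(\frac{k}{2}-1)$-uniform system on the top rows must then be shown good by induction on the layers. The heart of the verification is to check, for each group of swaps (Steps~\ref{alg1:v3}, \ref{alg1:v4}--\ref{alg1:v5}, \ref{alg1:v6}--\ref{alg1:v7}, \ref{alg1:v8}--\ref{alg1:v9}, \ref{alg1:v10}--\ref{alg1:v11}), that the rows being altered form a contiguous circulant-type block of the shape $S_j=[\alpha+j-1,\alpha+j+k-3]\bmod n$ required by the hypothesis of Lemma~\ref{thm_base}, that the zeros moved are taken from the ``movable'' positions (not the starred ones, i.e. not the unique column that would collapse the common intersection or create a repeated row), and that the restrictions of type~(3) in Lemma~\ref{thm_base} — forbidding two equal rows — are respected; for the latter I would point to the explicit side conditions in the algorithm such as $(i,s_1)\neq(i,a-k+i)$ in Step~\ref{alg1:v10} and $s_2\le i-2$ in Step~\ref{alg1:v6}, which are precisely the clauses $a_{s,1}\neq a_{s-1,1}$ and $\alpha+k+t-2\notin\{a_{t,1},\ldots\}$ of Lemma~\ref{thm_base}.

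The main obstacle I anticipate is Steps~\ref{alg1:v6}--\ref{alg1:v9}, the ``challenge'' flagged in the text: here zeros are pushed from columns $[k+a-t+1,2k-a-1]$ (which live in the bottom rows, $i\geq \frac{k}{2}+1$) leftward across the block boundary into columns $[2a-\frac{k}{2}-t,\frac{k}{2}-2]$, and then from there further into $[\frac{k}{2}+1,2a-t]$. For this to stay within Lemma~\ref{thm_base} one must argue that the affected bottom rows still form a circulant block compatible with the second residual system $\mathcal{B}'$ of the root, and that moving a zero of row $i$ into a column index $<\frac{k}{2}-1$ does not shrink the intersection $S_{i}\cap(\text{common block})$ — this is where the bound $2a-\frac{k}{2}-t\geq 1$ and the feasibility count $\epsilon_1+\epsilon_2\geq\iota$ from Lemma~\ref{lemalg1} are used, together with the observation (to be spelled out) that before Step~\ref{alg1:v6} the relevant positions are all $1$'s, so no movable zero is ever ``double-counted.'' Once all five groups of swaps are certified as key operations, goodness of the output follows by composing the conclusions of Lemma~\ref{thm_base}, and combined with the sparse-and-balanced property established in Lemma~\ref{lemalg1} this gives the claim.
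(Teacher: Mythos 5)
Your overall strategy -- certify every swap as an instance of the key operations of Lemma~\ref{thm_base} and read off a good binary tree at the end -- is exactly the paper's strategy, but two concrete things go wrong in how you instantiate it. First, you commit to the tree $(\frac{k}{2};\frac{k}{2}-1,\frac{k}{2}+1;\ldots;1,k-1)$ borrowed from the $t=4m+1$ case, i.e.\ you separate the bottom residual system at its first row. That separation cannot absorb Steps~\ref{alg1:v10}--\ref{alg1:v11}: those steps relocate zeros in \emph{all} of the rows $[\frac{k}{2}+1,\frac{k}{2}+c]$ within columns $[\frac{k}{2}+1,k+a-t]$, possibly several zeros per row, whereas under a separation at local index $1$ the row $\frac{k}{2}+1$ is one of the two ``middle'' rows of case (3) of Lemma~\ref{thm_base} and has only a single movable zero. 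The paper instead re-separates the bottom block at $\frac{k}{2}+c$ (with $c=k+t-2a-1$ computed in Step~\ref{alg1:v0}), so that rows $\frac{k}{2}+1,\ldots,\frac{k}{2}+c$ get progressively many free positions; the statement that makes this consistent with Steps~\ref{alg1:v8}--\ref{alg1:v9} (which touch rows $[2a-t+2,k]$) is the inequality $\frac{k}{2}+c<2a-t+2$, guaranteeing the two groups of modified bottom rows are disjoint and sit on opposite sides of the new separation. Neither $c$ nor this inequality appears in your argument, and this is the one genuinely nontrivial idea in the proof; the resulting tree is $(\frac{k}{2};\frac{k}{2}-1,\frac{k}{2}+c;\frac{k}{2}-2,\frac{k}{2}+1,k-1;\ldots)$, not the one you fixed.

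Second, you misread where the ``challenge'' swaps live: Step~\ref{alg1:v6} operates on \emph{top-block} rows $i\in[2a-t-\frac{k}{2}+2,\frac{k}{2}]$ (the extra zeros in columns $[k+a-t+1,2k-a-1]$ there come from the tails of the top circulant rows), not on rows $i\geq\frac{k}{2}+1$ as you assert. This matters because the paper's accounting groups Steps~\ref{alg1:v3}, \ref{alg1:v4} and \ref{alg1:v6} together as modifications of top rows $i\in[2,\frac{k}{2}]$ outside the common block $[\frac{k}{2},k-1]$, which is why they preserve the left subtree unchanged; only Steps~\ref{alg1:v8} and \ref{alg1:v10} touch the bottom block and force the new separation index $c$. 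Beyond these two points, most of the row-by-row verification in your writeup is still deferred (``to be spelled out''), so as it stands the proposal does not yet establish the lemma.
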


\begin{proof}
  The initial matrix $M_{\mathcal{S}}$ corresponds to a good binary tree $(\frac{k}{2};\frac{k}{2}-1,k-1;\frac{k}{2}-2,k-2;\ldots;1,\frac{k}{2}+1)$. In  Steps~\ref{alg1:v3}, \ref{alg1:v4} and \ref{alg1:v6}, we only refine the $(i,j)$th position with $i\in[2,\frac{k}{2}],j\in [1,a-\frac{k}{2}-1]\cup [k,n]$. By Lemma~\ref{thm_base},  the new $\mathcal{S}$ is still good with the same tree. Similarly, in Steps~\ref{alg1:v8} and \ref{alg1:v10}, for the $i$th row with $i\in[2a-t+2,k]$, we only refine the columns in $[2a-\frac{k}{2}-t,2a-t]$; for the $i$th row with $i\in[\frac{k}{2}+1,\frac{k}{2}+c]$, we only refine the columns in $[\frac{k}{2}+1,k+a-t]$. In fact,  the maximum number of extra zeros in columns $[2a-t+1,k+a-t-1]$ is at most $c$, where $c$ is defined in  Algorithm~\ref{alg1:v}. Since $\frac{k}{2}+c<2a-t+2$,  we see that each step in Algorithm~\ref{alg1:v} does some operations allowed in Lemma~\ref{thm_base}, and the final new $\mathcal{S}$ still corresponds to a good  binary tree $(\frac{k}{2};\frac{k}{2}-1,\frac{k}{2}+c;\frac{k}{2}-2,\frac{k}{2}+1,k-1;\frac{k}{2}-3,\frac{k}{2}+2,k-2;\ldots)$.
\end{proof}

\subsubsection{When $t\in[\frac{k}{2}+3, \lfloor k-\sqrt{k}\rfloor]$}
\
\newline
\indent In this case, $n<\frac{3k}{2}-2$ and $a-1\geq \frac{k}{2}+a+2-t$. Start from an initial matrix $M_{\S}=[M_1,M_2]^\top$, where $M_1$ and $M_2$ are defined in the last subsection.  See Fig.~\ref{algidea2} about the zero positions of $M_{\mathcal{S}}$, which is slightly different from Fig.~\ref{algidea} due to a small $n$.

\begin{figure}[!htbp]
\centering
\includegraphics[scale=0.6]{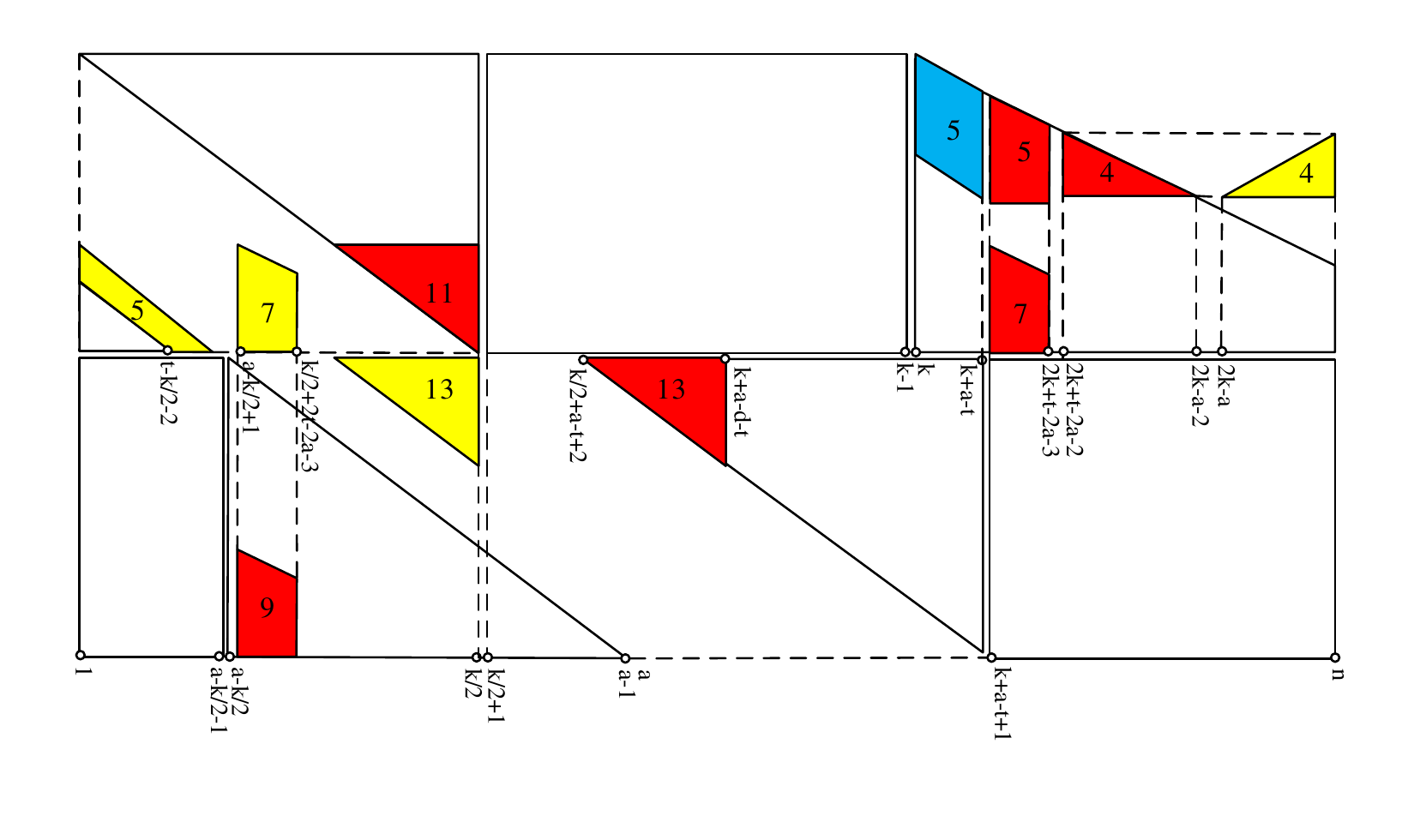}
\caption{The illustration of Algorithm~\ref{alg2:u}. The indices of blocks corresponds to the  indices of Steps in Algorithm~\ref{alg2:u}.}\label{algidea2}
\end{figure}

 Any column of $M_{\S}$ in $[a-\frac{k}{2},\frac{k}{2}]$ has $a$ zeros, which is the same as before, but each column with index in $[1,t-\frac{k}{2}-2]$ has $t-1$ zeros which is different from before. Further, $M_{\S}$ has a smaller number of columns, but it contains the same amount of zeros as before, so each column in $M_{\S}$ will have more zeros. Since $\mu\geq\frac{n}{2}\geq k-\frac{t}{2}\geq k+a-2t+4$, we can make all columns in $[a-\frac{k}{2},\frac{k}{2}]\cup [2k+t-2a-2,n]$ to have $a$ zeros. The $(2k-a-1)$th column already has exactly $a$ zeros in the initial matrix. For columns from $[2k-a,n]$,  we still need some zeros like the yellow triangle labeled by $4$ in Fig.~\ref{algidea2}, which can be moved from columns $[2k+t-2a-2,2k-a-2]$ like the red triangle labeled by $4$.  Unlike Algorithm~\ref{alg1:v}, we can not find a regular shape of zeros in the last $k-a$ columns of $M_{\S}$ to fill in the first $a-\frac{k}{2}-1$ columns, but we can move the zeros like the red echelon labeled by $5$ to the yellow $5$.

If the number of zeros of the red $5$ is more than the yellow $5$, after the above modification, all columns in $[1,\frac{k}{2}]\cup[2k+t-2a-2,n]$ have $a$ or $b$ zeros. There are still some columns in $[k+a-t+1,2k+t-2a-3]$ having extra zeros. These extra zeros can not be moved to any other column in row $[1,\frac{k}{2}]$ directly since it will increase extra zeros to  columns in $[1,\frac{k}{2}]\cup[2k+t-2a-2,n]$; and they also can not be moved to any other column in rows $[\frac{k}{2}+1,n]$, otherwise it may destroy the good condition of the bottom block since they are not operations allowed in Lemma~\ref{thm_base}.  To make the matrix balanced while keeping the good property, we first move the extra zeros in columns $[k+a-t+1,2k+t-2a-3]$ to columns $[a-\frac{k}{2}+1,\frac{k}{2}+2t-2a-3]$. This modification is illustrated in Fig.~\ref{algidea2} by moving some zeros in the red echelon labeled by $7$ to the yellow $7$. Then the columns in $[a-\frac{k}{2}+1,\frac{k}{2}+2t-2a-3]$ have extra zeros. We move these extra zeros identified by the red echelon labeled by $9$ in Fig.~\ref{algidea2} to columns in $[\frac{k}{2}+1,2a-t]$.

The challenge of our approach is when the number of zeros of the red $5$ is less than the yellow $5$. We need to bring extra zeros from the $j$th column with $j\in[k,k+a-t]$, like the blue echelon labeled by $5$. The worst thing is that the number of zeros of the red $5$ and blue $5$ is still less than the yellow $5$. Then we bring extra zeros from the red triangle labeled by $11$.  To compensate the red triangle $11$ for zeros, we  move in the yellow triangle labeled by $13$ from the red triangle  $13$. These steps  are detailed in Algorithm~\ref{alg2:u}.

\begin{algorithm}[!htbp]
  \caption{ Construction of $M_\mathcal{S}$ with $\frac{k}{2}+3\leq t\leq \lfloor k-\sqrt{k}\rfloor$ and $k$ is even.}
  \label{alg2:u}
  \begin{algorithmic}[1]
    \Require
      Integers $k,t$ with $\frac{k}{2}+3\leq t\leq \lfloor k-\sqrt{k}\rfloor$ and $k$ is even;
    \Ensure
      A sparse, good and balanced binary matrix $M_{\mathcal{S}}$.
    \State Compute $n=2k-t$, $a=\lceil\frac{k(k-1)}{2k-t}\rceil,b=\lfloor\frac{k(k-1)}{2k-t}\rfloor, c=k-2a+t-1$ and $d=\max\{a-\frac{k}{2}+2,k-a+1\}$.
    \label{alg2:u0}
    \State Construct the initial matrix $M_{\mathcal{S}}=(m_{i,j})$ as follows: for $i\in[1,\frac{k}{2}]$ and $j\in[i,i+k-2]\mm n$, $m_{i,j}=0$; for $i\in[\frac{k}{2}+1,k],j\in[1,a-\frac{k}{2}-1+i-\frac{k}{2}]\cup[\frac{k}{2}+a+1-t+i-\frac{k}{2},n]$, $m_{i,j}=0$; for all the rest positions, $m_{i,j}=1$.
    \label{alg2:u1}
    \State Change the positions of zeros.
    \label{alg2:u2}
    \State For $i\in[k+t-2a,k-a],s_1\in[2k+t-2a-2,2k+t-2a-2+i-(k+t-2a)]$ and $s_2\in[n-(i-(k+t-2a)),n]$, $m_{i,s_1}=1$ and $m_{i,s_2}=0$.
    \label{alg2:u3}
    \State For any two columns $s_1\in[1,a-\frac{k}{2}-1]$ and $s_2 \in[k,2k+t-2a-3]$, find a row $i\in[3,k-a]$ with $s_2$ as large as possible but $s_1$ as small as possible, and then $i$ is as small as possible satisfying $m_{i,s_1}=1$ and $m_{i,s_2}=0$ and $(i,s_1)\neq (s_1+1,s_1)$, then swap them:  $m_{i,s_1}=0$ and $m_{i,s_2}=1$.
    \label{alg2:u4}
    \State Repeat Step~\ref{alg2:u4} until all columns from $[1,a-\frac{k}{2}-1]$ have $a$ or $b$ zeros, then we record the last $s_2$ and go to Step~\ref{alg2:u6}; Or until all columns from $[k,2k+t-2a-3]$ have $a$ or $b$ zeros, then we go to Step~\ref{alg2:u10}.
    \label{alg2:u5}
    \State If the $s_2$ in Step~\ref{alg2:u5} satisfies $s_2\geq k+a-t+1$, find a row $i\in[2a-t-\frac{k}{2}+2,\frac{k}{2}]$ and two columns $s_1\in[a-\frac{k}{2}+1,\frac{k}{2}+2t-2a-3]$ and $s_2\in[k+a-t+1,2k+t-2a-3]$, such that $m_{i,s_1}=1$ and $m_{i,s_2}=0$ and $(i,s_1)\neq (s_1+1,s_1)$, then swap them:  $m_{i,s_1}=0$ and $m_{i,s_2}=1$.
    \label{alg2:u6}
    \State Repeat Step~\ref{alg2:u6} until all columns from $[k+a-t+1,n]$ have $a$ or $b$ zeros.
    \label{alg2:u7}
    \State If Steps~\ref{alg1:v6} and \ref{alg1:v7} are executed, then for $i\in[2a-t+2,k]$, find two columns $s_1\in[a-\frac{k}{2}+1,\frac{k}{2}+2t-2a-3]$ and $s_2\in[\frac{k}{2}+1,2a-t]$, such that $m_{i,s_1}=0$ and $m_{i,s_2}=1$, then swapping them: set $m_{i,s_1}=1$ and $m_{i,s_2}=0$.
    \label{alg2:u8}
    \State Repeat Step~\ref{alg2:u8} until all columns from $[a-\frac{k}{2}+1,\frac{k}{2}+2t-2a-3]$ have $a$ zeros, and all columns from $[\frac{k}{2}+1,2a-t]$ have at most $a$  zeros, then turn to Step~\ref{alg2:u14}.
    \label{alg2:u9}
    \State For any two columns $s_1\in[1,a-\frac{k}{2}-1]$ and $s_2\in[d,\frac{k}{2}-1]$, find a row $i\in[d,\frac{k}{2}-1]$ as large as possible satisfying $m_{i,s_1}=1$ and $m_{i,s_2}=0$ with $(i,s_1)\neq(s_1+1,s_1)$, then swap them: $m_{i,s_1}=0$ and $m_{i,s_2}=1$.
    \label{alg2:u10}
    \State Repeat Step~\ref{alg2:u10} until all columns from $[1,a-\frac{k}{2}-1]$ have $a$ or $b$ zeros.
    \label{alg2:u11}
    \State If Steps~\ref{alg2:u10} and \ref{alg2:u11} are executed, then for $s_1\in[d,\frac{k}{2}-1]$ and $s_2\in[\frac{k}{2}+a+2-t,k+a-d-t+1]$, find a row $i\in[\frac{k}{2}+1,k-d]$ as small as possible and $s_1\geq a-k+i+1$, such that $m_{i,s_1}=1$ and $m_{i,s_2}=0$, then swap them:  $m_{i,s_1}=0$ and $m_{i,s_2}=1$.
    \label{alg2:u12}
    \State Repeat Step~\ref{alg2:u12} until all columns in $[d,\frac{k}{2}-1]$ have $a$  zeros.
    \label{alg2:u13}
    \State For $s_1\in[\frac{k}{2}+1,2a-t+1]$ and $s_2\in[2a-t+1,k+a-t]$, find a row $i\in[\frac{k}{2}+1,\frac{k}{2}+c]$, such that $m_{i,s_1}=1$ and $m_{i,s_2}=0$, then swap them:  $m_{i,s_1}=0$ and $m_{i,s_2}=1$.
    \label{alg2:u14}
    \State Repeat Step~\ref{alg2:u14} until all columns have $a$ or $b$ zeros.
    \label{alg2:u15}\\
    \Return $M_{\mathcal{S}}$;
  \end{algorithmic}
\end{algorithm}


\begin{lemma}\label{lemalg2}  Algorithm~\ref{alg2:u} is executable and will terminate after finitely many iterations.
\end{lemma}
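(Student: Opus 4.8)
The plan is to argue exactly as in the proof of Lemma~\ref{lemalg1}. For each column I would keep a running count of its \emph{extra} zeros (those beyond its target, which by Remark~\ref{rmkalg1} is $a$ for $\mu$ of the columns and $b$ for the remaining $n-\mu$) and of the zeros it still \emph{lacks}. Every elementary swap carried out by the algorithm transfers one zero from an over-full column to a deficient one and so strictly decreases the nonnegative integer potential $\Phi=\sum_j(\text{number of extra zeros in column }j)$, which is $O(n^2)$ to begin with; hence each loop terminates after finitely many iterations, and the whole problem reduces to checking that whenever a step of Algorithm~\ref{alg2:u} asks for a row $i$ realizing a prescribed pattern $m_{i,s_1}=1,\ m_{i,s_2}=0$ (or its mirror), such a row actually exists.

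First I would write down the zero profile of the initial matrix of Step~\ref{alg2:u1}. With $n=2k-t$ and $\frac k2+3\le t\le\lfloor k-\sqrt k\rfloor$ one has $a-1\ge\frac k2+a+2-t$, so the configuration is the one drawn in Fig.~\ref{algidea2}; I would record how many zeros fall in each of the intervals $[1,t-\frac k2-2]$, $[t-\frac k2-1,a-\frac k2-1]$, $[a-\frac k2,\frac k2]$, $[\frac k2+1,k-1]$, $[k,k+a-t]$, $[k+a-t+1,2k+t-2a-3]$, $[2k+t-2a-2,2k-a-2]$, $\{2k-a-1\}$ and $[2k-a,n]$. The key counting inequality, playing here the role that ``$\mu\ge k-\frac t2$'' played in Lemma~\ref{lemalg1}, is $\mu\ge\frac n2\ge k-\frac t2\ge k+a-2t+4$; it guarantees that all columns of $[a-\frac k2,\frac k2]\cup[2k+t-2a-2,n]$ can be brought to exactly $a$ zeros without ever over-filling one. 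The deterministic Step~\ref{alg2:u3} (triangle~$4$) only rearranges zeros inside rows $[k+t-2a,k-a]$ among columns $[2k+t-2a-2,n]$ and is clearly executable.

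The heart of the proof is the case split at Step~\ref{alg2:u5}. I would set $\epsilon$ equal to the number of extra zeros lying in columns $[k,2k+t-2a-3]$ after Step~\ref{alg2:u3}, split as $\epsilon_1$ over $[k,k+a-t]$ and $\epsilon_2$ over $[k+a-t+1,2k+t-2a-3]$ (the analogues of $\epsilon_1,\epsilon_2$ in Lemma~\ref{lemalg1}), and let $\iota$ be the total deficit of columns $[1,a-\frac k2-1]$. If $\epsilon\ge\iota$, the loop Steps~\ref{alg2:u4}--\ref{alg2:u5} fills the first $a-\frac k2-1$ columns and exits through Step~\ref{alg2:u6}; then, mimicking the ``$\epsilon_2>\iota$'' subcase of Lemma~\ref{lemalg1}, I would bound the residual imbalance of the middle columns by an explicit quantity $\delta$ in $k,t$ and show that Steps~\ref{alg2:u6}--\ref{alg2:u9} (echelons~$7$ and $9$) erase it in at most $\delta$ swaps, each executable because before Step~\ref{alg2:u6} the required positions $(i,s_1)$ in columns $[a-\frac k2+1,\frac k2+2t-2a-3]$ all hold a $1$, and before Step~\ref{alg2:u8} every column outside $[\frac k2+1,2a-t]$ already has at least $a$ zeros. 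If instead $\epsilon<\iota$, the loop exits through Step~\ref{alg2:u10}; here, with $d=\max\{a-\frac k2+2,k-a+1\}$ and the same inequality $\mu\ge k+a-2t+4$, I would check that columns $[d,\frac k2-1]$ still carry enough movable zeros for the triangle~$11$ moves of Steps~\ref{alg2:u10}--\ref{alg2:u11}, that the lower-block columns $[\frac k2+a+2-t,k+a-d-t+1]$ carry enough for the triangle~$13$ moves of Steps~\ref{alg2:u12}--\ref{alg2:u13} compensating them, and that the at most $c=k-2a+t-1$ extra zeros then left in columns $[2a-t+1,k+a-t-1]$ are absorbed by the echelon~$14$ moves of Steps~\ref{alg2:u14}--\ref{alg2:u15} using rows $i\in[\frac k2+1,\frac k2+c]$. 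In either branch $\Phi$ reaches $0$, so the algorithm halts with every column at $a$ or $b$ zeros.

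The hard part will be, just as in Lemma~\ref{lemalg1}, not the termination count (which is automatic once a strictly decreasing integer potential is in place) but the executability accounting in the branch $\epsilon<\iota$: one must verify that at every intermediate configuration the supply of correctly signed entries demanded by Steps~\ref{alg2:u10}, \ref{alg2:u12} and \ref{alg2:u14} never runs dry. This reduces to a chain of inequalities in $k$ and $t$ that have to hold on the whole interval $\frac k2+3\le t\le\lfloor k-\sqrt k\rfloor$ under the standing hypotheses $k>176$ and $\mu\ge\frac n2$; I would prove them, as the paper does elsewhere, by viewing the relevant counts as functions of $t$ with $k$ fixed and checking monotonicity, so that only the endpoints $t=\frac k2+3$ and $t=\lfloor k-\sqrt k\rfloor$ need to be examined by hand.
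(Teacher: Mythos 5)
Your overall plan follows the same route as the paper's proof (same case split at Step~\ref{alg2:u5} driven by comparing the surplus $\epsilon_1,\epsilon_2$ of columns $[k,2k+t-2a-3]$ with the deficit $\iota$ of columns $[1,a-\frac k2-1]$, same feasibility inequality $\mu\ge\frac n2\ge k+a-2t+4$, same loop-by-loop accounting). But two things in the proposal are genuine gaps rather than omitted routine work.

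First, the termination argument via the single potential $\Phi=\sum_j(\text{extra zeros in column }j)$ is not correct as stated: several loops of Algorithm~\ref{alg2:u} do \emph{not} move a zero from an over-full column to a deficient one. In Steps~\ref{alg2:u6}--\ref{alg2:u7} the destination columns $[a-\frac k2+1,\frac k2+2t-2a-3]$ already hold exactly $a$ zeros, so each swap leaves $\Phi$ unchanged (one column's surplus is traded for another's); the surplus is only discharged later, in Steps~\ref{alg2:u8}--\ref{alg2:u9}. Likewise Step~\ref{alg2:u10} removes zeros from the exactly-balanced columns $[d,\frac k2-1]$, creating a deficit that Steps~\ref{alg2:u12}--\ref{alg2:u13} must repair. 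So no single monotone potential certifies termination; you need, as the paper does, an explicit iteration bound for each loop (e.g.\ $\iota$ for Steps~\ref{alg2:u4}--\ref{alg2:u5} in one branch, $\epsilon_2-\iota+a-\frac k2-1$ for Steps~\ref{alg2:u6}--\ref{alg2:u9}, $\frac{(k/2-d)(k/2-d+1)}{2}$ for Steps~\ref{alg2:u10}--\ref{alg2:u13}).

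Second, and more importantly, the executability in the branch that exits to Step~\ref{alg2:u10} is exactly the content of the lemma, and you have deferred it (``a chain of inequalities \dots I would prove by checking monotonicity''). The crux is that when $d=a-\frac k2+2$ the triangle of $\frac{(k-a-2)(k-a-1)}{2}$ zeros available in rows and columns $[d,\frac k2-1]$, together with $\epsilon_1+\epsilon_2$, must cover the deficit $\iota$; the paper verifies
\[
\epsilon_1+\epsilon_2+\tfrac{(k-a-2)(k-a-1)}{2}-\iota=\tfrac{(k+a-t)(k-a)-(3a-k-t)(k-t)-a-3k+3t}{2}>0
\]
by substituting the bounds $a\ge\frac k2+\frac{t-2}{4}$ and $a\le\frac k2+\frac{t-1}{2}$ in the appropriate terms. (The case $d=k-a+1$ is immediate since then $\frac k2-d=a-\frac k2-1$.) Without carrying out this computation, and the analogous supply checks for Steps~\ref{alg2:u12} and \ref{alg2:u14}, the proposal is a proof outline rather than a proof.
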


\begin{proof}
 Algorithm~\ref{alg2:u} runs according to the rule that  all columns in $[a-\frac{k}{2},\frac{k}{2}]\cup [2k+t-2a-2,n]$ will have $a$ zeros, which is feasible since $\mu\geq\frac{n}{2}\geq k-\frac{t}{2}\geq k+a-2t+4$.

  After Step~\ref{alg2:u3}, any column in $[2k+t-2a-2,n]$ does not have extra zeros.
  Before Step~\ref{alg2:u4}, the number of extra zeros of all columns in $[k,k+a-t]$ or $[k+a-t+1,2k+t-2a-3]$ is at least $\epsilon_1=(k+t-2a-2)(a-t+1)$ or $\epsilon_2=\frac{3a-3k-4ak-2at+2kt+3a^2+k^2}{2}$ respectively; the columns in $[1,a-\frac{k}{2}-1]$ lacks at most $\iota=\frac{3t-k-a-ka+kt+a^2-t^2-2}{2}$ zeros in total. If the number of extra zeros of columns in $[k,2k+t-2a-3]$ is no more than the zeros that  columns $[1,a-\frac{k}{2}-1]$ need, Steps~\ref{alg2:u4} and \ref{alg2:u5} will terminate after at most $\epsilon_1+a-t+1+\epsilon_2+k+2t-3a-3=\epsilon_1+\epsilon_2+k+t-2a-2$
    iterations.

    After that, all columns from $[k,2k+t-2a-3]$ have $a$ or $b$ zeros, and then go to Step~\ref{alg2:u10} directly. There are  $\frac{(k/2-d)(k/2-d+1)}{2}$ zeros in the columns from $d$ to  $(\frac{k}{2}-1)$ and in the rows from $d$ to $(\frac{k}{2}-1)$, where $d$ is defined in Algorithm~\ref{alg2:u}. If $d=k-a+1$, then $\frac{k}{2}-d=a-\frac{k}{2}-1$. Thus we have enough zeros to be moved out.
  If $d=a-\frac{k}{2}+2$,  there are $\frac{(k-a-2)(k-a-1)}{2}$ zeros in the columns from $(a-\frac{k}{2}+2)$ to $(\frac{k}{2}-1)$ and in the rows from $(a-\frac{k}{2}+2)$ to $(\frac{k}{2}-1)$. Then the number of extra zeros in the red $5$, blue $5$ and red $11$ is bigger than that in the yellow $5$, since
  \begin{align*}
    &\epsilon_1+\epsilon_2+ \frac{(k-a-2)(k-a-1)}{2}-\iota\\
    =&\frac{(k+a-t)(k-a)-(3a-k-t)(k-t)-a-3k+3t}{2}>0.
\end{align*}
The above inequality is obtained by assuming $a=\frac{k}{2}+\frac{t-2}{4}$ in the term $k+a-t$, and $a=\frac{k}{2}+\frac{t-1}{2}$ in the rest.
  By the above analysis, Steps~\ref{alg2:u10} and \ref{alg2:u11} must terminate after at most $\frac{(k/2-d-1)(k/2-d)}{2}$ iterations, so do Steps~\ref{alg2:u12} and \ref{alg2:u13}.


  If $\epsilon_2> \iota$, Steps~\ref{alg2:u4} and \ref{alg2:u5} must terminate after $\iota$ iterations, then the program will go to Steps~\ref{alg2:u6}, \ref{alg2:u7}, \ref{alg2:u8} and \ref{alg2:u9}. These steps will terminate after at most

  \[\epsilon_2-\iota+a-\frac{k}{2}-1=\frac{2a^2 - 3ak - 2at + 6a + k^2 + kt - 3k + t^2 - 3t }{2}\]
iterations, which equals the maximum  possible number of extra zeros in  columns $[k+a-t+1,2k+t-2a-3]$ before Step~\ref{alg2:u6}.

  After Step~\ref{alg2:u13}, the number of zeros in the $j$th column is at most $a$ if $j\in[\frac{k}{2}+1,2a-t+1]$, at least $a$ if $j\in[2a-t+1,k+a-t]$, and exactly $a$ or $b$  for all the rest columns. The goal of Steps~\ref{alg2:u14} and \ref{alg2:u15} is to make all columns in $[\frac{k}{2}+1,k+a-t]$ balanced, which will terminate after finitely many iterations.
\end{proof}

\begin{lemma}
  Algorithm~\ref{alg2:u} returns a good $M_{\mathcal{S}}$.
\end{lemma}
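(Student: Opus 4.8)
The plan is to mirror the proof of the corresponding statement for Algorithm~\ref{alg1:v}: first exhibit a good binary tree for the initial matrix produced in Step~\ref{alg2:u1}, and then verify that every swap carried out by Algorithm~\ref{alg2:u} is one of the key operations permitted by Lemma~\ref{thm_base}, so that goodness is preserved through all iterations and the final $M_{\mathcal{S}}$ corresponds to an explicit refinement of the initial tree.

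First I would record the structure of the initial matrix. Within each of the two circulant blocks $M_1,M_2$ the rows are cyclic shifts of a single interval, and one checks that $S'_1\cap\cdots\cap S'_{k/2}=[a-\tfrac{k}{2},\tfrac{k}{2}]$, a set of size $k-\tfrac{k}{2}$, so $\mathcal{S}$ is separable at $\tfrac{k}{2}$; iterating on the residual systems exactly as in Construction~\ref{con_k=d-1} and in the initial-matrix analysis preceding Lemma~\ref{lemalg1}, the initial $\mathcal{S}$ corresponds to the good binary tree $(\tfrac{k}{2};\tfrac{k}{2}-1,k-1;\tfrac{k}{2}-2,k-2;\ldots;1,\tfrac{k}{2}+1)$. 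The wrap-around caused by $n<\tfrac{3k}{2}-2$ only relabels some columns of the lower block and does not change this tree.

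Next I would treat the swaps in two groups. Steps~\ref{alg2:u3}, \ref{alg2:u4}, \ref{alg2:u6}, \ref{alg2:u10}, \ref{alg2:u12} only move zeros in rows $i\in[2,\tfrac{k}{2}]$ and within columns $j\in[1,a-\tfrac{k}{2}-1]\cup[k,n]$, i.e.\ within the residual part of the top block lying outside the common intersection $[a-\tfrac{k}{2},\tfrac{k}{2}]$; by Lemma~\ref{thm_base} each such swap leaves $\mathcal{S}$ good with the same tree, as long as no two equal rows are created. That non-coincidence is exactly what the side conditions $(i,s_1)\neq(s_1+1,s_1)$ in Steps~\ref{alg2:u4}, \ref{alg2:u6}, \ref{alg2:u10} enforce, playing the role of the constraints $a_{s,1}\neq a_{s-1,1}$ in part (3) of Lemma~\ref{thm_base}. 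Steps~\ref{alg2:u8} and \ref{alg2:u14} instead act on the lower block: for $i\in[2a-t+2,k]$ only columns in $[a-\tfrac{k}{2}+1,\tfrac{k}{2}+2t-2a-3]$ (a subinterval of $[\,\cdot\,,2a-t]$) are refined, and for $i\in[\tfrac{k}{2}+1,\tfrac{k}{2}+c]$ only columns in $[\tfrac{k}{2}+1,k+a-t]$; since the number of surplus zeros in $[2a-t+1,k+a-t-1]$ is at most $c=k-2a+t-1$ and $\tfrac{k}{2}+c<2a-t+2$ throughout the range $\tfrac{k}{2}+3\le t\le\lfloor k-\sqrt k\rfloor$, these refinements fall inside the residual blocks of the relevant subtrees, so Lemma~\ref{thm_base} again applies. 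Tracking the four subtrees created by the two extra separations — at some $s$ in the upper residual system and at $c$ in the lower one, just as in Lemma~\ref{thm_base} and the Algorithm~\ref{alg1:v} argument — then produces the final good tree as an explicit refinement of $(\tfrac{k}{2};\ldots)$.

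The main obstacle I expect is the bookkeeping in this last group. Because $n<\tfrac{3k}{2}-2$ the lower block wraps around, and several echelon moves in Steps~\ref{alg2:u6}--\ref{alg2:u9} and Steps~\ref{alg2:u10}--\ref{alg2:u13} act on columns belonging to different residual subsystems simultaneously; one must therefore pin down, for each step, exactly which node of the tree it refines, confirm the moved entries stay within that node's columns, and check they avoid the ``red-star'' positions of Lemma~\ref{thm_base} (the entries anchoring the separability of the parent node). The inequalities $\mu\ge\tfrac{n}{2}\ge k+a-2t+4$, $d\le\tfrac{k}{2}-1$, and $\tfrac{k}{2}+c<2a-t+2$ already established in Lemma~\ref{lemalg2} are precisely what makes this verification go through uniformly; granting them, each case reduces to a single application of Lemma~\ref{thm_base}, exactly as in the proof for Algorithm~\ref{alg1:v}.
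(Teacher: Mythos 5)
Your overall strategy is the paper's: exhibit the good tree $(\tfrac{k}{2};\tfrac{k}{2}-1,k-1;\ldots;1,\tfrac{k}{2}+1)$ for the initial matrix and then check that every swap is a key operation of Lemma~\ref{thm_base}. However, the bookkeeping you sketch contains concrete errors that would derail the verification if carried out as written. First, the common intersection of the top-half sets is not $[a-\tfrac{k}{2},\tfrac{k}{2}]$ (that interval has size $k-a+1$, not $k-\tfrac{k}{2}$; it is merely the set of columns that already hold $a$ zeros). The actual intersection is $[\tfrac{k}{2},k-1]$, of size $\tfrac{k}{2}$. This matters because Lemma~\ref{thm_base} requires the swaps in rows $[2,\tfrac{k}{2}]$ to avoid the columns of this separating set: with your (incorrect) intersection, your claim that Steps~\ref{alg2:u6} and \ref{alg2:u10} stay inside $[1,a-\tfrac{k}{2}-1]\cup[k,n]$ is false — Step~\ref{alg2:u6} moves zeros into columns $[a-\tfrac{k}{2}+1,\tfrac{k}{2}+2t-2a-3]$ and Step~\ref{alg2:u10} moves zeros out of columns $[d,\tfrac{k}{2}-1]$, neither of which lies in that union. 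With the correct intersection $[\tfrac{k}{2},k-1]$ these steps are fine, since all the touched columns lie in $[1,\tfrac{k}{2}-1]\cup[k,n]$; so the fix is to correct the separating set, not the algorithm.

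Second, Step~\ref{alg2:u12} does not belong in your first group: it acts on rows $[\tfrac{k}{2}+1,k-d]$ of the \emph{bottom} block (columns $[d,\tfrac{k}{2}-1]\cup[\tfrac{k}{2}+a+2-t,k+a-d-t]$), so it must be justified as a key operation on the lower residual system, alongside Steps~\ref{alg2:u8} and \ref{alg2:u14}. Relatedly, exactly one of Steps~\ref{alg2:u8} and \ref{alg2:u12} is executed, and which one it is determines the final tree: when Steps~\ref{alg2:u10}--\ref{alg2:u12} run, the second-layer indices change from $(\tfrac{k}{2}-1,k-1)$ to $(d,\tfrac{k}{2}+c)$ (using $d\geq k-a+1$), whereas in the other branch one gets $(\tfrac{k}{2}-1,\tfrac{k}{2}+c;\ldots)$ via $\tfrac{k}{2}+c<2a-t+2$. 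So the output tree is not simply a refinement of the initial one below unchanged top layers; its upper layers are genuinely re-chosen, exactly as Lemma~\ref{thm_base} permits when the separation index $s$ is moved. Once these two points are repaired, your argument coincides with the paper's.
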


\begin{proof}
  The initial matrix $M_{\mathcal{S}}$ corresponds to a good binary tree $(\frac{k}{2};\frac{k}{2}-1,k-1;\frac{k}{2}-2,k-2;\ldots;1,\frac{k}{2}+1)$. In Steps~\ref{alg1:v3} and \ref{alg2:u4}, from the $2$th row to the $(k-a)$th row, we only refine columns in $[k,n]$. In Step~\ref{alg2:u6}, we only refine the $(i,j)$th position with $i\in[2a-t-\frac{k}{2}+2,\frac{k}{2}]$ and $j\in[a-\frac{k}{2}+1,\frac{k}{2}+2t-2a-3]\cup[k+a-t+1,2k+t-2a-3]$. By Lemma~\ref{thm_base}, the new $\mathcal{S}$ is still good with the same tree. In Step~\ref{alg2:u10}, from the $d$th row to the $(\frac{k}{2}-1)$th row, we only refine columns in $[d,\frac{k}{2}-1]$, where $d$ is defined in Algorithm~\ref{alg2:u}. Since $d\geq k-a+1$,  the new $\mathcal{S}$  still corresponds to a good binary tree $(\frac{k}{2};d,k-1;d-1,d+1,k-2;d-2,d+2,k-3;\ldots)$.

  Similarly, in Step~\ref{alg2:u8}, for  rows in $[2a-t+2,k]$, we only refine  columns in $[a-\frac{k}{2}+1,\frac{k}{2}+2t-2a-3]\cup[\frac{k}{2}+1,2a-t]$; in Step~\ref{alg2:u12}, for rows in $[\frac{k}{2}+1,k-d]$, we only refine  columns in $[d,\frac{k}{2}-1]\cup[\frac{k}{2}+a+2-t,k+a-d-t]$; in Step~\ref{alg2:u14}, for  rows in $[\frac{k}{2}+1,\frac{k}{2}+c]$, we only refine  columns in $[\frac{k}{2}+1,k+a-t]$. Notice that only one of Steps~\ref{alg2:u8} and \ref{alg2:u12} will be performed. If Steps~\ref{alg2:u12} and \ref{alg2:u14} are executed, the final new $\mathcal{S}$  corresponds to a good binary tree $(\frac{k}{2};d,\frac{k}{2}+c;d-1,d+1,\frac{k}{2}+1,k-1;d-2,d+2,\frac{k}{2}+2,k-2;\ldots)$. If Steps~\ref{alg2:u8} and \ref{alg2:u14} are executed, since $\frac{k}{2}+c<2a-t+2$, it is easy to see each step in Algorithm~\ref{alg2:u} does some operations allowed in Lemma~\ref{thm_base}, and the finial new $\mathcal{S}$  corresponds to a good binary tree $(\frac{k}{2};\frac{k}{2}-1,\frac{k}{2}+c;\frac{k}{2}-2,\frac{k}{2}+1,k-1;\frac{k}{2}-3,\frac{k}{2}+2,k-2;\ldots)$.
\end{proof}

\subsubsection{When $t\in[\lceil k-\sqrt{k}\rceil, k]$}
\
\newline
\indent For $\lceil k-\sqrt{k}\rceil\leq t\leq k-1$, we give a little more explicit construction, see Algorithm~\ref{alg3:w}. When $t=k$, the matrix $M_{\mathcal{S}}$ is given by setting all entries $1$ except the diagonal entries. Let $u=k-t$, then $1\leq u\leq \lfloor \sqrt{k}\rfloor$ and $n=k+u$. Since


\[\frac{k(k-1)}{k+u}=k-\frac{k(u+1)}{k+u}=k-u-1+\frac{u(u+1)}{k+u},\]
we have $a=k-u$, $b=k-u-1$ and $\mu= u(u+1)$. 

\begin{algorithm}[!htbp]
  \caption{ Construction of $M_\mathcal{S}$ with  $\lceil k-\sqrt{k}\rceil\leq t\leq k$ and $k$ is even.}
  \label{alg3:w}
  \begin{algorithmic}[1]
    \Require
      Integers $k,u$ with $1\leq u\leq \lfloor \sqrt{k}\rfloor$ and $k$ is even;
    \Ensure
      A sparse, good and balanced binary matrix $M_{\mathcal{S}}$.
    \State Compute $n=k+u$, $a=k-u, b=k-u-1$.
    \label{alg2:w0}
    \State Construct the initial matrix $M_{\mathcal{S}}=(m_{i,j})$ as follows: for $i\in[1,k]$ and $j\in[i,i+k-2]\mm n$, $m_{i,j}=0$; for all the rest positions, $m_{i,j}=1$.
    \label{alg3:w1}
    \State Change the positions of zeros.
    \label{alg3:w2}
    \State For any two columns in $s_1\in[1,\frac{k}{2}-2]$ and $s_2\in[k,n-1]$, find a row $i\in[3,\frac{k}{2}]$ as large as possible and $i\geq s_1+2$ satisfying $m_{i,s_1}=1$ and $m_{i,s_2}=0$, then swap them: $m_{i,s_1}=0$ and $m_{i,s_2}=1$.
    \label{alg3:w3}
    \State Repeat Step~\ref{alg3:w3} until all columns from $[1,a-\frac{k}{2}-1]$ have $a$ or $b$ zeros, and all columns from $[k,n-1]$ have $a$ zeros.
    \label{alg3:w4}
    \State For any two columns $s_1\in[\frac{k}{2}-1,a-1]$ and $s_2 \in[a,k-1]$, find a row $i\in[\frac{k}{2}+1,k]$  satisfying $m_{i,s_1}=1$ and $m_{i,s_2}=0$ and $s_1\geq i-u$, then swap them: $m_{i,s_1}=0$ and $m_{i,s_2}=1$.
    \label{alg3:w5}
    \State Repeat Step~\ref{alg3:w5} until all columns  have $a$ or $b$ zeros.
    \label{alg3:w6}\\
    \Return $M_{\mathcal{S}}$;
  \end{algorithmic}
\end{algorithm}

\begin{lemma}\label{lemalg3}
  Algorithm~\ref{alg3:w} returns a good $M_{\mathcal{S}}$.
\end{lemma}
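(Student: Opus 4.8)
The plan is to verify two things: that Algorithm~\ref{alg3:w} terminates with a sparse and balanced matrix, and that the returned $M_{\mathcal S}$ is good. Since the statement only asks for goodness, I would focus on tracking, throughout the two swap phases, which good binary tree the current set system realizes, and check at each step that the zeros being moved lie in positions permitted by the Key Operations Lemma~\ref{thm_base}. First I would record the starting point: the initial matrix in Step~\ref{alg3:w1} is the circulant whose $i$th row is the set $[i,i+k-2]\mm n$. This is non-intersecting, and since every consecutive window of length $k-1$ in $\bbZ_n$ with $n=k+u$ and $u\leq\lfloor\sqrt k\rfloor<k-1$ intersects its neighbours in a long interval, $\mathcal S$ is separable at $\frac k2$; iterating, one checks it realizes the good binary tree $(\frac k2;\frac k2-1,\frac k2+1;\dots;1,k-1)$, exactly as in the earlier algorithms. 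Here $a=k-u$, $b=k-u-1$, $\mu=u(u+1)$, and the first $a-\frac k2-1$ columns and last $u-1$ columns are the deficient ones that the algorithm must repair.

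Next I would analyze Phase~1 (Steps~\ref{alg3:w3}--\ref{alg3:w4}). Here we only ever flip entries $m_{i,s_1}$ with $i\in[3,\frac k2]$, $s_1\in[1,\frac k2-2]$ and $m_{i,s_2}$ with $s_2\in[k,n-1]$; moreover the constraint $i\geq s_1+2$ guarantees we never touch the diagonal-band zero $m_{s_1+1,s_1}$ that pins down the tree structure, and we never create a repeated row. This is precisely the pattern of Lemma~\ref{thm_base} applied to the top $\frac k2$ rows with $\alpha=1$, $\beta=\frac k2$: we are redistributing the "upper-left triangle'' zeros and the "last columns'' zeros within their own rows, subject to the key-operation restrictions. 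Hence after Phase~1 the set system still realizes a good tree of the form $(\frac k2;\dots)$. The feasibility count — that the zeros available to move from columns $[k,n-1]$ (there are $\binom{u}{2}$-ish of them arranged in a triangle) plus those in the upper-left triangle suffice to bring columns $[1,a-\frac k2-1]$ up to $a$ or $b$ and columns $[k,n-1]$ up to $a$ — is the routine arithmetic I would defer, using $\mu=u(u+1)$ and $u\leq\sqrt k$.

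Then Phase~2 (Steps~\ref{alg3:w5}--\ref{alg3:w6}) operates only on rows $i\in[\frac k2+1,k]$ and columns $s_1\in[\frac k2-1,a-1]$, $s_2\in[a,k-1]$, with the guard $s_1\geq i-u$ ensuring we stay inside the band of that row's original window and do not collide with the structural diagonal zero. Since the top $\frac k2$ rows were left separable at $\frac k2$ with interior interval $[\frac k2-1,\,k-2]\mm n$ already of the right size, all remaining imbalance lives in columns $[\frac k2,\,k-1]$, which are exactly the columns on which the bottom half acts; these moves are again instances of the Key Operations on the residual subsystem, so the good tree is preserved. I would close by assembling the final tree explicitly — something like $(\frac k2;\frac k2-1,\frac k2+1;\dots;1,k-1)$ after the dust settles — and noting all leaves are $\{\emptyset\}$, so $M_{\mathcal S}$ is good.

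The main obstacle I anticipate is Phase~2's interaction with goodness: unlike Phase~1, the columns $[\frac k2-1,a-1]$ being emptied lie inside the "interior interval'' that the good tree at level $\frac k2$ relies on, so I must argue carefully that the guard $s_1\geq i-u$ together with $u$ being small keeps every modified row a genuine window-plus-scattered-points set of the shape allowed by clauses (1)--(3) of Lemma~\ref{thm_base} applied to the appropriate residual subsystem, rather than destroying the intersection pattern at some level. Getting the index bookkeeping to line up so that "only operations permitted by Lemma~\ref{thm_base}'' is literally true at every iteration — and in particular that no two rows ever become equal — is where the real work lies; the termination/counting part is a straightforward consequence of the $\mu=u(u+1)$ identity.
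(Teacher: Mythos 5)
Your proposal matches the paper's argument: you start from the same observation that the initial circulant realizes the good tree $(\frac{k}{2};\frac{k}{2}-1,\frac{k}{2}+1;\ldots;1,k-1)$, and you justify goodness of the output exactly as the paper does, by checking that each swap in Steps~\ref{alg3:w3}--\ref{alg3:w4} (top block, guard $i\geq s_1+2$) and Steps~\ref{alg3:w5}--\ref{alg3:w6} (bottom block, guard $s_1\geq i-u$) is an operation permitted by Lemma~\ref{thm_base}, so the tree is unchanged. The only content you defer --- that the surplus $\frac{u(u-1)}{2}$ fits into the $\frac{k}{2}-2$ available receiving columns, which the paper verifies for $k\geq 16$ and treats $k<16$ separately --- is indeed the routine count you identify.
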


\begin{proof} Algorithm~\ref{alg3:w} runs according to the rule that  all columns in $[a,k-1]\cup [k,n-1]$ will have $a$ zeros, which is feasible since $\mu=u(u+1)\geq 2u$. The initial matrix  $M_{\mathcal{S}}$ in Algorithm~\ref{alg3:w} is a circulant matrix with the first row corresponding to the set $[1,k-1]$.
The $j$th column of $M_{\mathcal{S}}$ has $b$ zeros if $j\in [1,a-1]$, $j$ zeros if $j\in[a,k-1]$ and  $2k-j-1$ zeros if $j\in[k,n]$.
Observe that  columns in $[a,k-1]$ have  $\frac{u(u-1)}{2}$ extra zeros in total,   and so do columns in $[k,k+u-1]$.

  In Step~\ref{alg3:w3}, we  remove the extra zeros in columns $[k,n-1]$ to columns $[1,\frac{k}{2}-2]$. Notice that we can only move one zero to each column in $[1,\frac{k}{2}-2]$. Since $\frac{u(u-1)}{2}\leq \frac{k-\sqrt{k}}{2}\leq\frac{k}{2}-2$ for $k\geq 16$,  Step~\ref{alg3:w4} will terminate after $\frac{u(u-1)}{2}$ iterations.  By Lemma~\ref{thm_base}, the new $M_{\mathcal{S}}$ still corresponds to a good binary tree $(\frac{k}{2};\frac{k}{2}-1,\frac{k}{2}+1;\frac{k}{2}-2,\frac{k}{2}+2;\ldots;1,k-1)$. Same analysis to Steps~\ref{alg3:w5} and \ref{alg3:w6}, and the good binary tree does not change. When $k<16$, it is easy to construct a sparse, good and balanced $M_{\mathcal{S}}$.
\end{proof}

\subsection{The Complexity of the Algorithms}

In Algorithms~\ref{alg:u2=1}--\ref{alg3:w}, each initial matrix $M_{\mathcal{S}}$ satisfies the good condition, and each step satisfies the conditions of Lemmas~\ref{thm_base}, so we do not need to take time for the verification of the good condition. According to the proofs of Lemmas~\ref{lemalg1}, \ref{lemalg2}, and \ref{lemalg3}, and the analysis of Algorithm~\ref{alg:u2=1}, it is obvious that Algorithms~\ref{alg:u2=1}--\ref{alg3:w} run in polynomial time in $k$ and $n$.

\section{Conclusion}\label{con}

To conclude, we first  present a new sufficient constraint on the zero patterns so that an $[n,k]_q$ MDS code exists with a sparse generator matrix satisfying the given zero pattern for all $q\geq n$. Then based on this constraint, we construct an $[n,k]_q$ MDS code with a sparse and balanced generator matrix for all $q\geq n$ provided that $n\leq 2k$, by designing several polynomial-time algorithms in $k$ and $n$. The condition $n\leq 2k$ is restricted from the sufficient constraint of the  zero patterns. So we need a new method to deal with the case when $n> 2k$.
Further, it is interesting to consider a smaller field size, that is $q=n-1$. We leave these problems for future study.

\section*{Appendix}

\begin{lemma*}[Lemma~\ref{SBGM1}]
  Suppose that the $(k-1)$-uniform set system $\mathcal{S}=\{S_1,S_2,\ldots,S_k\}$  is non-intersecting and separable at some $i\in[k-1]$. Then
  \begin{gather*}
    C(\P)=\begin{bmatrix}
      C(\P_1) & 0\\
      0 & C(\P_2)
    \end{bmatrix}
    \begin{bmatrix}
      C(\Q_1)\\
      C(\Q_2)
    \end{bmatrix}.
  \end{gather*}
\end{lemma*}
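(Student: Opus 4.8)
The plan is to interpret the matrix identity as an encoding of a polynomial factorization. Recall that for $j\in[i]$ we have $P_{S_j}(x) = P_{S_j'}(x)\cdot f_0(x)$, since $S_j = S_j' \sqcup A$ and $f_0 = \prod_{u\in A}(x-a_u)$; similarly for $j\in[i+1,k]$ we have $P_{S_j}(x) = P_{S_j'}(x)\cdot g_0(x)$. So the first $i$ polynomials in $\P$ are obtained by multiplying the polynomials of $\P_1$ (which have degree $\le i-1$, since $\A$ is $(i-1)$-uniform) by the fixed polynomial $f_0$ of degree $k-i$, and the last $k-i$ polynomials in $\P$ are obtained by multiplying the polynomials of $\P_2$ (degree $\le k-i-1$) by $g_0$ of degree $i$. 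Each resulting product has degree $\le k-1$, so all the coefficient vectors live in the same ambient space $\bbF_q^k$ indexed by $x^{k-1},\ldots,x^0$, and $C(\P)$ is well-defined.

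First I would record the elementary fact that multiplying a polynomial by a fixed polynomial is a linear map on coefficient vectors given by a (generalized) convolution/Toeplitz matrix. Concretely, if $h(x)$ has degree $e$ and $p(x)$ has degree $\le d-1$, then writing the coefficient row vector $C(p)$ of length $d$ (w.r.t. $x^{d-1},\ldots,x^0$) and the coefficient row vector $C(hp)$ of length $d+e$ (w.r.t. $x^{d+e-1},\ldots,x^0$), we have $C(hp) = C(p)\cdot T_h$ where $T_h$ is the $d\times(d+e)$ matrix whose rows are successive shifts of the coefficient sequence of $h$. The key observation is then that $T_{f_0}$ is exactly $C(\Q_1)$: the rows of $C(\Q_1)$ are the coefficient vectors of $x^{i-1}f_0, x^{i-2}f_0,\ldots,f_0$, each padded to length $k$, and these are precisely the successive shifts of the coefficients of $f_0$ — i.e. $C(\Q_1)$ is the $i\times k$ Toeplitz/shift matrix implementing "multiply by $f_0$" from degree-$(i-1)$ polynomials to degree-$(k-1)$ polynomials. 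Likewise $C(\Q_2)$ is the $(k-i)\times k$ shift matrix implementing "multiply by $g_0$" from degree-$(k-i-1)$ polynomials to degree-$(k-1)$ polynomials.

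With these two identifications in hand, the lemma is immediate by block computation. For $j\in[i]$, the $j$th row of $C(\P)$ is $C(P_{S_j}) = C(P_{S_j'}\cdot f_0) = C(P_{S_j'})\cdot C(\Q_1)$, and $C(P_{S_j'})$ is the $j$th row of $C(\P_1)$; so the top $i$ rows of $C(\P)$ equal $C(\P_1)\cdot C(\Q_1) = [C(\P_1)\ \ 0]\cdot[C(\Q_1);\,C(\Q_2)]$. Symmetrically the bottom $k-i$ rows equal $C(\P_2)\cdot C(\Q_2) = [0\ \ C(\P_2)]\cdot[C(\Q_1);\,C(\Q_2)]$. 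Stacking the two blocks gives exactly the claimed factorization. I expect the only real point requiring care is the bookkeeping of indices and zero-padding: making sure that "degree $\le i-1$ times $f_0$ of degree $k-i$" lands in degree $\le k-1$, that the shift matrix $C(\Q_1)$ has the right dimensions $i\times k$, and that the padding conventions for $C(\P_1)$ (an $i\times i$ matrix) are compatible with left-multiplying the $i\times k$ block $[C(\Q_1);C(\Q_2)]$ — i.e. that the block $[C(\P_1)\ 0]$ is $i\times k$ with the zero block accounting for the columns $C(\Q_2)$ occupies. Everything else is a routine consequence of the distributive law for polynomial multiplication and the non-intersecting hypothesis $A\cap B=\emptyset$ (which guarantees $f_0$ and $g_0$ share no roots and that the factorizations $P_{S_j} = P_{S_j'}f_0$, resp. $P_{S_j'}g_0$, are genuine).
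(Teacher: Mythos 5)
Your proposal is correct and follows essentially the same route as the paper: both proofs expand $P_{S_j}=f_0\cdot P_{S_j'}=\sum_{\ell}c_{j,\ell}\,x^{i-\ell-1}f_0$ and read off that the top block of $C(\P)$ equals $C(\P_1)C(\Q_1)$ (your ``Toeplitz/shift matrix for multiplication by $f_0$'' is exactly the paper's coefficient computation $[x^e]P_{S_j}=\sum_{\ell}c_{j,\ell}[x^e](x^{i-\ell-1}f_0)$), with the bottom block handled symmetrically. The only cosmetic difference is that the non-intersecting hypothesis is not actually needed for this factorization (only $A\subseteq S_j$ for $j\in[i]$ is used here; $A\cap B=\emptyset$ matters for Lemma~\ref{SBGM2}), but this does not affect correctness.
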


\begin{proof}
  For $j\in[i]$, let $P_{S_j'}(x)=\prod_{\ell\in S_j^\prime}(x-a_\ell)=c_{j,0}x^{i-1}+c_{j,1}x^{i-2}+\cdots+c_{j,i-1}$. Then
  \begin{align*}
    P_{S_j}=&f_0\cdot P_{S_j'}=c_{j,0}x^{i-1}f_0+c_{j,1}x^{i-2}f_0+\cdots+c_{j,i-1}f_0.
  \end{align*}
  So the coefficient of $x^{e}$ in $P_{S_j}$ is $[x^e]P_{S_j}=\sum_{\ell=0}^{i-1} c_{j,\ell} \times [x^{e}](x^{i-\ell-1}f_0)$, $e\in [0,k-1]$. Since $[x^e]P_{S_j}$ is the $(j,k-e)$th entry of $C(\{P_{S_1},\ldots,P_{S_i}\})$, then

  \begin{align*}
    &C(\{P_{S_1},\ldots,P_{S_i}\})\\
    =&\begin{bmatrix}
      c_{1,0} &\cdots &c_{1,i-1}\\
      c_{2,0} &\cdots &c_{2,i-1}\\
      \vdots&& \vdots\\
      c_{i,0} &\cdots &c_{i,i-1}
    \end{bmatrix}
    \begin{bmatrix}
      [x^{k-1}](x^{i-1}f_0) & \cdots & [x^0](x^{i-1}f_0)\\
      [x^{k-1}](x^{i-2}f_0) & \cdots & [x^0](x^{i-2}f_0)\\
      \vdots&& \vdots\\
      [x^{k-1}]f_0 & \cdots & [x^0]f_0\\
    \end{bmatrix}\\
    =&C(\P_1)C(\Q_1).
  \end{align*}
  The bottom part is similar. We can get $C(\{P_{S_{i+1}},\ldots,P_{S_k}\})=C(\P_2)C(\Q_2)$. This completes the proof by noting that $C(\P)=(C(\{P_{S_1},\ldots,P_{S_i}\}) \; C(\{P_{S_{i+1}},\ldots,P_{S_k}\}))^T$.
\end{proof}

%
%

\begin{lemma*}[Lemma~\ref{SBGM2}]
  The determinant of $[C(\Q_1)\; C(\Q_2)]^T$ is nonzero. In particular  $\det([C(\Q_1)$ $C(\Q_2)]^T)=\prod_{u\in A,v\in B}(a_u-a_v)$.
\end{lemma*}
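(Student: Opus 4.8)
The plan is to evaluate $\det N$, where $N$ is the $k\times k$ matrix $\bigl[\begin{smallmatrix}C(\Q_1)\\ C(\Q_2)\end{smallmatrix}\bigr]$ from Lemma~\ref{SBGM1}, by right‑multiplying $N$ by a Vandermonde matrix at the $k$ evaluation points $\{a_u:u\in A\}\cup\{a_v:v\in B\}$ and exploiting the fact that the rows of $N$ are shifts of the coefficient vectors of $f_0$ and $g_0$. Recall $\deg f_0=|A|=k-i$ and $\deg g_0=|B|=i$, that $f_0,g_0$ are monic with root sets $\{a_u:u\in A\}$ and $\{a_v:v\in B\}$, and that $A\cap B=\varnothing$ because $\S$ is non‑intersecting; hence the $k$ points above are pairwise distinct. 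First I would record that the rows of $N$ are, in order, the monomial‑coefficient vectors of $x^{i-1}f_0,\dots,f_0,x^{k-i-1}g_0,\dots,g_0$, each of degree at most $k-1$, so $N$ is a well‑defined square matrix. Let $V$ be the $k\times k$ matrix whose column attached to a point $a$ is $(a^{k-1},\dots,a,1)^{\mathsf T}$, with the points listed as $a_{u_1},\dots,a_{u_{k-i}}$ then $a_{v_1},\dots,a_{v_i}$; this is the restriction to $k$ columns of the Vandermonde matrix of Section~\ref{pre}, and $\det V$ is a sign times the product of all pairwise differences of these $k$ points.

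The entry of $NV$ in row $r$ and column $c$ is the $r$‑th polynomial of the list evaluated at the $c$‑th point. The key structural fact is that $NV$ is block anti‑triangular: the first $i$ polynomials are multiples of $f_0$, which vanishes at every $a_u$ with $u\in A$, so the top‑left $i\times(k-i)$ block of $NV$ is zero; the last $k-i$ polynomials are multiples of $g_0$, which vanishes at every $a_v$ with $v\in B$, so the bottom‑right $(k-i)\times i$ block is zero. Therefore $\det(NV)=\pm\det P\cdot\det Q$, where $P$ is the $i\times i$ matrix with $(r,c)$‑entry $a_{v_c}^{\,i-r}f_0(a_{v_c})$ and $Q$ is the $(k-i)\times(k-i)$ matrix with $(r,c)$‑entry $a_{u_c}^{\,k-i-r}g_0(a_{u_c})$. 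Pulling $f_0(a_{v_c})$ out of the $c$‑th column of $P$ leaves a row‑reversed Vandermonde matrix in the points of $B$, so $\det P=\pm\bigl(\prod_{v\in B}f_0(a_v)\bigr)\prod_{v<v'\text{ in }B}(a_{v'}-a_{v})$, and symmetrically for $\det Q$.

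Finally I would divide: $\det N=\det(NV)/\det V$. In this quotient the two Vandermonde factors on the points of $A$ and of $B$ cancel the ``within‑block'' parts of $\det V$; one copy of $\prod_{u\in A,v\in B}(a_u-a_v)$ cancels the ``cross'' part of $\det V$; and since $\prod_{v\in B}f_0(a_v)=\prod_{v\in B}\prod_{u\in A}(a_v-a_u)$ and $\prod_{u\in A}g_0(a_u)=\prod_{u\in A}\prod_{v\in B}(a_u-a_v)$, the product of these two equals $\pm\bigl(\prod_{u\in A,v\in B}(a_u-a_v)\bigr)^{2}$, which leaves exactly one surviving factor $\prod_{u\in A,v\in B}(a_u-a_v)$. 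This gives the claimed value, and nonvanishing is then immediate since all $a_j$ are distinct and $A,B$ are disjoint. The one place demanding care is the bookkeeping of signs — the decreasing‑power column convention of $V$, the row reversals inside $P$ and $Q$, and the $(-1)^{i(k-i)}$ from the anti‑triangular block swap — which I expect to cancel exactly once $k$ is substituted for $i+(k-i)$, so that no stray sign remains; checking this cancellation is the main computational step. As a shortcut avoiding it altogether, one may observe that $\bigl[\begin{smallmatrix}C(\Q_1)\\ C(\Q_2)\end{smallmatrix}\bigr]$ is by construction the Sylvester matrix of the monic polynomials $f_0$ and $g_0$, so its determinant is the resultant $\mathrm{Res}(f_0,g_0)=\prod_{u\in A}g_0(a_u)=\prod_{u\in A,v\in B}(a_u-a_v)$, which is nonzero precisely because $\gcd(f_0,g_0)=1$.
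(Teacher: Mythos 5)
Your proof is correct, but it takes a genuinely different route from the paper's. The paper evaluates $\det([C(\Q_1)\;C(\Q_2)]^T)$ by an explicit sequence of elementary row operations: at step $r$ it combines the rows to replace $C(x^{k-i-r}g_0)$ by $C\bigl(f_r\prod_{t=1}^i(a_{u_r}-a_{v_t})\bigr)$, peels off the factor $\prod_{t}(a_{u_r}-a_{v_t})$, and propagates, until after $k-i$ steps the matrix becomes $\prod_{u,v}(a_u-a_v)$ times $[C(x^{k-1})\;\cdots\;C(1)]^T$; this is in effect a from-scratch, Euclidean-algorithm-style derivation of the resultant product formula, self-contained and determinant-preserving at every stage. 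You instead identify the matrix for what it is. Your Sylvester/resultant shortcut is the cleanest version: the rows are exactly the $i$ shifts of $f_0$ (degree $k-i$) followed by the $k-i$ shifts of $g_0$ (degree $i$), so the matrix is the Sylvester matrix of the monic coprime polynomials $f_0,g_0$, and $\det = \mathrm{Res}(f_0,g_0)=\prod_{u\in A}g_0(a_u)=\prod_{u\in A,v\in B}(a_u-a_v)$, sign included (check the $2\times2$ case $f_0=x-a$, $g_0=x-b$ to confirm the convention). Your Vandermonde route --- right-multiplying by $V$, using that the $f_0$-rows vanish on $A$ and the $g_0$-rows vanish on $B$ to get a block anti-triangular product, then dividing by $\det V$ --- is also sound and has the virtue of explaining \emph{why} the cross-differences survive while the within-block Vandermonde factors cancel; its only incomplete point is the sign bookkeeping ($(-1)^{i(k-i)}$ from the block swap against the row reversals in $P$, $Q$, and $V$), which you flag but do not carry out, and which the resultant identity independently settles. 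What each approach buys: the paper's argument is elementary and needs no external facts; yours is shorter, places the lemma in a recognizable classical context, and makes the nonvanishing conclusion ($\gcd(f_0,g_0)=1$ since $A\cap B=\emptyset$) conceptually transparent.
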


\begin{proof} Let $A=\{u_1,\ldots,u_{k-i}\}$ and $B=\{v_1,\ldots,v_i\}$.
  Let $f_s=(x-a_{u_{s+1}})\cdots(x-a_{u_{k-i}})=c_{s,0}x^{k-i-s}+\cdots+c_{s,k-i-s}, s\in[k-i]$ and $g_t=(x-a_{v_{t+1}})\cdots(x-a_{v_{i}})=d_{t,0}x^{i-t}+\cdots+d_{t,i-t}, t\in[i]$. Note that $f_s$ and $g_t$ can be obtained from $f_0$ and $g_0$ by deleting some linear factors. Further, $c_{s,0}=d_{t,0}=1$ for all $s\in[k-i]$ and $t\in[i]$,
 $f_{k-i}=g_i=1$.

Let  $M_0=[C(\Q_1)$ $C(\Q_2)]^T=[C(x^{i-1}f_0)$ $\cdots$ $C(f_0)$ $C(x^{k-i-1}g_0)$ $\cdots$ $C(g_0)]^T$. We compute the determinant of $M_0$ by doing elementary row operations. Since each row corresponds to a polynomial, we use polynomial operations to consider row operations.  For convenience, let $R^\ell$, $\ell\in [i]$ denote the  row $C(x^{\ell-1}f_0)$, and $R_\ell$, $\ell\in [k-i]$ denote the row $C(x^{\ell-1}g_0)$.

  Step $1$.  Change $R_{k-i}$ to $c_{1,0}R_{k-i}+\cdots+c_{1,k-i-1}R_1-(d_{1,0}R^i+\cdots+d_{1,i-1}R^1)$. Remember that $c_{1,0}=1$. Then the polynomial corresponding to this row becomes
  \begin{align*}
    & \sum_{j=0}^{k-i-1}c_{1,j}x^{k-i-1-j}g_0-\sum_{j=0}^{i-1}d_{1,j}x^{i-1-j}f_0\\
    =&f_1g_0-f_0g_1=(x-a_{v_1})f_1g_1-(x-a_{u_1})f_1g_1\\
    =&(a_{u_1}-a_{v_1})f_1g_1.\\
  \end{align*}
  Continuing row operations to the new $R_{k-i}$ by subtracting $(a_{u_1}-a_{v_1})(d_{2,0}R^{i-1}+\cdots+d_{2,i-2}R^1)$, the corresponding polynomial becomes
  \begin{align*}
    &f_1g_0-f_0g_1-(a_{u_1}-a_{v_1})f_0g_2\\
    =&(a_{u_1}-a_{v_1})(x-a_{v_2})f_1g_2-(a_{u_1}-a_{v_1})(x-a_{u_1})f_1g_2\\
    =&(a_{u_1}-a_{v_1})(a_{u_1}-a_{v_2})f_1g_2.
  \end{align*}
  Repeat similar row operations to $R_{k-i}$, we obtain a polynomial
  \[f_1g_0-f_0g_1-\sum_{t=1}^{i-1}(\prod_{s=1}^t(a_{u_1}-a_{v_s}))f_0g_{t+1}=f_1\prod_{t=1}^i(a_{u_1}-a_{v_t}).\]
  Hence we change the row $C(x^{k-i-1}g_0)$ to $C(f_1\prod_{t=1}^i(a_{u_1}-a_{v_t}))$ without changing the determinant. We extract the nonzero factors $\prod_{t=1}^i(a_{u_1}-a_{v_t})$ and assume the new row is $C(f_1)$.
  Observe that  for each $j=0,1,\cdots,i-1$, $x^jf_0+x^ja_{u_1}f_1=x^j(x-a_{u_1})f_1+x^ja_{u_1}f_1=x^{j+1}f_1$. Then we can do a sequence of row operations: add $a_{u_1}\times R_{k-i}$ to $R^1$, add  $a_{u_1}\times R^{\ell}$ to $R^{\ell+1}$,  $\ell=1,\ldots,i-1$. Then after these operations, the matrix $M_0$ is changed to $M_1=\prod_{t=1}^i(a_{u_1}-a_{v_t}))[C(x^{i}f_1)\; \cdots\; C(f_1)\; C(x^{k-i-2}g_0) \;\cdots\; C(g_0)]^T$.

Step $2$. Note that the matrix $M_1$ has a similar pattern with the original matrix $M_0$. So we can update our row notations as follows.  Let $R^\ell$, $\ell\in [i+1]$ denote the  row $C(x^{\ell-1}f_1)$, and $R_\ell$, $\ell\in [k-i-1]$ denote the row $C(x^{\ell-1}g_0)$. Under this new notation, we do similar row operations to $R_{k-i-1}$ as Step $1$, to get $M_2=\prod_{t\in[i],j\in[2]}(a_{u_j}-a_{v_t})[C(x^{i+1}f_2)$ $\cdots$ $C(f_2)$ $C(x^{k-i-3}g_0) $ $\cdots$ $ C(g_0)]^T$. We illustrate these operations in the $r$th step.

  Step $r\leq k-i$. Now we have a matrix $M_{r-1}=\prod_{t\in[i],j\in[r-1]}$
  $(a_{u_j}-a_{v_t})[C(x^{i+r-2}f_{r-1})$ $\cdots$ $C(f_{r-1})$ $ C(x^{k-i-r}g_0)$ $\cdots$ $C(g_0)]^T$.  Update the row notations as follows. Let $R^\ell$, $\ell\in [i+r-1]$ denote the  row $C(x^{\ell-1}f_{r-1})$, and $R_\ell$, $\ell\in [k-i-r+1]$ denote the row $C(x^{\ell-1}g_0)$. Consider the row $R_{k-i-r+1}$ which  corresponds to $C(x^{k-i-r}g_0)$,  do the row operations $c_{r,0}R_{k-i-r+1}+\cdots+c_{r,k-i-r}R_1-(d_{1,0}R^i+\cdots+d_{1,i-1}R^1)$ first to get $f_rg_0-f_{r-1}g_1$, then continue row operations to the new $R_{k-i-r+1}$ by subtracting $\sum_{t=1}^{i-1}(\prod_{s=1}^t(a_{u_r}-a_{v_s}))(d_{t+1,0}R^{i-t}+\cdots+d_{t+1,i-t-1}R^1)$ to get
  \begin{equation*}
    f_r g_0-f_{r-1}g_1-\sum_{t=1}^{i-1}(\prod_{s=1}^t(a_{u_r}-a_{v_s}))f_{r-1}g_{t+1}=f_r\prod_{t=1}^i(a_{u_r}-a_{v_t}).
  \end{equation*}
  Hence we update the row $C(x^{k-i-r}g_0)$ to $C(f_r\prod_{t=1}^i(a_{u_r}-a_{v_t}))$ without changing the determinant. Observe that for $0\leq j\leq i+r-2$, $x^{j+1}f_r=x^jf_{r-1}+a_{u_r}x^jf_r$. Then we can do a sequence of the following operations: add $a_{u_r}\times R_{k-i-r+1}$ to $R^1$ and add  $a_{u_r}\times R^{\ell}$ to $R^{\ell+1}$,  $\ell=1,\ldots,i+r-2$. After  the $r$th step, we change the matrix $M_{r-1}$ in Step $(r-1)$ to the following,

    \[\prod_{t\in[i],j\in[r]}(a_{u_j}-a_{v_t})[C(x^{i+r-1}f_r)\;\cdots\;C(f_r)\;C(x^{k-i-r-1}g_0)\;\cdots\;C(g_0)]^T.\]

  After $(k-i)$ steps, we change the initial matrix to the following form,
  \begin{equation*}
    \prod_{t\in[i],j\in[k-i]}(a_{u_j}-a_{v_t})[C(x^{k-1})\; C(x^{k-2})\;\cdots\;C(1)]^T.
  \end{equation*}
  Since all the row operations in above steps do not change the determinant, we complete the proof.
\end{proof}

\bibliographystyle{IEEEtran}
\bibliography{balanced}


\end{document}